\newtheorem{lemma}{Lemma}
\newtheorem{theorem}{Theorem}
\newcommand{\ii}{\mathsf{i}}
\newcommand{\polylog}{\mathrm{polylog}}
\newcommand{\alpl}{\alpha_{\{\gamma_i\gamma_j\}, 2t+1}^{L}}
\newcommand{\alpr}{\alpha_{\{\gamma_i\gamma_j\}, 2t+1}^{R}}
\newcommand{\vabs}[1]{\left\| #1 \right\|}
\newcommand{\supket}[1]{|#1 \rangle\rangle}
\newcommand{\supbra}[1]{\langle\langle #1 |}
\newcommand{\cbra}[1]{\{ #1 \}}
\newcommand{\pbra}[1]{\left( #1 \right)}
\newcommand{\sbra}[1]{\left[ #1 \right]}
\newcommand{\supketbra}[2]{
    \supket{#1 } \supket{#1 } \supbra{#2} \supbra{#2} 
}
\newcommand{\supbraket}[1]{\langle\langle #1 \rangle\rangle}
\newcommand{\tra}[1]{\text{Tr}\left( #1 \right)}
\newcommand{\Ord}[1]{\mathcal{O}\left( #1 \right)}
\newcommand{\floor}[1]{\lfloor #1 \rfloor}
\newcommand{\mean}{\mathop{\mathbb{E}}}
\newcommand{\Var}{\mathop{\mathrm{Var}}}
\newcommand{\Ecal}{\mathcal{E}}
\newcommand{\Rcal}{\mathcal{R}}
\newcommand{\Scal}{\mathcal{S}}
\newcommand{\Tcal}{\mathcal{T}}
\newcommand{\Ucal}{\mathcal{U}}
\newcommand{\Mcal}{\mathcal{M}}
\newcommand{\Ubb}{\mathbb{U}}
\newcommand{\Mbb}{\mathbb{M}}
\newcommand{\Ebb}{\mathbb{E}}
\begin{document}
\title{Adaptive-depth randomized measurement for fermionic observables}
\author{Kaiming Bian}
\affiliation{Shenzhen Institute for Quantum Science and Engineering,
Southern University of Science and Technology, Shenzhen 518055, China}
\affiliation{International Quantum Academy, Shenzhen 518048, China}
\affiliation{Guangdong Provincial Key Laboratory of Quantum Science and Engineering, Southern University of Science and Technology, Shenzhen, 518055, China}
\author{Bujiao Wu}
\email{bujiaowu@gmail.com}
\affiliation{Shenzhen Institute for Quantum Science and Engineering,
Southern University of Science and Technology, Shenzhen 518055, China}
\affiliation{International Quantum Academy, Shenzhen 518048, China}
\affiliation{Guangdong Provincial Key Laboratory of Quantum Science and Engineering, Southern University of Science and Technology, Shenzhen, 518055, China}

\begin{abstract}
Accurate estimation of fermionic observables is essential for advancing quantum physics and chemistry. The fermionic classical shadow (FCS) method offers an efficient framework for estimating these observables without requiring a transformation into a Pauli basis. However, the random matchgate circuits in FCS require polynomial-depth circuits with a brickwork structure, which presents significant challenges for near-term quantum devices with limited computational resources.
To address this limitation, we introduce an adaptive-depth fermionic classical shadow (ADFCS) protocol designed to reduce the circuit depth while maintaining the estimation accuracy and the order of sample complexity. 
Through theoretical analysis and numerical fitting, we establish that the required depth for approximating a fermionic observable $H$ scales as $\max\cbra{d^2_{\text{int}}(H)/\log n, d_{\text{int}}(H)}$ where $d_{\text{int}}$ is the interaction distance of $H$.
We demonstrate the effectiveness of the ADFCS protocol through numerical experiments, which show similar accuracy to the traditional FCS method while requiring significantly fewer resources. Additionally, we apply ADFCS to compute the expectation value of the Kitaev chain Hamiltonian, further validating its performance in practical scenarios. 
Our findings suggest that ADFCS can enable more efficient quantum simulations, reducing circuit depth while preserving the fidelity of quantum state estimations, offering a viable solution for near-term quantum devices.
\end{abstract}
\maketitle

\section{Introduction}

The simulation of strongly correlated fermionic systems is a fundamental application of quantum computing, playing a crucial role in advancing our understanding of quantum materials and chemical processes~\cite{lieb1961two, si1996kosterlitz,aspuru2005simulated,whitfield2011simulation,auerbach2012interacting,cao2019quantum}.
These systems, characterized by complex electron correlations and dynamics, demand precise computational techniques to capture their behavior. 
A key computational challenge is the calculation of expectation values, such as energy levels in fermionic Hamiltonians, which serve as essential descriptors of system properties. 
Quantum algorithms such as variational quantum eigensolver (VQE)\cite{tilly2022variational, cerezo2021variational} and quantum Monte Carlo methods\cite{foulkes2001quantum, carlson2015quantum} provide powerful tools for this task. However, they typically require a significant number of measurements to achieve accurate results.

The classical shadow (CS) algorithm~\cite{huang2020predicting} was introduced to give a classical estimator for a quantum state $\rho$. The CS algorithm is particularly suited for estimating linear properties of $\rho$, such as the expectation values $\cbra{\tra{\rho Q_i}}_{i=1}^m$. The number of measurements required is $\Ord{\max_i \vabs{Q_i}^2_{\text{shadow}} \log m / \varepsilon^2}$, where $\varepsilon$ is the desired estimation error and $\vabs{\cdot}_{\text{shadow}}$ denotes the shadow norm, which depends on the unitary ensemble. 
However, for certain local fermionic observables, such as $\gamma_1\gamma_{2n}$, the shadow norm scales exponentially, rendering the CS algorithm inefficient. 
To address this limitation, fermionic classical shadow (FCS) algorithms~\cite{bonet2020nearly, Zhao21Fermionic, low2022classical, wan2022matchgate, zhao2024group,heyraud2024unified} were developed. These algorithms primarily differ from the CS algorithm by using a Gaussian unitary ensemble rather than the Clifford group for randomness. Both Gaussian unitary and Clifford elements require polynomial-size quantum circuits, which become challenging for near-term quantum devices due to issues like gate noise and limited coherence time~\cite{patel2008optimal, jiang2020optimal, jiang2018quantum, arute2019quantum, Wu21Strong, cao2023generation}. In response to this, several approaches have been proposed to design shallow-depth CS protocols~\cite{bertoni2024shallow, schuster2024random, rozon2024optimal,ma2024construct}. However, for the FCS side,
Zhao et al.~\cite{Zhao21Fermionic} have proved that there does not exist subgroup $G\subseteq \text{Cl}_n \cap \Mbb_n$ which has a better fermionic shadow norm, where Cl$_n$ is the $n$-qubit Clifford group and $\Mbb_n$ is the matchgate group.  
King et al.~\cite{king2024triply} also demonstrated that single-copy measurements require $\Omega\pbra{n^{k/2}/\epsilon^2}$ copies of $\rho$ in any classical shadow based protocol to achieve an $\epsilon$-error estimation for $k$-local Majorana strings. 
An intriguing open question remains:
\begin{center}
  \textit{Is there a shallow-depth FCS algorithm that is efficient for some specific set of fermionic observables and maintains the same sample complexity as the original FCS algorithm?}  
\end{center}

In this work, we identify the minimum circuit depth required by the FCS protocol for $k$-local Majorana strings.
 Beginning with 2-local Majorana strings, we derive an expression that characterizes the dependence of sample complexity on circuit depth. Building on these findings, we investigate the depth required to minimize sample complexity for general $k$-local Majorana strings.

Specifically, we propose an adaptive-depth fermionic classical shadow (ADFCS) algorithm for optimizing the depth of the brickwork random matchgate circuit, as shown in Fig.~\ref{fig:sketch}(a-b). 
Similar to the FCS algorithm, ADFCS applies a random matchgate circuit $U_{Q_{d^\ast}}$ and utilizes classical shadows to estimate the expectation value of an observable $H$. 
By Ref.~\cite{Jiang18Quantum}, constructing a Haar-random matchgate circuit with a brickwork architecture requires $\Omega(n^2)$ depth. In contrast, ADFCS adaptively selects the depth $d^\ast$ based on the interaction distance of the observable and constructs the random matchgate circuit $U_{Q_{d^\ast}}$ accordingly.
The depth $d^\ast$ is linear to $\max\cbra{d^2_{\text{int}}(H)/\log n, d_{\text{int}}(H)}$ where $d_{\text{int}}$ is the interaction distance of the fermionic observable $H$. This relationship is obtained by modeling the variance as a random walk in response to variations in $d$, and selecting an appropriated $d^\ast$ such that the summation of the random walk probability on certain sites is large enough. 

For example, we numerically demonstrate that ADFCS with a depth of $d^\ast = 3$ achieves comparable estimation precision to the FCS method when estimating the expectation value of the Kitaev chain Hamiltonian for $n=10$ qubits. Additionally, we evaluate the ADFCS method by analyzing the estimation error for several $k$-local  Majorana strings with varying interaction distances. The numerical results indicate that adaptively selecting a relatively shallow depth can achieve a sample complexity similar to that of FCS for most $k$-local Majorana strings.

\begin{figure}
    \centering
    \includegraphics[width=0.9\linewidth]{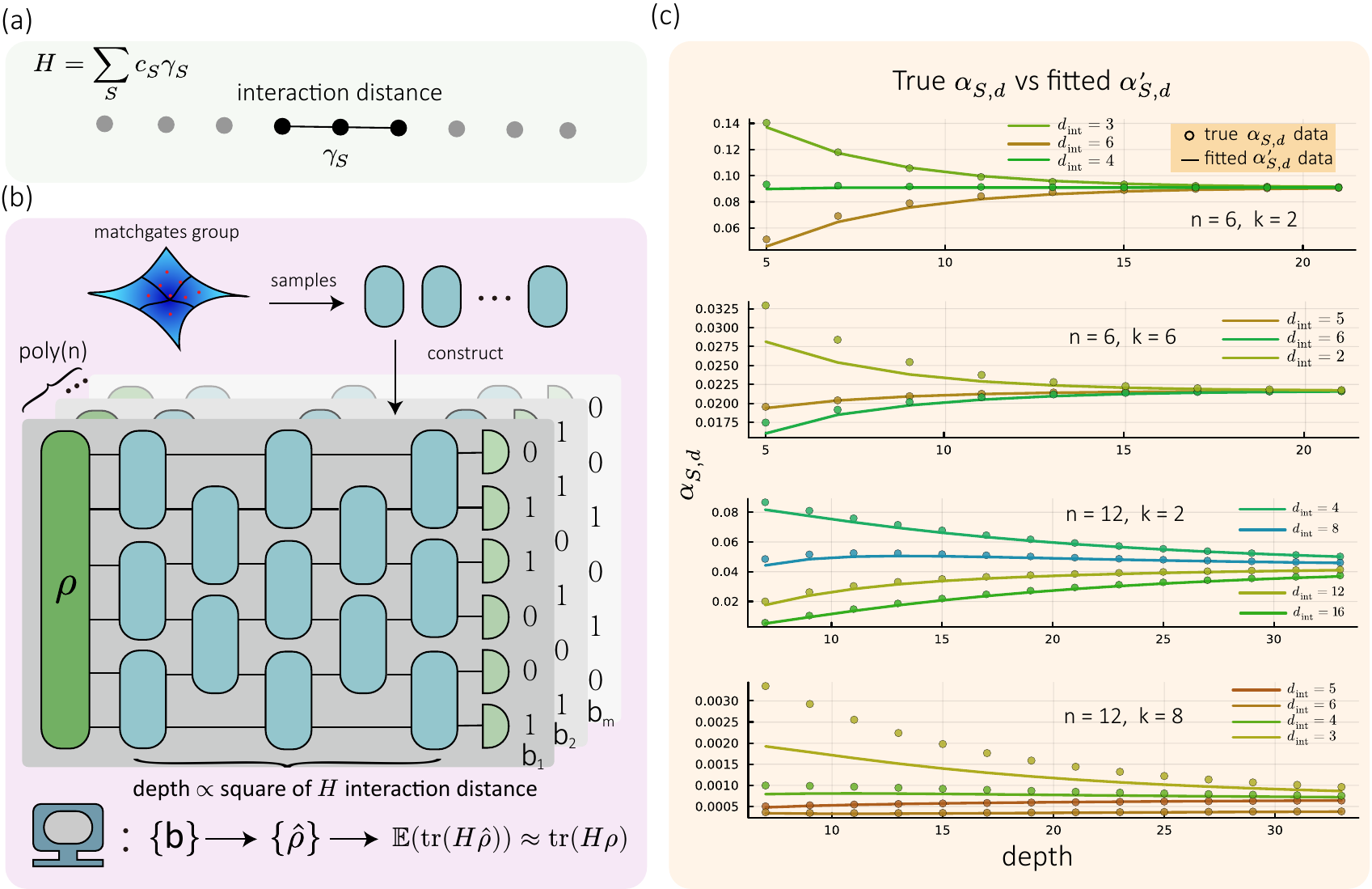}
    \caption{\centering Illustration of Adaptive-Depth Fermionic Classical Shadow (ADFCS).
(a) The interaction distance of the observable $H$. The Majorana operator $\gamma_S$ in $H$ can be interpreted as the interaction between particles. 
The interaction distance $d_\text{int}(H)$ is defined as the maximum interaction distance of each operator $\gamma_S$ in $H$.
(b) Sketch of the ADFCS protocol. Unitary gates are independently sampled from the group of two-qubit matchgates within a brickwork architecture. The measured bitstrings $b$ are processed on a classical computer to generate the classical representation of $\rho$, and $\mathrm{Tr}(H \rho)$ can be estimated using the generated classical estimator $\hat{\rho}$. The depth of the random circuit is determined by the fermionic Hamiltonian $H = \sum c_S \gamma_S$.
 (c) Illustration of $\alpha'_{S,d}$ and $\alpha_{S,d}$ with the increase of depth $d$ for different number of qubits $n$ and Majorana strings $\gamma_S$. 
 } 
    \label{fig:sketch}
\end{figure}

\section{Preliminary}

\noindent\textbf{Notations.} 
 A non-zero linear operator $\supket{A}$ can be vectorized as $\supket{A}:=\frac{A}{\sqrt{\tra{A^\dagger A}}}$ where the normalization ensures that it is properly scaled. 
Specifically, for the Majorana observable
$\gamma_S$, the vectorized form is
$\supket{\gamma_S}:=\frac{\gamma_S}{\sqrt{2^n}}$, where $n$ is the number of qubits. Moreover, the operator $A$ can be represented by the Pauli basis, which is known as the Pauli-transfer matrix (PTM) representation. 
The inner product between the vectorized formation of two operators $A,B$ is defined as $\supbraket{A|B} :=\frac{\tra{A^\dagger B}}{\sqrt{\tra{A^\dagger A} \tra{B^\dagger B}}}$. The operation of superoperator $\Ecal$ can be denoted as $\Ecal\supket{A}:=\frac{\Ecal(A)}{\sqrt{\tra{A^\dagger A}}}$.

\vspace{8pt}
\noindent\textbf{Majorana operators and matchgate circuits.}
The Majorana operators can be defined as 
\begin{equation}
    \gamma_{2 j-1}:=a_j+a_j^{\dagger}, \quad \gamma_{2 j}:=-\ii\left(a_j-a_j^{\dagger}\right),
    \label{eq: standard Majorana operators}
\end{equation}
where $a_j,a_j^\dagger$ are the annihilation and creation operators for the $j$-th site respectively. 
Observables in fermionic systems can be expressed as the linear combination of the products of Majorana operators~\cite{hackl2021bosonic}
\begin{equation}
    \gamma_S := \gamma_{i_1}\gamma_{i_2}\cdots\gamma_{i_{|S|}},
    \label{eq: gammaS}
\end{equation}
where the cardinality $|S|=\#\cbra{i \mid i\in S}$,  $i_1 < {i_2}< \cdots < i_{|S|}$, and $S = \{i_1 , {i_2}, \cdots , i_{|S|}\}$. Especially, we set $\gamma_\emptyset$ be the identity operator $\mathbbm{1}$ when $S$ is the empty set $\emptyset$. For simplicity, we refer to a Majorana string $\gamma_S$ with $|S| = k$ as a \emph{$k$-local Majorana string} throughout this manuscript.
Let $d_{\text{int}}(S)$ denote the interaction distance of the Majorana string $\gamma_S$, defined as
\begin{equation}
\label{eq: dint def}
d_{\text{int}}(S) := \max\{\,|i_{j+1} - i_{j}| \mid i_j \in S,\, j \in [2n-1]\,\}.
\end{equation}
The interaction distance of a fermionic operator $H = \sum_{S\in\Scal} \alpha_{S} \gamma_{S}$ is defined as $d_{\text{int}}(H) = \max_{S\in \Scal} d_{\text{int}}(S)$.
A Majorana operator can be represented in Pauli formation as $\gamma_{2 j-1}=\left(\prod_{i=1}^{j-1} Z_i\right) X_j, \ \ \gamma_{2 j}=\left(\prod_{i=1}^{j-1} Z_i\right) Y_j$ for any $j\in[n]$ via Jordan-Wigner transformation.

Except for the standard Majorana operators in Eq.~\eqref{eq: standard Majorana operators}, there are different bases to express the fermionic Hamiltonian. 
Different bases can be obtained by the orthogonal transformation of the standard Majorana operators $\tilde{\gamma}_\mu=\sum_{\nu=1}^{2 n} Q_{\mu \nu} \gamma_\nu,$
where $Q$ is a real orthogonal matrix, $Q \in O(2n)$. 
The different sets of bases $\{\tilde{\gamma}_\mu\}$ are also called the Majorana operators because they satisfy the same anti-commutation relation $\{\tilde{\gamma}_\nu, \tilde{\gamma}_\mu\} = 2\delta_{\mu\nu}$. 
The orthogonal transformations of Majorana operators can be represented by unitaries
\begin{equation}
 U_Q^{\dagger} \gamma_\mu U_Q=\sum_{\nu=1}^{2 n} Q_{\mu \nu} \gamma_\nu = \tilde{\gamma}_\mu,
 \label{eq:matchgate_transform}
\end{equation}
where the unitaries $U_Q$ are called fermionic Gaussian unitaries. Followed by Eq.~\eqref{eq:matchgate_transform}, the fermionic Gaussian unitary preserves the cardinality $|S|$ of $\gamma_S$, which can be expressed as $ U_Q^{\dagger} \gamma_S U_Q=\sum_{S' \in {[2n]\choose |S|}} \det\pbra{Q|_{S,S'}}\gamma_{S'}.$ The set ${[2n] \choose |S|}$ represents the collection of all subsets of $\{1, 2, \dots, 2n\}$ that contain exactly $|S|$ elements, and the matrix $Q|_{S,S'}$ is obtained from the matrix $Q$ by selecting the rows indexed by the set $S$ and the columns indexed by the set $S'$.

Matchgate circuits are the quantum circuit representation of fermionic Gaussian unitaries~\cite{wan2022matchgate}. 
The collection of all $n$-qubit matchgate circuits constitutes the matchgate group $\Mbb_n$. This group is generated by rotations of the form $\exp\pbra{\ii \theta X_{\mu} X_{\mu+1}}, \exp\pbra{\ii \theta Z_{\nu}}$ and $X_\eta$, where $\theta$ is a real parameter, $\nu,\eta \in [n]$, $\mu\in[n-1]$. 
Notably, operations within the matchgate group can be efficiently simulated on a classical computer. This efficiency arises because the action of a matchgate circuit $U_{Q}$ corresponds to Givens rotations associated with an orthogonal matrix $Q$, facilitating polynomial-time classical simulation~\cite{valiant2002quantum,jiang2018quantum}.

We present the twirling of the matchgate group in the following lemma, as it is utilized in the proof of our main results.
\begin{lemma}[The three moments of uniform distribution in $\Mbb_n$, Ref.~\cite{wan2022matchgate}]
\label{lemma: 1}
The $k$-moment twirling $\Ecal^{(j)}$ is defined by
\begin{equation}
    \Ecal^{(j)}(\cdot):=\int_{U_Q \in \Mbb_n} \dd U_Q ~ U_Q^{\otimes j} (\cdot ) U_Q ^{\dagger \otimes j}.
\end{equation}
The first three moments are
\begin{equation}
    \begin{aligned}
        \Ecal^{(1)} &= \supket{\mathbbm{1}}\supbra{\mathbbm{1}} \\
        \Ecal^{(2)} &= \sum_{k=0}^{2n} {2n\choose k}^{-1}\sum_{\substack{S,S'\subseteq [2n] \\ |S| = |S'| = k}} \supket{\gamma_S}\supket{\gamma_S}\supbra{\gamma_{S'}}\supbra{\gamma_{S'}} \\
        \Ecal^{(3)} &= \sum_{
\substack{ k_1,k_2,k_3 \geq 0 \\ k_1 + k_2+k_3\leq 2n }}\supket{\Rcal}\supbra{\Rcal} ,
    \end{aligned}
\end{equation}
where 
\begin{equation}
    \supket{\Rcal}={2n \choose k_1,k_2,k_3, 2n - k_1-k_2-k_3}^{-1/2} \sum_{\substack{A,B,C \text{disjoint}\\
|A|=k_1,|B|=k_2, |C| = k_3
}
}\supket{\gamma_{A} \gamma_{B}}\supket{\gamma_{B} \gamma_{C}} \supket{\gamma_{C} \gamma_{A}}.
\end{equation}

\label{lem:FCS_shadow_channel}
\end{lemma}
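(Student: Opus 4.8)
The plan is to exploit the fact that conjugation by a matchgate unitary $U_Q$ acts on operator space as a \emph{real orthogonal representation} of $O(2n)$: from $U_Q^{\dagger}\gamma_S U_Q = \sum_{S'}\det\pbra{Q|_{S,S'}}\gamma_{S'}$, the linear map $\supket{\gamma_S}\mapsto U_Q^{\dagger}\,\gamma_S\,U_Q$ has matrix entries $\det\pbra{Q|_{S,S'}}$ in the orthonormal basis $\cbra{\supket{\gamma_S}}$, and the $k$-local sector $\mathrm{span}\cbra{\supket{\gamma_S}:|S|=k}$ carries precisely the $k$-th exterior power $\Lambda^k(\mathbb{R}^{2n})$ of the defining representation. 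First I would argue that the matchgate group surjects onto the full $O(2n)$ (the generators $\exp\pbra{\ii\theta X_\mu X_{\mu+1}}$ and $\exp\pbra{\ii\theta Z_\nu}$ realize $SO(2n)$ rotations while $X_\eta$ supplies a reflection of determinant $-1$), so that $\int_{U_Q\in\Mbb_n}\dd U_Q$ reduces to the Haar integral over $O(2n)$. The central observation is then that $\Ecal^{(j)}$ becomes the Haar average over $O(2n)$ of the $j$-fold tensor power of $R(Q)=\bigoplus_k\Lambda^k(Q)$, which is exactly the orthogonal projector onto the $O(2n)$-invariant subspace of $\bigoplus_{k_1,\dots,k_j}\Lambda^{k_1}\otimes\cdots\otimes\Lambda^{k_j}$; writing any orthonormal basis of that invariant subspace as $\cbra{\supket{v_\alpha}}$ gives $\Ecal^{(j)}=\sum_\alpha\supket{v_\alpha}\supbra{v_\alpha}$, so the whole lemma reduces to exhibiting such a basis.

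For $\Ecal^{(1)}$ and $\Ecal^{(2)}$ I would apply Schur's lemma together with the irreducibility of each $\Lambda^k$ under $O(2n)$. For $j=1$ the only trivial summand is $\Lambda^0$ — the top power $\Lambda^{2n}$ transforms by $\det Q=\pm1$ and is nontrivial once reflections are included — which gives $\Ecal^{(1)}=\supket{\mathbbm{1}}\supbra{\mathbbm{1}}$. For $j=2$, the space $\mathrm{Hom}_{O(2n)}\pbra{\Lambda^{k_1},\Lambda^{k_2}}$ is one-dimensional when $k_1=k_2$ and zero otherwise (note $\Lambda^k$ and $\Lambda^{2n-k}$ are non-isomorphic under the full orthogonal group, differing by the determinant character), so for each $k$ there is a single invariant, the canonical diagonal contraction $\sum_{|S|=k}\supket{\gamma_S}\supket{\gamma_S}$, whose squared norm equals the number of terms $\binom{2n}{k}$; normalizing and summing the rank-one projectors over $k$ reproduces the stated $\Ecal^{(2)}$.

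The heart of the argument is $\Ecal^{(3)}$, where I would invoke the first fundamental theorem of invariant theory for $O(2n)$: every invariant of a tensor product of copies of the defining representation is a linear combination of complete pairwise contractions with the invariant form $\delta_{\mu\nu}$ (the Levi-Civita tensor is excluded, being only an $SO(2n)$-invariant, consistent with reflections acting nontrivially). Restricting each factor to its antisymmetric block $\Lambda^{m}$ forces every surviving contraction to pair indices in \emph{distinct} blocks — a chord internal to one antisymmetric factor is killed by the antisymmetrizer — so for three factors the only admissible pattern is the ``triangle'': a set $B$ shared between factors $1$ and $2$, a set $C$ between $2$ and $3$, and a set $A$ between $3$ and $1$. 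This yields exactly the invariant $\supket{\gamma_A\gamma_B}\supket{\gamma_B\gamma_C}\supket{\gamma_C\gamma_A}$ summed over disjoint $A,B,C$, with block degrees $\pbra{|A|+|B|,\,|B|+|C|,\,|C|+|A|}$; since $(k_1,k_2,k_3)=(|A|,|B|,|C|)$ are recovered uniquely from the three degrees, distinct triples land in orthogonal degree sectors, and the invariant is unique up to scale in each sector. I would finish by verifying that the terms $\supket{\gamma_A\gamma_B}\supket{\gamma_B\gamma_C}\supket{\gamma_C\gamma_A}$ are mutually orthogonal unit vectors — two triples agree in all three factors only if $A\cup B=A'\cup B'$, $B\cup C=B'\cup C'$, $C\cup A=C'\cup A'$, which with disjointness forces $A=A'$, $B=B'$, $C=C'$ — so that the squared norm of the unnormalized sum is the count of disjoint triples $\binom{2n}{k_1,k_2,k_3,\,2n-k_1-k_2-k_3}$, justifying the $(\cdots)^{-1/2}$ normalization in $\supket{\Rcal}$ and the projector form of $\Ecal^{(3)}$.

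The main obstacle I anticipate is the invariant-theory step for the third moment: rigorously ruling out all contraction patterns other than the triangle (in particular showing that internal chords vanish after antisymmetrization and that no $\epsilon$-type invariant contributes under the full orthogonal group), and then confirming that the triangle invariants \emph{span} — not merely lie inside — the invariant subspace, so that the projector equals their sum with no missing components. The surjectivity-onto-$O(2n)$ reduction and the orthonormality bookkeeping are comparatively routine, but they must be handled with care so that the Haar average is genuinely over the full orthogonal group (ensuring $\Lambda^{2n}$ is correctly suppressed) and so that the normalization constants emerge as the stated binomial and multinomial factors.
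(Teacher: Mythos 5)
The paper itself contains no proof of this lemma; it is imported verbatim from Ref.~\cite{wan2022matchgate}, so the only meaningful comparison is against that reference's argument. Your proposal is correct and follows essentially the same representation-theoretic route as the cited source: reduce the twirl over $\Mbb_n$ to the Haar average over $O(2n)$ acting as $\bigoplus_k \Lambda^k(\mathbb{R}^{2n})$ on the Majorana basis, identify each moment with the orthogonal projector onto the invariant subspace, and pin down the invariants via Schur's lemma for $j=1,2$ (using irreducibility and pairwise inequivalence of the $\Lambda^k$ under the full orthogonal group, including the $\Lambda^k\not\cong\Lambda^{2n-k}$ point you correctly flag) and the first fundamental theorem of invariant theory for $O(2n)$ for $j=3$, where only the cross-block ``triangle'' contractions survive antisymmetrization.
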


 \vspace{8pt}
\noindent\textbf{(Fermionic) classical shadow protocol.}

The classical shadow (CS) process involves randomly applying a unitary $U$ from some unitary ensemble $\Ubb$ (such as the Clifford group) on a quantum state $\rho$, followed by measurement in computational bases, yielding outcomes $\supket{b}$ that form a classical description of $\rho$, known as ``\emph{classical shadow}'', denoted as $\supket{\hat{\rho}} = \mathcal{M}^{-1}_{\text{CS}}(\Ucal^{-1}\supket{b})$, where $\Ucal$ is the superoperator of $U$, and $\mathcal{M}_{\text{CS}}:=\Ebb_{\Ucal\in\Ubb}\sbra{\Ucal^{\dagger}\sum_{b} \supket{b}\supbra{b} \Ucal}$. By choosing $\Ubb$ as Clifford group Cl$_n$, the classical shadow channel can be simplified to $\Mcal_{\text{CS}} = \Pi_0 + \frac{1}{2^n + 1}\Pi_1$, where $\Pi_0$ is the projector onto the identity subspace and $\Pi_1$ is the projector onto the subspace spanned by all Pauli bases except identity~\cite{huang2020predicting, chen2021robust}. This simple representation allows efficient construction of the classical shadow $\hat{\rho}$.

In the framework of FCS, the shadow channel $\mathcal{M}_{\text{FCS}}:=\Ebb_{Q\in O(2n)}\sbra{\Ucal_{Q}^{\dagger}\sum_{b} \supket{b}\supbra{b} \Ucal_Q}$
is not invertible across the entire Hilbert space.  Its invertibility is constrained to a subspace spanned by even operators, denoted as
$ \Gamma_{\text{even}} := \mathrm{span}\{ \gamma_S \mid |S| = 2j, j=0,1,2,\cdots, n\}.$
For set $S$ with odd cardinality, the channel $\mathcal{M}_{\text{FCS}}(\gamma_S) = 0$, rendering these components inaccessible.

As a result, FCS is limited to recovering expectation values of observables that reside within the $\Gamma_{\text{even}}$ subspace. However, this restriction is sufficient for capturing physical observables, as they are typically even operators due to the conservation of fermionic parity~\cite{turner2011topological}. Unless otherwise specified, all discussions and calculations are assumed to take place within the $\Gamma_{\text{even}}$.

\section{Adaptive-depth fermionic classical shadows}

Here, we consider the ADFCS by selecting the unitary ensemble as $d$-depth local matchgates with brickwork architecture, as shown in Fig.~\ref{fig:sketch}(b), which is widely applied in superconducting quantum devices~\cite{arute2019quantum}. 
It has been shown that certain observables cannot be addressed by replacing the global Cl$_n \cap \Mbb_n$ group with shallow-depth local matchgate circuits~\cite{Zhao21Fermionic}. However, it remains open when global elements are necessary and how to efficiently generate the expectation values of a given set of fermionic observables using the shallowest depth matchgate circuits while minimizing the number of samplings. 
Here, we address this question by exploring the relationship between the number of required samples and the depth of the matchgate circuit for some specific set of fermionic observables using the brickwork architecture. 
{Our approach introduces a $d$-depth ADFCS channel to estimate the expectation value of a fermionic observable $H$. The estimation of the expectation value is divided into the estimation of a set $\{\gamma_S\}$. The optimal depth order is determined by the maximum interaction distance of the set $\{\gamma_S\}$. }

\subsection{Shadow channel analysis in ADFCS}
\label{sec: shadow channel}
For simplicity, we utilize $U_{Q_d}$ to denote a $d$-depth matchgate circuit with brickwork structure, and $\Ubb_{Q_d}$ to denote the set of all $d$-depth matchgate circuit throughout this manuscript. We denote $\Mcal_d$ as the $d$-depth ADFCS shadow channel, which maps a quantum state $\rho$ to
\begin{equation}
    \mathcal{M}_d(\rho) := \mathop{\mathbb{E}}\limits_{U\in\Ubb_{Q_d},b\in\cbra{0,1}^n} \left[ \bra{b}U_{Q_d} 
  \rho U_{Q_d}^\dagger \ket{b}  U_{Q_d}^\dagger \ket{b} \bra{b}U_{Q_d}\right].
\end{equation}
Due to the expectation property of the subset $\Ubb_{Q_d}$, the Majorana operator remains invariant under the action of $\Mcal_{d}$ up to a scaling factor $\alpha_{S,d}$, i.e.,
\begin{equation}
\mathcal{M}_d(\gamma_S) = \alpha_{S,d} \gamma_S,
\label{eq: lemma1 eigen}
\end{equation}
where
$\alpha_{S,d} := \int \dd U_{Q_d} \abs{\bra{\bm 0}U_{Q_d} \gamma_S U_{Q_d}^\dagger\ket{\bm 0}}^2.$
The proof is shown in Appendix~\ref{appendix: 1}.

First, we show that the ADFCS can unbiasedly rebuild the state $\rho$, thereby providing an unbiased estimation of the expectation value. Directly follows from the linearity of the shadow channel $\mathcal{M}_d$, we have
\begin{equation}
    \mathop{\mathbb{E}}\limits_{U_{Q_d}, b } ( \mathcal{M}^{-1}_{d}(U_{Q_d}^\dagger \ket{b} \bra{b}U_{Q_d})) = \mathcal{M}^{-1}_{d}(\mean(U_{Q_d}^\dagger \ket{b} \bra{b}U_{Q_d}) ) = \rho
    \label{eq: unbias}
\end{equation}
when $\alpha_{S,d}$ is non-zero.

Eq.~\eqref{eq: unbias} demonstrates that the ADFCS protocol provides an unbiased estimation. We now analyze the required sample complexity to achieve high precision.  
For any observable $\gamma_S$ with respect to the average of quantum states $\rho$, the variance of the ADFCS estimator $v = \tra{\hat{\rho} \gamma_S}$ is bounded as follows:
\begin{equation}
\Var[v] \leq \frac{1}{\alpha_{S,d}}.
\label{eq: variance and alpha}
\end{equation}
The detailed derivation of Eq.~\eqref{eq: variance and alpha} is provided in Appendix~\ref{appendix: bound variance}. Using this result and applying Chebyshev's inequality, the estimation error $\abs{v - \tra{\rho \gamma_S}}$ can be bouned to $\varepsilon$ using $\mathcal{O}(\frac{1}{\alpha_{S,d}\varepsilon^2})$ copies of the quantum states with high success probability. Numerical results demonstrate that the variance is stable across various quantum states. In the following sections, we will present the calculation for $\alpha_{S,d}$, along with its {lower bound.}

\subsection{Tensor network approach to variance bound}

By Eq.~\eqref{eq: variance and alpha}, bounding the variance requires calculating $\alpha_{S,d}$. Here, we demonstrate that $\alpha_{S,d}$ can be represented as a tensor network and computed through contraction.
Using the PTM representation, $\alpha_{S,d}$ can be expressed as
\begin{equation}
\label{eq: alpha tensor def}
\alpha_{S,d} = \supbra{\bm 0,\bm 0}  \int \dd \Ucal_{Q_d} \mathcal{U}_{Q_d}^{\otimes 2} \supket{\gamma_S, \gamma_S},
\end{equation}
where $\mathcal{U}_{Q_d}$ is the superoperator of ${U}_{Q_d}$, $\mathcal{U}_{Q_d} \supket{\gamma_S} = U_{Q_d} \gamma_S U_{Q_d}^\dagger$.
Consequently, the integration over $\Ucal_{Q_d}$ can be broken down into a product of a series of integrations over the independent 2-qubit matchgates. 
By Lemma \ref{lem:FCS_shadow_channel}, we can represent the integration over each 2-qubit matchgate as a fourth-order tensor $\Tcal$ with indices $\sigma_1,\sigma_2, \sigma_3, \sigma_4$. The explicit form of this tensor is given by
\begin{equation}
\label{eq: T}
    \Tcal^{\sigma_1\sigma_2}_{~~\sigma_3\sigma_4} := \supbra{\sigma_1, \sigma_2}\supbra{\sigma_1, \sigma_2}\int_{Q\sim O(4)} \dd \Ucal_{Q} \mathcal{U}_{Q}^{\otimes 2} \supket{\sigma_3, \sigma_4}\supket{\sigma_3, \sigma_4},
\end{equation}
where the indices $\sigma_1, \sigma_2, \sigma_3, \sigma_4$ of tensor $\Tcal$ represent the Pauli operators. The tensor $\Tcal$ is the integration of one blue random gate in Fig.~\ref{fig:sketch}. 
By connecting these tensors $\Tcal$ in the same brickwork architecture as $ U_{Q_d} $, the integration over the superoperator in Eq.~\eqref{eq: alpha tensor def} can be transformed into a tensor network. The transformation details are shown in Appendix \ref{appendix 3}.

\begin{figure}
    \centering
    \includegraphics[width=0.8 \linewidth]{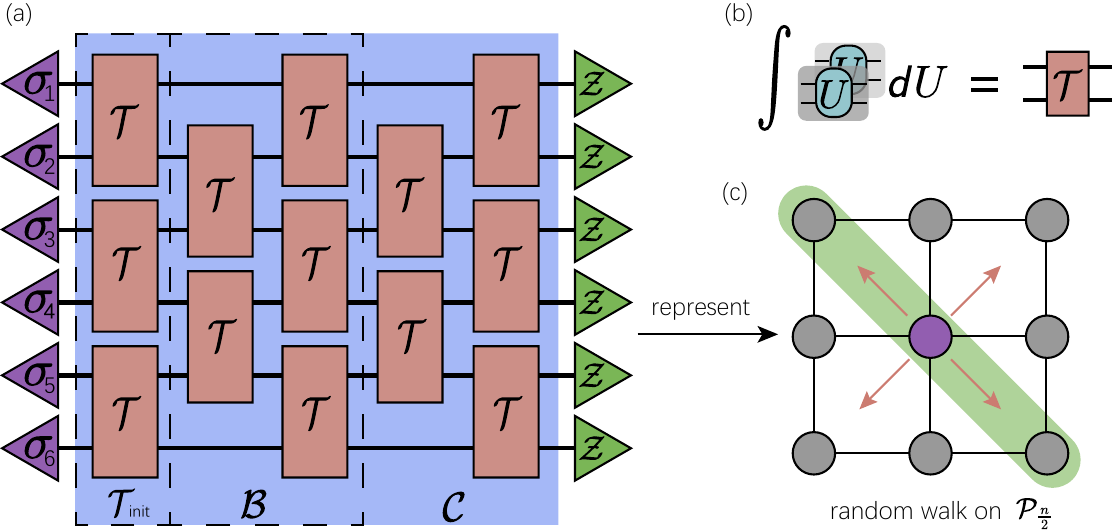}
    \caption{\centering Representation of the tensor contraction of $\mathcal{C}$ as a random walk on the polynomial space $\mathcal{P}_{\frac{n}{2}}$. (a) The tensor network used to compute $\alpha_{S,d}$. The purple triangles represent Pauli operators corresponding to $\gamma_S$ via the Jordan-Wigner transformation, while the green triangles, denoted as $\mathcal{Z}$, correspond to the supervector $\frac{1}{2} (\supket{\mathbbm{1}, \mathbbm{1}} + \supket{Z, Z})$. (b) The definition of $\Tcal$ in Eq.~\eqref{eq: T}. (c) The contraction of $\mathcal{C}$ onto the polynomial space $\mathcal{P}_{\frac{n}{2}}$. Transitioning the state $\supket{\gamma_S, \gamma_S}$ and contracting it with $\mathcal{Z}$ is equivalent to summing the probabilities over the diagonal sites in a 2D random walk.
    }
    \label{fig: figure4}
\end{figure}

The value of $\alpha_{S,d}$ can be calculated by contracting the corresponding brickwork tensor network, which is shown in Fig.~\ref{fig: figure4}(a). Specifically, the brickwork tensor network, denoted as $\mathcal{C}$, is contracted with $\supbra{\bm 0,\bm 0}$ and $\supket{\gamma_S, \gamma_S}$ in the PTM representation to calculate $\alpha_{S,d}$
\begin{equation}
    \alpha_{S,d} = \supbra{\bm 0,\bm 0} \mathcal{C}  \supket{\gamma_S, \gamma_S}.
    \label{eq: alpha Twhole}
\end{equation}
Following the proof of Lemma 5 in Ref.~\cite{bertoni2024shallow}, 
the tensor network $\mathcal{C}$ can be represented by a matrix product operator with bond dimension $2^{\mathcal{O}(d)}$. It suggests that calculating $\alpha_{S,d}$ by tensor network contraction is efficient only when the depth $d$ is shallow.

\subsection{Adaptive depth for two-local  Majorana strings}

Here, we start by determining the optimal depth order for the random matchgate circuit $U_{Q_d}$, such that the estimation of a set of 2-local Majorana strings $\cbra{\gamma_S|\mid \abs{S}=2}$ requires the same sample complexity compared to the full matchgate group $\Mbb_n$. 
Notably, for specific sets of Majorana strings $\cbra{\gamma_S}$, the optimized number of measurements can be significantly improved compared to FCS when using a smaller $d$. 
As an instance, suppose we need to estimate the expectation of $\gamma_{i}\gamma_{i+1}$. The random measurement in FCS actually shuffles the $\gamma_{i}\gamma_{i+1}$ to any 
$\gamma_{j} \gamma_k$ uniformly. However, a matchgate circuit with $d=2$ will only slightly perturb the   $\gamma_{i}\gamma_{i+1}$ to the nearby operators $\{\gamma_{j}\gamma_{k} \mid i-2\leq j \leq i+2, i-1\leq k \leq i+3\}$. A smaller sample space allows it to approach its theoretical mean with fewer samples compared to the FCS method.

We will investigate how the sample complexity of the quantum state changes as the depth of the matchgate circuits with brickwork structures increases. To determine the required depth of a random matchgate circuit for efficient measurement of a specific set of $2$-local Majorana strings $\cbra{\gamma_S}$, we derive an expression for $\alpha_{S,d}$ and identify the optimal depth $d^\ast$ such that $\alpha_{S,d^\ast} = \mathcal{O}(\frac{1}{\mathrm{poly}(n)})$. 
By restricting the input state of the whole tensor network $\mathcal{C}$ to the subspace $\Gamma_2$, we can simplify the tensor contraction into a polynomial space. This isomorphism introduces more refined structures and allows us to analyze the system more effectively.

We will focus on random matchgate circuits characterized by an even number of qubits and an odd circuit depth. For the random matchgate circuits with this structure, the contraction of $\mathcal{C}$ can be represented within the space of quadratic polynomials with $ \frac{n}{2} $ variables. For the circuits with even circuit depth, the contraction can be analyzed using a similar method. Denote the quadratic polynomial representation space as $\mathcal{P}_{\frac{n}{2}}$, the contraction of $\mathcal{C}$  can be described by the random walk with a simple pattern in space $\mathcal{P}_{\frac{n}{2}}$.

To represent this random walk, we use the transition process $\mathcal{B}$ to denote a pair of successive gate layers starting from an odd layer (see Fig.~\ref{fig: figure4}(a)),
\begin{equation}
    \mathcal{B} \supket{\gamma_S, \gamma_S} = \sum_{S':|S'| = |S|} \mathrm{Prob}(\gamma_{S'}|\gamma_S) \supket{\gamma_{S'}, \gamma_{S'}},
    \label{eq: T random walk}
\end{equation}
The contraction of $\mathcal{C}$ can be understood as the repeated application of $ \mathcal{B} $ to the state $ \supket{\gamma_S, \gamma_S} $. 
As previously mentioned, we focus on the circuit with odd depth. Then, the entire tensor network can be expressed as
\begin{equation}
    \mathcal{C} = \mathcal{B}^{\lfloor d/2 \rfloor} \, \mathcal{T}_{\text{init}},
\end{equation}
where $\Tcal_\text{init} = \Tcal^{\otimes \frac{n}{2}}$ (and the circuit can be expressed as $\mathcal{C} = \mathcal{B}^{ d/2 }$ if the depth is an even number). Each $ \mathcal{B} $ corresponds to a single transition step in a random walk. 
Therefore, the entire tensor network $\mathcal{C}$ represents a random walk {starting from the site corresponds to $ \gamma_S $ and taking the transition} $ \lfloor \frac{d}{2}\rfloor $ steps, which is shown in Fig.~\ref{fig: figure4}(c). 
The details of mapping the tensor contraction of $\alpha_{S,d}$ to the random walk in $\mathcal{P}_{\frac{n}{2}}$ is shown in Appendix \ref{sec: Mapping the action of tensors to random walk}.

We observe that the random walk follows the pattern of a symmetry lazy random walk (SLRW)~\cite{giuggioli2020exact, lawler2010random} almost everywhere within the {polynomial space $\mathcal{P}_{\frac{n}{2}}$.}
SLRW is a stochastic process in which the walker has a probability of staying in the same position at each step and equal probabilities of moving to neighboring positions. 
Firstly, we consider the random walk transition $\mathcal{B}$ across the subspace $\Gamma_2$ adheres to the SLRW
\begin{equation}
    \mathcal{L}_{\Gamma_2}\supket{\gamma_S, \gamma_S} = \sum_{S':|S'| = |S|} \mathrm{Prob}_L(\gamma_{S'}|\gamma_S) \supket{\gamma_{S'}, \gamma_{S'}},
\end{equation}
where $\mathcal{L}_{\Gamma_2}$ is the transition operator in SLRW, and the concrete form of $\mathcal{L}_{\Gamma_2}$ is shown in Appendix \ref{sec: mapping the action of tensors to random walk}, Eq.~\eqref{eq: appendix L gamma def}. We replace the transition operator $\mathcal{B}$ with the SLRW transition operator $\mathcal{L}_{\Gamma_2}$ in $\mathcal{C}$ and use the transition $\mathcal{L}_{\Gamma_2}$ to calculate the tensor network contraction
\begin{equation}
    \alpha_{S,d}^L = \supbra{\bm 0,\bm 0} \mathcal{L}_{\Gamma_2}^{\lfloor \frac{d}{2}\rfloor} \Tcal_\text{init} \supket{\gamma_S, \gamma_S}.
    \label{eq: alphal Twhole}
\end{equation}
As we have shown in Table \ref{table: transition table of teto} of Appendix \ref{sec: Mapping the action of tensors to random walk}, the transition result of $\mathcal{L}_{\Gamma_2}$ equals the results $\mathcal{B}$ for most states in $\Gamma_2$, this indicates that the variable $\alpha_{S,d}^L$ likely constitutes the dominant contribution to $\alpha_{S,d}$. Additionally, numerical fitting further confirms that $\alpha_{S,d}$ is primarily governed by $\alpha_{S,d}^L$, as shown in Fig. \ref{fig:sketch} (b).

There is a known analytical propagation equation for the SLRW~\cite{giuggioli2020exact}. By applying this propagation equation and making certain approximations, the $ \alpha_{S,d} $ can ultimately be expressed as a Poisson summation
\begin{equation}
    \alpha_{S,d}^L = \frac{1}{3\sqrt{\pi (d-1)}} \sum_{k = -\infty}^\infty \left(e^{-\frac{(kn+a)^2}{d-1}} + e^{-\frac{(kn+b)^2}{d-1}}  \right) + \mathcal{O}(e^{-\pi^2 d}).
    \label{eq: analytical form of alpha}
\end{equation}
In Eq.~\eqref{eq: analytical form of alpha}, we denote the Majorana observable $\gamma_S$ with $\abs{S} = 2$
as $\gamma_i \gamma_j$, the variable $a$, $b$ are defined by $a:= |\lfloor\frac{i-1}{4}\rfloor-\lfloor\frac{j-1}{4}\rfloor|$ , and $b := \lfloor\frac{i-1}{4}\rfloor+\lfloor\frac{j-1}{4}\rfloor+1$. We leave the detailed proof of Eq.~\eqref{eq: analytical form of alpha} 
to Appendix \ref{sec: estimate the order of alpl}.

We aim to estimate the order of $\alpha_{S,d}$ by $\alpha_{S,d}^L$, which can be achieved by bounding the ratio $\alpha_{S,d} /\alpha_{S,d}^L$ to a constant value. 
To bound the ratio, we introduce an auxiliary function
\begin{equation}
\label{eq: def del}
    \Delta(S,S',d) = \supbra{\gamma_{S' }, \gamma_{S'}} 
\mathcal{B}^{\lfloor \frac{d}{2}\rfloor} \Tcal_\text{init} -  \mathcal{L}_{\Gamma_2}^{\lfloor \frac{d}{2}\rfloor} \Tcal_\text{init} \supket{\gamma_S, \gamma_S},
\end{equation}
where $S'\in\Gamma_2$. 
Recall that the operator $\gamma_{S'}$ is defined as $\gamma_{S'} = \gamma_{u} \gamma_{v}$ where $S' = \cbra{u,v}$. 
In numerical experiments, we observe that the term $\Delta(S, S',d)$ is negative if and only if the element $u$ is close to $v$ for $S'=\cbra{u,v}$ and any $S,d$, as shown in Fig.~\ref{fig: appendix assumption} of Appendix \ref{appendix relation between alphaL and alpha}. Based on this finding, we show that the ratio $\alpha_{S,d} /\alpha_{S,d}^L$ can be bounded by a constant value. The concrete description for bounding the ratio $\alpha_{S,d} /\alpha_{S,d}^L$ is provided in Appendix \ref{appendix relation between alphaL and alpha}.

The relation between $\alpha^L_{S,d}$ and $\alpha_{S,d}$, named $\alpha_{S,d}\geq c\alpha^L_{S,d}$ for positive constant $c$, indicated that the order of $\alpha_{S,d}^L$ consistent with the order of $\alpha_{S,d}$. Here, we propose the adaptive depth $d^\ast$ of a random matchgate circuit by analyzing the order of $\alpha^L_{S,d}$. 
By Eq.~\eqref{eq: analytical form of alpha}, for $ a = \mathcal{O}(\mathrm{log}(n)) $ and the measurement depth $ d^\ast = \mathcal{O} (\log n)$ can yield a polynomially small $ \alpha_{S,d^*} = \Omega\left(\frac{1}{n} \right)$ up to a log factor, we postpone the proof in Appendix~\ref{appendix efficiency}. 
{When $a=\omega\pbra{\log(n)}$, the remainder error term $\mathcal{O}(e^{-\pi^2 d})$ in Eq.~\eqref{eq: analytical form of alpha} scales polynomially small. }
Thus, we adaptively select the depth $d$ based on the first term of $\alpha_{S,d}^L$, which is decided by the interaction distance $d_{\text{int}} = |i-j|$,
\begin{equation}
    d^\ast = \Theta\left( \max\left\{\frac{d_{\text{int}}^2}{\log(n)}, ~d_{\text{int}}  \right\} \right).
\end{equation}
We show that the $\alpha_{S,d}^L$ scales as $\mathcal{O}(1/\text{poly}(n))$ when the depth is selected as $d^\ast$ in Appendix \ref{sec: estimate the order of alpl} Lemma \ref{corollary: 1}. Combined with the relationship between $\alpha^L_{S,d}$ and $\alpha_{S,d}$, it suggests that the sample complexity scales polynomially with the adaptively selected depth $d^\ast$.

\subsection{Adaptive depth for general $k$-local Majorana strings}

Here, we focus on extending adaptive depth to general $k$-local Majorana strings for constant $k$. Fermionic observables generated by $k$-local Majorana strings for constant $k$ play a critical role in various quantum models across physics and chemistry. These operators are essential in describing physical observables such as interaction terms, electron correlation, and pairing mechanisms. They appear in models like quantum chemistry Hamiltonians, generalized Hubbard models, spin chain models, etc~\cite{lieb1961two, si1996kosterlitz,aspuru2005simulated,whitfield2011simulation,auerbach2012interacting,cao2019quantum}.

The required circuit depth depends on the interaction distance $ d_{\text{int}}$ of $ \gamma_S $.
Recall Eq.~\eqref{eq: gammaS} that the elements in $S$ are a monotonically increasing sequence. Thus, the $ d_{\text{int}} $ can be interpreted as the maximum distance between adjacent elements in $S$. 
We prove that a random measurement circuit $\mathcal{U}_{Q_d}$ with $d=\Theta(\log n)$ is sufficient to make $ \alpha_{S,d} $ as large as $ \mathcal{O}(1/\text{poly}(n)) $ if $d_{\text{int}} = \mathcal{O}(\log n)$ and $k$ is a constant. 
From Eq.~\eqref{eq: alpha Twhole} and Eq.~\eqref{eq: T random walk}, we conclude that the value of $ \alpha_{S,d} $ is the sum of probabilities associated with specific sites after the random walk. When the distance of the near neighborhood $i_{2j-1}, i_{2j} \in S$ logarithmically small $i_{2j}- i_{2j-1} = \mathcal{O}(\log n)$, a logarithmically transition steps can traverse from {corresponding} $ \supket{\gamma_S, \gamma_S} $ to specific sites. Due to Eq.~\eqref{eq: T random walk}, each step introduces a constant probability factor. Thus, the probability of reaching specific sites is $ \mathcal{O}(\frac{1}{n^{k/2}}) $ up to a log factor after logarithmic steps random walk. We put the proof details in Appendix \ref{appendix efficiency}.

\begin{figure}
    \centering
    \includegraphics[width=  \linewidth]{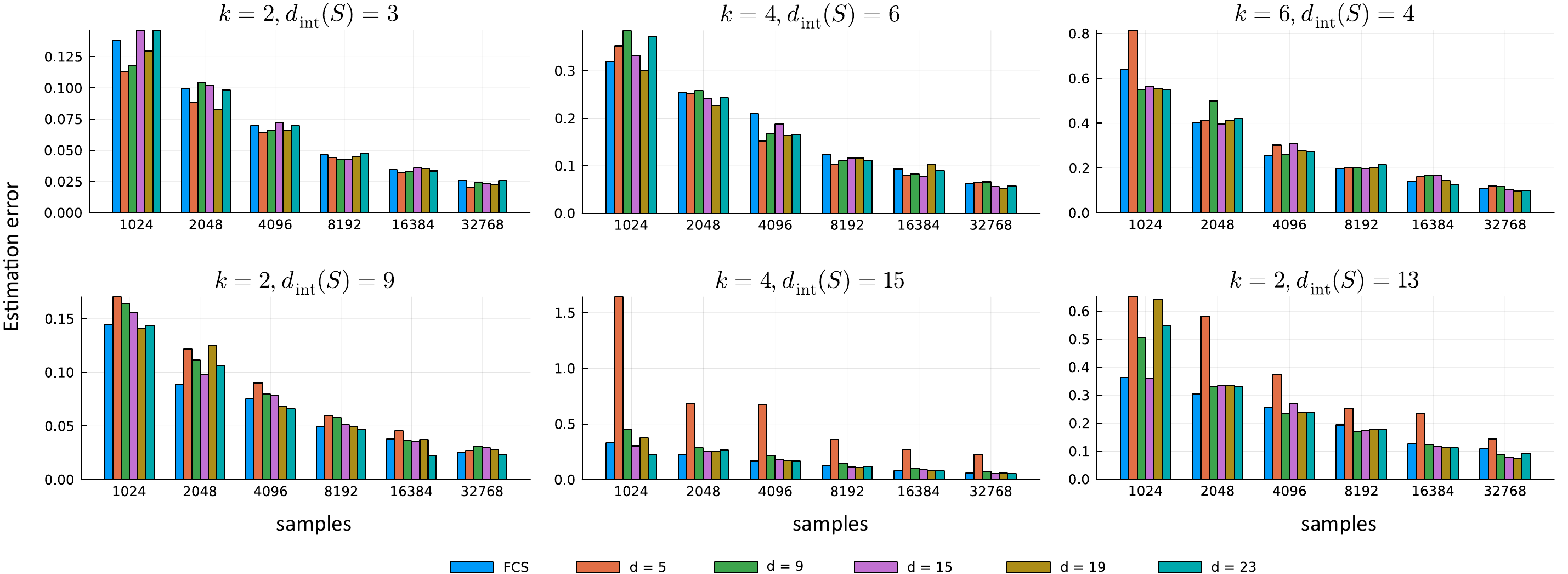}
    \caption{\centering
Estimation errors with increasing numbers of measurements for FCS and ADFCS protocols at different depths $d$ and interaction distance $d_\text{int}$. Each experiment uses a randomly generated $10$-qubit quantum state. 
 }
    \label{fig: figure2}
\end{figure}

We propose a general formula for approximating $ \alpha_{S,d} $. The proposed formula considers the random walk for computing  $ \alpha_{S,d}$  as the several independent $k=2$ random walks. 
Any pair $i,j\in S$  are treated as an independent random walk starting from {corresponding} $\gamma_i\gamma_j$. 
After the overall random walk, the probability of being at a specific site equals the product of the probabilities of each 2-local $ \gamma_{S'} $ at that site. As a result, the sum of probabilities at these specific sites can be expressed as a production $\prod\alpha_{\{i,j\},d}$. 
Therefore, we consider the overall random walk as the superposition of all such production
\begin{equation}
    \alpha_{S,d}' = \frac{1}{(k-1)!!} \left(\frac{3n}{2}\right)^\frac{k}{2} \frac{\binom{n}{k/2}}{\binom{2n}{k}}\sum_{\Lambda\in\mathrm{Par}(S)} \prod_{(i,j)\in \Lambda}\alpha_{\{i,j\},d}^L,
    \label{eq: fit alpha k}
\end{equation}
where $\mathrm{Par}(S)$ is the set whose elements $\Lambda$ are sets representing pairwise partitions of $S$. For example, let $S = \{1,2,3,4\}$, then $\Lambda = \{(1,2), (3,4)\}$ is a pairwise partitions of $S$.

Numerical experiments indicate that the proposed expression $ \alpha_{S,d}' $ closely approximates $ \alpha_{S,d} $ for depths $d > 2\log(n)$. When the interaction distance $d_\text{int}(S)=O(\log n)$, the optimal depth for random matchgate circuits is $O(\log n)$, making them shallow and well-suited for near-term quantum devices. Given this compatibility, our analysis focuses on bounding $\alpha_{S,d}$ with $ \alpha_{S,d}' $ in the regime where $d = \omega(\log n)$.
Therefore, it can be concluded that $ \alpha_{S,d}' $ provides a good approximation of $ \alpha_{S,d} $. Fig.~\ref{fig:sketch}(b) shows the results of comparison between $\alpha_{S,d}'$ and $\alpha_{S,d}$.
The curve of $\alpha_{S,d}'$ is very close to the $\alpha_{S,d}$ when $d> 2\log(n)$, which suggests that the $\alpha_{S,d}'$ is a good approximation of $\alpha_{S,d}$ as well.

We adaptively select the depth based on the interaction distance $ d_{\text{int}} $ of a fermionic Hamiltonian $H$, where $d_{\text{int}}(H)$ is the maximum value of $d_{\text{int}}(S)$ for terms $\gamma_S$ in $H$. Similar to the case with $|S|=2$, we adaptively select the depth $d^\ast$ as 
\begin{equation}
    d^\ast = \Theta\left(\max\left\{ \frac{d_{\text{int}}(S)^2}{\log(n)} , d_{\text{int}}(S)\right\}\right).
    \label{eq: adaptive depth}
\end{equation}
Such $ d^\ast $ ensures that {existing} a term in Eq.~\eqref{eq: fit alpha k} scales polynomially, thereby guaranteeing the overall $\alpha'_{S,d}$ also scales polynomially.

\section{Numerical experiment}

\begin{figure}
    \centering
    \includegraphics[width=\linewidth]{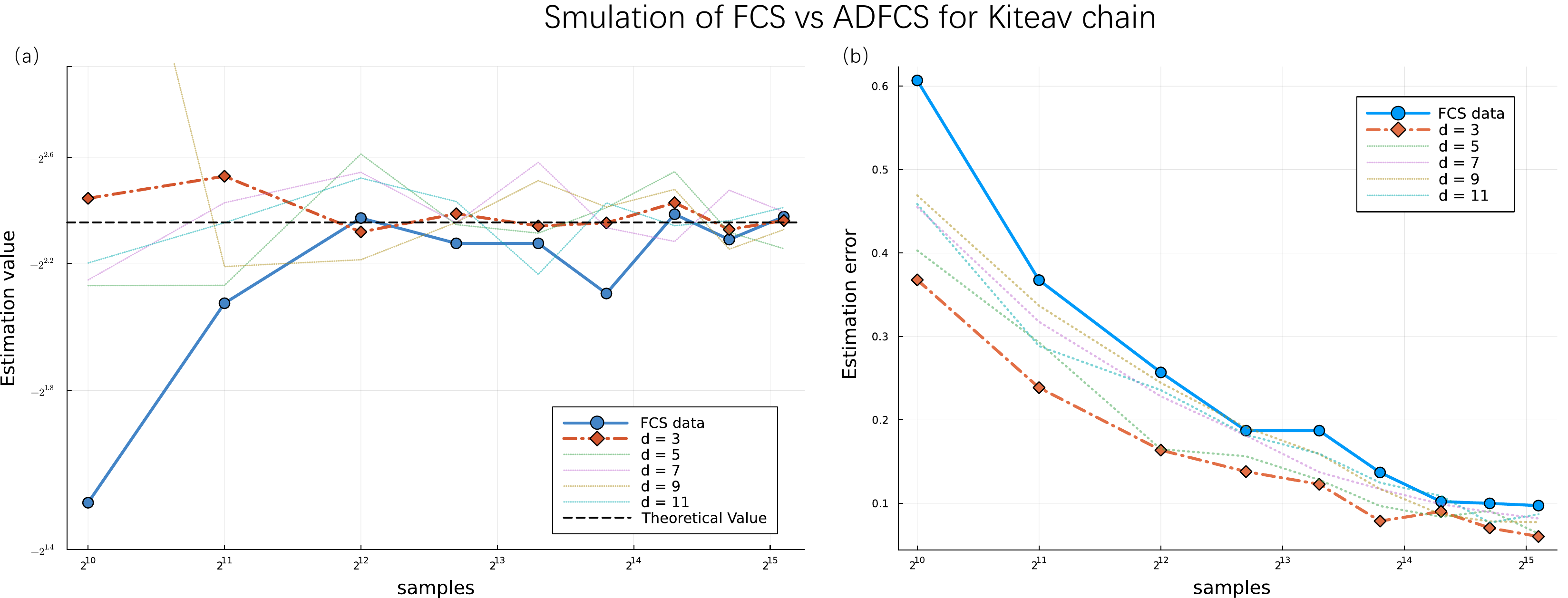}
    \caption{\centering
Application of ADFCS to the Kitaev chain Hamiltonian $H_K$. 
(a) The estimation generated from FCS and ADFCS with varying circuit depths.
(b) The error between the estimation and the expectation value. 
}
    \label{fig: figure3}
\end{figure}

Here, we provide numerical evidence demonstrating the efficiency of the ADFCS protocol. First, we compare the performance of ADFCS with FCS across various circuit depths and observables $\gamma_S$. For each experiment, a random $10$-qubit quantum state $\rho$ is chosen, followed by a $d$-depth random matchgate circuit with the brickwork structure and measurement on the computational basis. We enumerate depth $d$ in $\cbra{5,9,15,19,23}$. The estimation error with increasing numbers of measurements for ADFCS with different depth $d$ and FCS is shown in Fig.~\ref{fig: figure2}.
We estimate the error for the estimations of $\tra{\gamma_S \rho}$ for $k=\abs{S} \in\cbra{2,4,6}$ as the number of samples increases. Each subfigure corresponds to a different set $S$, chosen based on varying $d_\text{int}(S)$.
 The error is estimated as $\sqrt{\frac{1}{R}\sum_{i = 1}^R (\tra{\gamma_S\hat{\rho}_i} - \tra{\gamma_S\rho})^2} $, where $\hat{\rho}_i$ is generated from the $i$-th measurement, and $R$ is set to $64$.

The required depth of the random matchgate circuit for a specific observable $\gamma_S$ depends on the value of $\alpha_{S,d}$, thereby relying on the interaction distance $d_\text{int}(S)$.
Shallower circuits tend to perform better when the interaction distance is small, whereas larger interaction distances necessitate deeper circuits. For instance, when $d_\text{int}(S)\in \{3,4\} $, random matchgate circuits with depth $d = 5$ provide good estimations of the expectation value. However, for larger interaction distances, such as $d_\text{int}(S) = 13$ or $15$, a depth of $d = 5$ is insufficient to ensure accurate results. In such cases, circuits with a depth of $d \geq 9$ are necessary to achieve reliable estimations. By selecting an appropriate circuit depth, accurate estimations can be obtained with a relatively small number of samples, ensuring both efficiency and precision.
In each subfigure of Fig.~\ref{fig: figure2}, we use different random states for various choices of $S$. The numerical results demonstrate that ADFCS is robust to random states, despite the definition of $\alpha_{S,d}$ being based on the average over states $\rho$.

To further test the performance of ADFCS, we apply the method of estimating the expectation of the Kiteav chain Hamiltonian. The Kitaev chain Hamiltonian is a one-dimensional model that describes a topological superconductor featuring Majorana fermions at its ends~\cite{kitaev2001unpaired}. The Kiteav chain Hamiltonian is given by
\begin{equation}
    H_K = -\frac{\ii \mu}{2} \sum_{j=1}^n \gamma_{2j-1}\gamma_{2j} + \frac{\ii}{2}\sum_{j=1}^{n-1} \left( \omega_+ \gamma_{2j-1}\gamma_{2j+2} - \omega_- \gamma_{2j}\gamma_{2j+1} \right),
\end{equation}
where $\mu$ is the chemical potential, $\omega_{ \pm}=|\Delta| \pm t$, both $\Delta$ and $\Tcal$ are certain energy gaps. We initialize a random state for 10-qubit, set these parameters as $\mu = 2 $, $\Delta = 1$, and $t=0.4$. The results are shown in Fig.~\ref{fig: figure3}. Since the maximum interaction distance of $H_K$ is $d_\text{int}(H_K) = 3$, a random circuit with a depth of $d = 3$ yields an acceptable value of $\alpha_{S,d}$. This results in comparable or even improved estimation errors compared to the FCS protocol, while requiring significantly less circuit depth.

\section{Conclusion}
Estimating the expectation value of fermionic observables is a fundamental task in quantum physics and chemistry. The fermionic classical shadow method provides an innovative approach to address this problem without requiring the conversion of observables into the Pauli-basis representation. This significantly reduces the number of quantum states for certain local fermionic observables. However, the approach requires polynomial-depth quantum circuits when sampling matchgate elements from $\Mbb_n$ without considering the information from the fermionic observables.
This imposes a considerable challenge for near-term quantum devices.

We propose an adaptive depth fermionic classical shadow (ADFCS) protocol to reduce the heavy circuit depth associated with random matchgate circuits. Using a tensor network approach, we calculate the variance for any $d$-depth ADFCS protocol, enabling us to determine the required depth $d$ to ensure efficient sampling. 

Furthermore, we theoretically analyze the explicit relationship between the variance and the depth $d$ of the random matchgate circuit. Specifically, under certain assumptions, we find that the optimal depth is linear to $\max\left\{{d_{\text{int}}(H)^2}/{\log n}, d_{\text{int}}(H)\right\}$ where $d_{\text{int}}(H)$ is the interaction distance for the fermionic observable $H$. Numerical fitting results are also provided to support our theoretical findings.

We validate the correctness of our algorithm by evaluating the expectation values of several Majorana strings with respect to randomly generated quantum states $\rho$. The numerical results align with and support our theoretical findings. Additionally, we applied our algorithm to estimate the expectation value of a Kitaev chain Hamiltonian. Comparative numerical analysis demonstrates that our ADFCS algorithm achieves performance comparable to the FCS algorithm while requiring a significantly shallower circuit depth.

\section*{Acknowledgements}
 We would like to thank Dax Enshan Koh, Christian Bertoni, {and Jiapei Zhuang} for their helpful discussions.
This work was funded by the National Natural Science Foundation of China under Grant 12405014.

\bibliographystyle{unsrt}

\appendix

\section{Introduction to Pauli-transfer matrix representation}
\label{appendix: intro_superoperator}

By employing the Jordan-Wigner transformation, we can express $\gamma_S$ in terms of Pauli operators. The Pauli-transfer matrix (PTM) representation uses these Pauli operators $\cbra{\sigma_i}_{i=1}^{2^n}$ as a basis, where $\sigma_i =P_i/\sqrt{2^n}$ is the normalized Pauli operator, allowing us to denote the non-zero linear operators $A$ as a $4^n$-length vector $\supket{A}$ with the $j$-th entry being
\begin{equation}
A_j = \text{Tr}(A \sigma_j).
\end{equation}
In this representation, a channel of operators can be expressed as a matrix $\Lambda$ with the $(a,b)$-th element being
\begin{equation}
\Lambda(a,b) = \text{Tr}(\sigma_b \Lambda(\sigma_a)).
\end{equation}
Using this expression, we can represent the second moment of the random matchgate as a fourth-order tensor, which is shown in Appendix~\ref{appendix 3}.

\section{Majorana operators diagonalize shadow channel}
\label{appendix: 1}

This section shows the details about diagonalizing shadow channel $\mathcal{M}_d$. 
Notice that the computational basis $\ket{b}$ are Gaussian states
\begin{equation}
    |b\rangle\langle b|=\prod_{j=1}^n \frac{1}{2}\left(I-\ii(-1)^{b_j} \gamma_{2 j-1} \gamma_{2 j}\right).
\end{equation}
Thus, any basis $\ketbra{b}{b}$ can be prepared from the state $\ketbra{b}{b}$ by a Gaussian unitary $U_Q\in \Mbb_n$, which indicates that Pauli-$X$ is in the matchgate group. 
The state $\ket{b}= \ket{b_1b_2\cdots b_n}$ can be denoted as $\prod X_{i}^{b_i} \ket{0}$. 
Since Pauli-$X$ is in the matchgate group, we can absorb the $\prod X_{b_i}$ into the matchgates $U_{Q_d}$ in the expression of shadow channel
\begin{align}
\mathcal{M}_d(\gamma_S) &= \int \dd U_{Q_d}\sum_{b\in\{0,1\}^n}\bra{b}U_{Q_d}\gamma_S U_{Q_d}^\dagger \ket{b} U_{Q_d}^\dagger \ket{b}\bra{b}U_{Q_d}\\
&= 2^n\int \dd U_{Q_d} \bra{0}U_{Q_d}\gamma_S U_{Q_d}^\dagger \ket{0} U_{Q_d}^\dagger \ket{0}\bra{0}U_{Q_d}.
\end{align}

If $S'$ is not equal to $ S$ and the depth $d$ is not equal to zero, 
then there exists a permutation matrix $Q_d'$ in one depth matchgate circuit such that
\begin{align}
\quad [\gamma_{S}, U_{Q_d'}] = 0, \quad \{\gamma_{S'}, U_{Q'_d}\} = 0. 
\end{align}
The orthogonal matrix $Q_d'$ can be constructed in one depth because such $U_{Q_d}'$ can be a Pauli operator, which involves one depth matchgate circuit. It implies
\begin{align}
\frac{1}{2^n}\tra{\gamma_{S'} \mathcal{M}_d(\gamma_S)} &= 
\int \dd U_{Q_d} \bra{0} U_{Q_d} \gamma_S U_{Q_d}^\dagger \ket{0}\bra{0}U_Q \gamma_{S'} U_{Q_d}^\dagger\ket{0}\\
&=\int \dd U_{Q_d} \bra{0}U_{Q_d} U_{Q'_d} \gamma_S U_{Q'_d}^\dagger U_{Q_d}^\dagger\ket{0}\bra{0}U_{Q_d} U_{Q'_d}\gamma_{S'} U_{Q'_d}^\dagger U_{Q_d}^\dagger\ket{0}\\
&= -\int \dd U_{Q_d} \bra{0}U_{Q_d} \gamma_S U_{Q_d}^\dagger\ket{0}\bra{0}U_{Q_d} \gamma_{S'} U_{Q_d}^\dagger\ket{0}.
\end{align}
The result shows that $\tra{\gamma_{S'} \mathcal{M}_d(\gamma_S)} = 0$ when $S'$ is not equal to $ S$, thereby $\mathcal{M}_d (\gamma_S) = \alpha_{s,d} \gamma_S$.

\section{Bound the variance of ADFCS estimator with $\alpha_{S,d}$}
\label{appendix: bound variance}

The variance of the random variable $ v = \operatorname{Tr}(\hat{\rho} \gamma_S) $ is analyzed by deriving an upper bound on its value. First, the variance is bounded by the expected squared magnitude of $ v $, i.e., $ \Var[v] \leq \mathbb{E}[|v|^2] $. The expectation is then expressed as an integral over the unitary group $ U_{Q_d} $ and averaged over the state $ \rho $
\begin{equation}
    \mathbb{E}[|v|^2] = \int \mathrm{d}U_{Q_d} \, \mathbb{E}_{\rho}\left[\sum_b \langle b|U_{Q_d} \rho U_{Q_d}^\dagger|b \rangle \left|\langle b|U_{Q_d} \mathcal{M}_d^{-1}(\gamma_S) U_{Q_d}^\dagger|b \rangle\right|^2\right].
\end{equation}

Now, we start to simplify the expression.
Firstly, we average out the $\rho$, which leads to
\begin{equation}
    \mathbb{E}[|v|^2] = 2^{-n} \int \mathrm{d}U_{Q_d} \sum_b \left|\langle b|U_{Q_d} \mathcal{M}_d^{-1}(\gamma_S) U_{Q_d}^\dagger|b \rangle\right|^2.
\end{equation}

Next, we denote $\ket{b}$ as $\prod_i X_i^{b_i}\ket{0}$ and absorb the Pauli X operators into the matchgate operators.                                          
And then uses Eq.~\eqref{eq: lemma1 eigen}
\begin{align}
    \mathbb{E}[|v|^2] =& \frac{1}{|\alpha_{S,d}|^2} \int \langle 0|U_{Q_d} \gamma_S U_{Q_d}^\dagger|0 \rangle \langle 0|U_{Q_d} \gamma_S^\dagger U_{Q_d}^\dagger|0 \rangle \\
 &=\frac{1}{\alpha_{S,d}}.
\end{align}
Finally we have $\Var[v] \leq \frac{1}{\alpha_{S,d}}$. 
Notice that this inequality can be easily taken equally when $\tra{\rho \gamma_S} = 0$.

\section{Details of simplifying $\alpha_{S,d}$ to tensore network}
\label{appendix 3}
 This section shows the details of representing $\alpha_{S,d}$ by the tensor network contraction in PTM representation. According to Lemma \ref{lemma: The net phase of alpha is 1}, the $\alpha_{S,d}$ can be expressed by 
 \begin{equation}
     \alpha_{S,d} = 2^{2n}  \supbra{\bm 0,\bm 0}  \int\dd U_{Q_d} \mathcal{U}_{Q_d}^{\otimes 2} \supket{P_S,P_S}, 
 \end{equation}
where $P_S$ is the Pauli string corresponding to $\gamma_S$ via Jordan-Wigner transformation. 

Since each two-qubit random matchgate is independently sampled, the integral $\int \dd U_Q \mathcal{U}_Q^{\otimes 2}$ can be calculated by independently integrating each 2-qubit matchgate. The result of the integral of the $2$-qubit matchgates is given by Lemma \ref{lemma: 1},
\begin{equation}
\label{eq: integral of the 2 qubits matchgates}
\begin{aligned}
    \int_{U_Q\sim \Mbb_2} \dd U_Q\mathcal{U}_Q^{\otimes 2} =& \supketbra{\gamma_\emptyset}{\gamma_\emptyset}
    + \frac{1}{4} \sum_{i,j} \supketbra{\gamma_i}{\gamma_j}\\
    &+ \frac{1}{6}\sum_{\substack{i_1\neq i_2 \\ j_1\neq j_2}}\supketbra{\gamma_{i_1}\gamma_{i_2}}{\gamma_{j_1}\gamma_{j_2}} \\
    &+ \frac{1}{4}
    \sum_{\substack{i_1\neq i_2, j_1 \neq j_2 \\ 
        i_1\neq i_3, j_1 \neq j_3 \\
        i_2\neq i_3, j_2 \neq j_3} 
    }
    \supketbra{\gamma_{i_1}\gamma_{i_2}\gamma_{i_3}}{\gamma_{j_1}\gamma_{j_2}\gamma_{j_3}}\\
    &+ \supketbra{\gamma_1\gamma_2\gamma_3\gamma_4}{\gamma_1\gamma_2\gamma_3\gamma_4},
\end{aligned}
\end{equation}
where $i$, $j$ are index ranged from $1$ to $4$. Following the definition Eq.~\eqref{eq: T}, each element of tensor $\Tcal$ can be calculated by Eq.~\eqref{eq: integral of the 2 qubits matchgates}.
We show these concrete elements in Table \ref{table: the 16x16 matrix of T}.  The tensor $\Tcal$ presents the average effect of a random two-qubit matchgate.

Here, we represented $\supket{\bm 0,\bm 0}$ in PTM to complete the calculation $\alpha_{S,d} = \supbra{\bm 0,\bm 0} \mathcal{C}\supket{\gamma_S, \gamma_S}$.
Notice the matrix identity 
\begin{equation}
    \ketbra{\bm 0} = \frac{1}{2^n} \sum_{\Lambda\subseteq [n]} \prod_{i \in \Lambda} Z_i,
\end{equation}
where $Z_i$ denotes the application of the Pauli $Z$ operator to the $i$-th qubit.
Especially, when $\Lambda = \emptyset$, let $\prod_{i \in \Lambda} Z_i = \mathbbm{1}_n$. Then, the super vector of $\supket{\bm 0, \bm 0}$ can be expressed as
\begin{equation}
\label{eq: zz anonymous 6}
    \supket{\bm 0, \bm 0} = \frac{1}{2^{2n}} \sum_{\Lambda, \Lambda' \subseteq [n]}  \supket{\prod_{i \in \Lambda }  Z_i, \prod_{j \in \Lambda'} Z_j}.
\end{equation}
Eq.~\eqref{eq: zz anonymous 6} express the supervector $\supket{\bm 0, \bm 0}$ in the PTM representation. Finally, we write the $\mathcal{C}$, $\supket{\bm 0, \bm 0}$, and $\supket{\gamma_S, \gamma_S}$ in Pauli basis.
Thus, the $\alpha_{S,d}$ calculation can be expressed as the tensor network contraction, as illustrated in Fig.~\ref{fig: figure4}.

\begin{table}
\centering
\caption{Values of tensor $\Tcal$. The head of columns represents the input of $\Tcal$ while the head of rows represents the output of $\Tcal$. For example, the values in row `XY' and column `YZ' represent the value $\Tcal^{XY}_{~~YZ}$.  The blank space of the table stands for $0$. For example,  $\Tcal^{XY}_{~~YZ}$ is equal to $\frac{1}{6}$.   }
\begin{tabular}{|c|c|c|c|c|c|c|c|c|c|c|c|c|c|c|c|c|}
	\hline
	$\Tcal$  &II &IX &IY &IZ &XI &XX &XY &XZ &YI &YX &YY &YZ &ZI &ZX &ZY &ZZ \\ \hline
    II & 1 &   &   &   &   &   &   &   &   &   &   &   &   &   &   &   \\ \hline
    IX &   &1/4&1/4&   &   &   &   &1/4&   &   &   &1/4&   &   &   &   \\ \hline
    IY &   &1/4&1/4&   &   &   &   &1/4&   &   &   &1/4&   &   &   &   \\ \hline
    IZ &   &   &   &1/6&   &1/6&1/6&   &   &1/6&1/6&   &1/6&   &   &   \\ \hline
    XI &   &   &   &   &1/4&   &   &   &1/4&   &   &   &   &1/4&1/4&   \\ \hline
    XX &   &   &   &1/6&   &1/6&1/6&   &   &1/6&1/6&   &1/6&   &   &   \\ \hline
    XY &   &   &   &1/6&   &1/6&1/6&   &   &1/6&1/6&   &1/6&   &   &   \\ \hline
    XZ &   &1/4&1/4&   &   &   &   &1/4&   &   &   &1/4&   &   &   &   \\ \hline
    YI &   &   &   &   &1/4&   &   &   &1/4&   &   &   &   &1/4&1/4&   \\ \hline
    YX &   &   &   &1/6&   &1/6&1/6&   &   &1/6&1/6&   &1/6&   &   &   \\ \hline
    YY &   &   &   &1/6&   &1/6&1/6&   &   &1/6&1/6&   &1/6&   &   &   \\ \hline
    YZ &   &1/4&1/4&   &   &   &   &1/4&   &   &   &1/4&   &   &   &   \\ \hline
    ZI &   &   &   &1/6&   &1/6&1/6&   &   &1/6&1/6&   &1/6&   &   &   \\ \hline
    ZX &   &   &   &   &1/4&   &   &   &1/4&   &   &   &   &1/4&1/4&   \\ \hline
    ZY &   &   &   &   &1/4&   &   &   &1/4&   &   &   &   &1/4&1/4&   \\ \hline
    ZZ &   &   &   &   &   &   &   &   &   &   &   &   &   &   &   & 1 \\ \hline
\end{tabular}
\label{table: the 16x16 matrix of T}
\end{table}

\begin{lemma}
\label{lemma: The net phase of alpha is 1}
The quantity $\alpha_{S,d}$ satisfies the identity $\alpha_{S,d} = 2^{2n} \supbra{\bm{0}, \bm{0}} \int \dd U_{Q_d} \mathcal{U}_{Q_d}^{\otimes 2} \supket{P_S, P_S}$, where $P_S$ denotes the Pauli string associated with $\gamma_S$ through the Jordan-Wigner transformation.
\end{lemma}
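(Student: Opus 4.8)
The plan is to derive the identity directly from the defining expression $\alpha_{S,d} = \int \dd U_{Q_d}\,\abs{\bra{\bm 0}U_{Q_d}\gamma_S U_{Q_d}^\dagger\ket{\bm 0}}^2$ in two independent moves: first replace the Majorana string $\gamma_S$ by the underlying Hermitian Pauli string $P_S$ (this is the content behind the ``net phase is $1$'' phrasing), and second reconcile the normalization so that the compensating prefactor comes out as $2^{2n}$. The starting observation is a factorization obtained from the Jordan--Wigner transformation: since each generator $\gamma_{2j-1}=(\prod_{i<j}Z_i)X_j$ and $\gamma_{2j}=(\prod_{i<j}Z_i)Y_j$ is a Hermitian Pauli string, the product $\gamma_S=\gamma_{i_1}\cdots\gamma_{i_{|S|}}$ equals $\phi_S P_S$, where $P_S$ is a genuine coefficient-$+1$ (hence Hermitian) Pauli string and $\phi_S\in\cbra{\pm1,\pm\ii}$ is a global phase with $\abs{\phi_S}=1$.

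The conceptual heart is the phase cancellation. Writing the modulus squared as a product of the conjugate and direct matrix elements, $\abs{\bra{\bm 0}U_{Q_d}\gamma_S U_{Q_d}^\dagger\ket{\bm 0}}^2 = \bra{\bm 0}U_{Q_d}\gamma_S^\dagger U_{Q_d}^\dagger\ket{\bm 0}\,\bra{\bm 0}U_{Q_d}\gamma_S U_{Q_d}^\dagger\ket{\bm 0}$, and substituting $\gamma_S^\dagger=\bar\phi_S P_S$ and $\gamma_S=\phi_S P_S$, the two phases combine to $\bar\phi_S\phi_S=\abs{\phi_S}^2=1$. Hence $\alpha_{S,d}=\int\dd U_{Q_d}\,\pbra{\bra{\bm 0}U_{Q_d}P_S U_{Q_d}^\dagger\ket{\bm 0}}^2$. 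This is exactly what legitimizes the real-valued tensor-network picture: because $P_S$ is Hermitian and matchgate conjugation preserves Hermiticity, $\bra{\bm 0}U_{Q_d}P_S U_{Q_d}^\dagger\ket{\bm 0}$ is real, so the modulus square coincides with the ordinary square and can be read off a second-moment contraction against $\supket{P_S,P_S}$. I would stress here that the naive substitution keeping $\gamma_S$ (rather than $P_S$) inside the square would introduce a spurious factor $\phi_S^2=\pm1$ and produce the wrong sign whenever $\gamma_S$ is anti-Hermitian; replacing $\gamma_S$ by $P_S$ is precisely what removes this ambiguity.

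Finally I would carry out the normalization bookkeeping that yields $2^{2n}$. To recognize $\pbra{\bra{\bm 0}U_{Q_d}P_S U_{Q_d}^\dagger\ket{\bm 0}}^2$ as $\supbra{\bm 0,\bm 0}\Ucal_{Q_d}^{\otimes 2}\supket{P_S,P_S}$, one expands $\supket{\bm 0,\bm 0}$ in the PTM basis via Eq.~\eqref{eq: zz anonymous 6} and treats $\supket{P_S}$ as the raw Pauli operator. Each single-copy overlap $\supbra{\bm 0}\Ucal_{Q_d}\supket{P_S}$ then evaluates to $2^{-n}\bra{\bm 0}U_{Q_d}P_S U_{Q_d}^\dagger\ket{\bm 0}$, the factor $2^{-n}$ arising from the trace normalization of the inner product together with the $2^{-n}$ prefactor in $\ket{\bm 0}\bra{\bm 0}=2^{-n}\sum_{\Lambda}\prod_{i\in\Lambda}Z_i$. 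Squaring gives $2^{-2n}\pbra{\bra{\bm 0}U_{Q_d}P_S U_{Q_d}^\dagger\ket{\bm 0}}^2$, which the prefactor $2^{2n}$ exactly cancels; integrating over $U_{Q_d}$ then returns $\alpha_{S,d}$, completing the identity.

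I expect the main obstacle to be this last step rather than the phase argument: the excerpt mixes the unit-normalized vectorization $\supket{\gamma_S}=\gamma_S/\sqrt{2^n}$ used in Eq.~\eqref{eq: alpha tensor def} with the raw-Pauli PTM basis of Eq.~\eqref{eq: zz anonymous 6} and Table~\ref{table: the 16x16 matrix of T}, so the delicate part is to fix a single convention for $\supket{\cdot}$ and $\supbraket{\cdot}{}$ and track every power of $2$ consistently so that the overall exponent lands on $2^{2n}$. Once the conventions are pinned down, both ingredients — the cancellation $\bar\phi_S\phi_S=1$ and the $2^{-2n}$ normalization factor — are essentially one-line computations.
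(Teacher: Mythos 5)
Your proposal is correct and takes essentially the same route as the paper: both split $\abs{\bra{\bm 0}U_{Q_d}\gamma_S U_{Q_d}^\dagger\ket{\bm 0}}^2$ into the product of the direct and adjoint matrix elements, pass from $\gamma_S$ to its Jordan--Wigner Pauli string $P_S$, and then track the PTM normalization (the $2^{-n}$ per copy coming from the expansion of $\ket{\bm 0}\bra{\bm 0}$ in the normalized Pauli basis) to produce the prefactor $2^{2n}$. The only difference is organizational: the paper establishes the net phase $+1$ by computing $(-1)^{|S|(|S|-1)/2}(-1)^{\floor{|S|/2}}$ and checking both parities of $|S|$, whereas your cancellation $\bar\phi_S\phi_S = \abs{\phi_S}^2 = 1$ reaches the same conclusion without any case analysis, since the paper's two signs are precisely $\bar\phi_S^{\,2}$ and $\phi_S^{\,2}$.
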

\begin{proof}

Follow the definition of $\alpha_{S,d}$, we have
\begin{align}
    \alpha_{S,d} =& \int_{Q\sim O_d } \dd U_{Q_d} \abs{\bra{\bm 0} U_{Q_d} \gamma_S U_{Q_d}^\dagger \ket{\bm 0}}^2 \\
    =& \int_{Q\sim O_d } \dd U_{Q_d} \bra{\bm 0} U_{Q_d} \gamma_S U_{Q_d}^\dagger \ket{\bm 0}\bra{\bm 0} U_{Q_d} \gamma_S^\dagger U_{Q_d}^\dagger \ket{\bm 0}.
\end{align}
The subscript $d$ denotes the depth of the matchgate circuit. 
The expression can be simplified by substituting the relationship between $\gamma_S^\dagger$ and $\gamma_S$, which is 
\begin{equation}
    \gamma_S^\dagger = (-1)^{\frac{|S|(|S|-1)}{2}}\gamma_S,
\end{equation}
due to the anti-commutation relation of Majorana operators $\{\gamma_i, \gamma_j\} = 2\delta_{ij}$. 
The $\alpha_{S, d}$ can be expressed as  
\begin{align}
    \alpha_{S,d} =& (-1)^{\frac{|S|(|S|-1)}{2}}\int \dd U_{Q_d} 
    \bra{\bm 0}U_{Q_d} \gamma_SU_{Q_d}^\dagger\ket{\bm 0}^2\\
    =& (-1)^{\frac{|S|(|S|-1)}{2}} \int \dd U_{Q_d} 
    \tr(U_{Q_d} \gamma_SU_{Q_d}^\dagger\ket{\bm 0} \bra{\bm 0} )^2\\
    =& (-1)^{\frac{|S|(|S|-1)}{2}} 2^{2n} \int \dd U_{Q_d} 
    \supbra{\bm 0,\bm 0} \mathcal{U}_{Q_d}^{\otimes 2} \supket{\gamma_S,\gamma_S} \\
    =& (-1)^{\frac{|S|(|S|-1)}{2}} 2^{2n}  \supbra{\bm 0,\bm 0}  \int\dd U_{Q_d} \mathcal{U}_{Q_d}^{\otimes 2} \supket{\gamma_S,\gamma_S}.
    \label{appendix eq: alpha to net}
\end{align}

In PTM representation, the $\gamma_S$ corresponds to a Pauli basis with a phase $\pm \ii^{\floor{|S|/2}}$. Thus, the super vector $\supket{\gamma_S, \gamma_S}$ can be represented as 
\begin{equation}
    \supket{\gamma_S, \gamma_S} = (-1)^{\floor{\frac{|S|}{2}}} \supket{P_S, P_S},
    \label{appendix eq: gamma to P}
\end{equation}
 where $P_S$ is the Pauli operator that corresponds to the $\gamma_S$.

    Follows Eq.~\eqref{appendix eq: alpha to net} and Eq.~\eqref{appendix eq: gamma to P}, 
    the sign of $\alpha_{S,d}$ is determined by $(-1)^{\frac{|S|(|S|-1)}{2}} (-1)^{\floor{|S|/2}}$. 
    We show that the sign equals to $1$ by categorizing the parity of $|S|$. 
    \begin{enumerate}
        \item \textbf{$|S|$ is an odd number.} Let $|S| = 2q+1$, $ q\in \mathbb{N}$, $q\geq 0$. And then
        \begin{equation}
            (-1)^{\frac{|S|(|S|-1)}{2}} (-1)^{\floor{|S|/2}} = (-1)^{q(2q+1)+q} = 1.
        \end{equation}
        \item \textbf{$|S|$ is an even number.} Let $|S| = 2q$, $ q\in \mathbb{N}$, $q\geq 0$. And then
        \begin{equation}
            (-1)^{\frac{|S|(|S|-1)}{2}} (-1)^{\floor{|S|/2}} = (-1)^{(2q-1)q+q} = 1.
        \end{equation}
    \end{enumerate}
Thus, we conclude that the $\alpha_{S,d}$ can be expressed by 
\begin{align}
    \alpha_{S,d}=& (-1)^{\frac{|S|(|S|-1)}{2}} (-1)^{\floor{|S|/2}}  2^{2n}  \supbra{\bm 0,\bm 0}  \int\dd U_{Q_d} \mathcal{U}_{Q_d}^{\otimes 2} \supket{P_S,P_S}\\
    &= \supbra{\bm 0,\bm 0}  \int\dd U_{Q_d} \mathcal{U}_{Q_d}^{\otimes 2} \supket{P_S,P_S} .
\end{align}
\end{proof}

\section{Mapping the action of tensors to random walk}
\label{sec: Mapping the action of tensors to random walk}

Here, we give some notations to express the properties of the $\alpha_{S,d}$ tensor network. Let $\Tcal^{(o)}$ represent the odd layer in $\mathcal{C}$, and $\Tcal^{(e)}$ represent the even layer in $\mathcal{C}$, as shown in Fig.~\ref{fig:Illustration_odd_even_layers}. 
Any tensor network $\mathcal{C}$ with brickwork architecture can be expressed by alternately apply $\Tcal^{(o)}$ and $\Tcal^{(e)}$ gates, 
\begin{equation}
\mathcal{C} \left(t, b_1, b_2\right)=\Tcal^{(o)^{b_2}}\mathcal{B}^t \Tcal^{(e)^{b_1}}, 
\end{equation}
where $b_1, b_2 \in\{0,1\}$, $t+b_1+b_2$ stands for the depth, and $\mathcal{B = }\Tcal^{(e)}\Tcal^{(o)}$. 
Let $\Gamma'_n$ denote the vectorized double supervector space of $\Gamma_n$, defined as $\Gamma'_n = \text{span}\{\supket{\gamma_S, \gamma_S} \mid \gamma_S \in \Gamma_n \}$. Notably, both $\Tcal^{(o)}$ and $\Tcal^{(e)}$ gates are projection operators, satisfying $(\Tcal^{(e)})^2 = \Tcal^{(e)}$ and $(\Tcal^{(o)})^2 = \Tcal^{(o)}$.

\begin{figure}[H]
    \centering
    \includegraphics[width=0.22\linewidth]{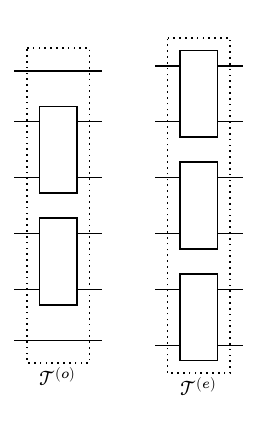}
    \caption{\centering Illustration for the odd-layer tensor $\Tcal^{(o)}$ and the even-layer tensor $\Tcal^{(e)}$. }
\label{fig:Illustration_odd_even_layers}
\end{figure}
Throughout the main text, we often use the term “representation” without giving a precise definition. In this section, we use language from representation theory to provide a clear and more formal explanation of how \(\mathcal{C}\) is represented. 
As we mentioned in the main text, the whole tensor $\mathcal{C}$ can be expressed by
\begin{equation}
    \mathcal{C} =\begin{cases}
        \mathcal{B}^t, ~~d = 2t\\
        \mathcal{B}^t \Tcal_\text{init}, ~~d = 2t+1.
    \end{cases}
\end{equation}
For the odd-depth case, studying the contraction of \(\mathcal{C}\) on \(\Gamma'_n\) is the same as analyzing how the group \(\{\mathcal{B}^t\}\) acts on the representation space \(\Tcal_\text{init}(\Gamma'_n)\). A similar approach holds for even depths with \(t \geq 1\), except the initial states differ: for even depth, the initial state is \(\Tcal_\text{init}\supket{\gamma_S, \gamma_S}\), while for odd depth, it is \( \mathcal{B} \supket{\gamma_S, \gamma_S}\). Since \( \mathcal{B} \supket{\gamma_S, \gamma_S}\) still lies in \(\Tcal_\text{init}(\Gamma'_n)\), the same method applies to the even-depth case. Therefore, whether the depth is even or odd, the contraction of \(\mathcal{C}\) on \(\Gamma'_n\) can be studied by examining the representation of the group \(\{\mathcal{B}^t\}\). We refer to this simply as “the representation of \(\mathcal{C}\)”. To simplify our analysis, we focus on a random circuit with an even number of qubits and an odd depth \(d\). We will represent the contraction of \(\mathcal{C}\) in a polynomial space, as shown in Fig.~\ref{diagram: commute diagram}.

\subsection{Reduce the calculation to polynomial space}
\begin{figure}
    \centering
    \includegraphics[width=0.6\linewidth]{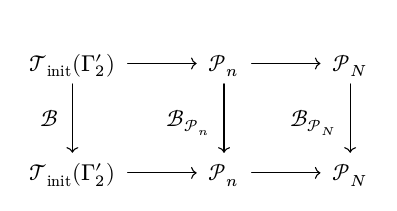}
    \caption{\centering Illustration for the isometric mapping to the polynomial space. $\Gamma_2'$ represents the vectorized double supervector space. Horizontal arrows indicate transitions between spaces, while vertical arrows correspond to the operators $\mathcal{B}$ in different representations. Ultimately, the process is reduced to the $N$-elementary polynomial within the second-degree polynomial space.}
    \label{diagram: commute diagram}
\end{figure}

We represent \(\mathcal{C} \supket{\gamma_S, \gamma_S}\) in a polynomial space to allow key operations, such as multiplication and addition, that are essential for our analysis. According to Table~\ref{table: the 16x16 matrix of T}, \(\Tcal_\text{init}(\Gamma'_2)\) is spanned by operators of the form
\begin{equation}
    \begin{cases}
    \supket{\psi_{ij}} =& \frac{1}{4}\supket{X_i \left(\displaystyle\prod_{k = i+1}^{j-1} Z_k\right)Y_j}^{\otimes 2}
    + \frac{1}{4}\supket{X_i \left(\displaystyle\prod_{k = i+1}^{j-1} Z_k\right)Y_j}^{\otimes 2} \\
    & + \frac{1}{4}\supket{X_i \left(\displaystyle\prod_{k = i+1}^{j-1} Z_k\right)Y_j}^{\otimes 2}
    + \frac{1}{4}\supket{X_i \left(\displaystyle\prod_{k = i+1}^{j-1} Z_k\right)Y_j}^{\otimes 2}, \quad i<j, \\[6pt]
    \supket{\psi_{ii}} =& \supket{Z_i, Z_i}.
\end{cases}
\end{equation}

We define an isometric map \(\phi: \Tcal_\text{init}(\Gamma'_2) \to \mathcal{P}_n\) by setting \(\phi(\supket{\psi_{ij}}) := x_i x_j\) and \(\phi(\supket{\psi_{ii}}) := x_i^2\). This map is a linear isomorphism and a homomorphism with an inverse, which means \(\Tcal_\text{init}(\Gamma'_2)\) and \(\mathcal{P}_n\) are isomorphic. We then define the action of \(\mathcal{C}\) on \(\mathcal{P}_n\) by
\begin{equation}
    \mathcal{C}_{\mathcal{P}_n}(\cdot) := \phi \,\circ\, \mathcal{C} \,\circ\, \phi^{-1} (\cdot).
\end{equation}
Consequently, the action of a \(d\)-depth matchgate circuit on \(\gamma_{\{i,j\}}\) is equivalent to the action of \(\mathcal{C}_{\mathcal{P}_n}\) on \(x_i x_j\).

By Lemma~\ref{lemma: reduce Pn to PN}, we can further simplify this representation by identifying a sub-representation \(\mathcal{P}_N \subset \mathcal{P}_n\), where \(N = \tfrac{n}{2}\). The variable \(N\) is an integer since we have assumed that $n$ is an even number. Certain hidden patterns are revealed by reducing the representation to the polynomial space $\mathcal{P}_N$. Most of the main results are been proved in the $\mathcal{P}_N$.

\begin{lemma}
    \label{lemma: reduce Pn to PN}
    The space $\mathcal{P}_N$ is a faithful representation of $\{\mathcal{B}\}$ restricts on the space $\Tcal_\text{init}(\Gamma_2')$
\end{lemma}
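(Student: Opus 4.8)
The plan is to prove Lemma~\ref{lemma: reduce Pn to PN} by exhibiting $\mathcal{P}_N$ as a genuine invariant subspace of $\mathcal{C}_{\mathcal{P}_n}$ (equivalently of $\{\mathcal{B}\}$ acting on $\Tcal_\text{init}(\Gamma_2')$) and then verifying that the restricted action is \emph{faithful}, i.e.\ that no nonzero element of the group algebra of $\{\mathcal{B}^t\}$ acts as the zero operator on $\mathcal{P}_N$. First I would make the identification of $\mathcal{P}_N$ inside $\mathcal{P}_n$ completely explicit. The brickwork structure pairs up the $n$ physical qubits into $N=\tfrac{n}{2}$ blocks, each block being the support of one column of two-qubit matchgates; I would define the coarse-grained variables $y_1,\dots,y_N$ by collapsing the monomials $x_ix_j$ according to which block $i$ and $j$ fall into (so $\phi(\supket{\psi_{ij}})$ depends only on the blocks of $i,j$), and set $\mathcal{P}_N := \mathrm{span}\{y_ry_s : 1\le r\le s\le N\}$. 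The first concrete step is then to confirm, directly from Table~\ref{table: the 16x16 matrix of T} (the explicit entries of $\Tcal$) together with the structure of $\Tcal_\text{init}$, that $\mathcal{B}$ maps block-resolved monomials to block-resolved monomials, so that $\mathcal{P}_N$ is indeed $\mathcal{B}$-invariant. This is a bookkeeping computation: one checks that the transition probabilities produced by a single $\mathcal{B}$ step depend only on the block indices, which is exactly the content of the transition tables referenced in the text.

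Next I would establish invariance more structurally rather than by brute force. The key observation is that within each two-qubit gate the tensor $\Tcal$ acts identically on the two qubits it couples (the rows/columns of Table~\ref{table: the 16x16 matrix of T} are symmetric under interchanging the two sites in the relevant $\Gamma_2$ sector), so the dynamics cannot distinguish the two qubits inside a block. Formally, I would introduce the intertwiner $\pi:\mathcal{P}_n \to \mathcal{P}_N$ that symmetrizes/averages over the two qubits in each block, show $\pi \circ \mathcal{C}_{\mathcal{P}_n} = \mathcal{C}_{\mathcal{P}_N}\circ \pi$ using the gate-level symmetry, and conclude that $\mathcal{P}_N$ (realized as the image of a corresponding section) is a sub-representation. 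This reduces the problem to a random walk on the $N$-site lattice promised in Fig.~\ref{fig: figure4}, matching the SLRW picture used later.

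Finally, faithfulness: I would argue that the restriction of $\{\mathcal{B}\}$ to $\mathcal{P}_N$ loses no information relevant to computing $\alpha_{S,d}$, i.e.\ that the map $\Tcal_\text{init}(\Gamma_2') \to \mathcal{P}_N$ induced by $\phi$ followed by the block-collapse is injective on the subspace actually reached by $\mathcal{B}^t\,\Tcal_\text{init}\supket{\gamma_S,\gamma_S}$. Since $\phi$ is already a linear isomorphism from $\Tcal_\text{init}(\Gamma_2')$ onto $\mathcal{P}_n$ (stated above), what remains is to show that the collapsed variables $\{y_r y_s\}$ remain linearly independent as operators under the $\mathcal{B}$-dynamics, so that distinct group-algebra elements act distinctly. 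The cleanest route is to show the representation matrix of a single $\mathcal{B}$ in the $\{y_ry_s\}$ basis has full rank (it is a lazy-walk transition operator, which is a Markov matrix with strictly positive diagonal), and that the orbit of the initial vector spans $\mathcal{P}_N$; together these give that the representation is faithful on the cyclic subspace generated by the physical inputs $x_ix_j$.

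The main obstacle I anticipate is the invariance/well-definedness step: one must verify that $\mathcal{B}$ genuinely respects the block structure so that the collapse $x_ix_j \mapsto y_{r}y_{s}$ commutes with the dynamics, rather than leaking amplitude into monomials that distinguish the two qubits within a block. This hinges on the precise symmetry of $\Tcal$ in Table~\ref{table: the 16x16 matrix of T} and on the parity/placement assumptions (even $n$, odd $d$, brickwork offset between $\Tcal^{(o)}$ and $\Tcal^{(e)}$); getting the boundary columns and the interleaving of odd and even layers right is where the argument is most error-prone, and I would treat the boundary blocks as a separate case to be checked explicitly against the tensor entries.
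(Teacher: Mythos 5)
Your overall architecture matches the paper's (embed a coarse-grained polynomial space $\mathcal{P}_N$ into $\mathcal{P}_n$, show invariance under $\mathcal{B}$, then argue faithfulness), but two of your key steps fail as stated. The invariance step is the serious one: the intertwiner $\pi$ that ``symmetrizes/averages over the two qubits in each block'' does not intertwine with the dynamics, because $\mathcal{B} = \Tcal^{(e)}\Tcal^{(o)}$ genuinely distinguishes the two qubits of a block. In the even layer, qubit $2i-1$ couples leftward to qubit $2i-2$ while qubit $2i$ couples rightward to qubit $2i+1$, so the within-block swap symmetry you invoke simply does not hold. This is visible in the paper's embedding $\varphi$, whose image is the actual invariant subspace: the off-diagonal generators are indeed the symmetric products $\left(x_{2i-1}+x_{2i}\right)\left(x_{2j-1}+x_{2j}\right)$, but the diagonal generator is $x_{2i-1}^2 + 4x_{2i-1}x_{2i} + x_{2i}^2$, \emph{not} $\left(x_{2i-1}+x_{2i}\right)^2$; the cross-term coefficient $4$ is forced by the explicit computation of $\Tcal^{(e)}_{\mathcal{P}_n}\Tcal^{(o)}_{\mathcal{P}_n}$ on diagonal monomials, and plain averaging cannot produce it. The paper proves invariance by induction on the number of $\mathcal{B}$ steps using this corrected embedding; with your $\pi$, the bookkeeping check you plan would fail precisely on the diagonal monomials, and repairing it amounts to re-deriving $\varphi$.

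The faithfulness step also rests on a false claim: a stochastic matrix with strictly positive diagonal need not have full rank, e.g.\ $\frac{1}{2}\bigl(\begin{smallmatrix}1&1\\1&1\end{smallmatrix}\bigr)$ is Markov with diagonal entries $\frac{1}{2}$ and has rank one. So ``lazy walk $\Rightarrow$ positive diagonal $\Rightarrow$ full rank'' is not a valid inference; you would need an actual spectral argument (the pure SLRW part has eigenvalues $\cos^{2}\!\left(\pi k/2N\right) > 0$, but you must also rule out a kernel created by the near-diagonal and boundary corrections in Table~\ref{table: transition table of teto}, which is the nontrivial content). The paper instead argues by contradiction: if some nonzero $\varphi(y)$ were annihilated by $\mathcal{B}^t$, every later iterate would vanish, contradicting that the infinite-time application of $\mathcal{B}$ to $\varphi(y)$ yields a nonzero polynomial. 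Note also that ``the orbit of the initial vector spans $\mathcal{P}_N$'' (cyclicity) is neither needed for, nor sufficient for, the statement being proved; what must be excluded is a nonzero element of $\varphi(\mathcal{P}_N)$ killed by some power of $\mathcal{B}$.
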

\begin{proof}
We begin by defining the group action on \(\mathcal{P}_N\) by using \(\mathcal{P}_n\) as a medium. Since we have defined the group action on \(\mathcal{P}_n\), we want to define the group action on \(\mathcal{P}_N\) with the help of \(\mathcal{P}_n\). With this idea, we define a linear map $\varphi$ as
\begin{equation}
    \begin{aligned}
        \varphi : \mathcal{P}_N &\to \mathcal{P}_n \\
          y_i^2 &\mapsto x_{2i-1}^2+4 x_{2i-1} x_{2i}+x_{2i}^2\\
       y_i y_j &\mapsto  \left(x_{2i-1}+x_{2i}\right)\left(x_{2j-1}+x_{2j}\right).
    \end{aligned}
\end{equation}
    
For a general element in \(\mathcal{P}_N\), written as \(\sum \xi_{ij}\,y_i y_j\), the linearity leads to
\[
\varphi\Bigl(\sum \xi_{ij}\,y_i y_j\Bigr)
\;:=\;
\sum \xi_{ij}\,\varphi(y_i y_j).
\]
    This makes \(\varphi(\mathcal{P}_N)\) a linear subspace of \(\mathcal{P}_n\). Because \(\varphi\) is injective, there is a map \(\varphi': \mathcal{P}_n \to \mathcal{P}_N\) such that \(\varphi' \circ \varphi\) is the identity on \(\mathcal{P}_N\).  
    
    The action of $\mathcal{C}$ in $\mathcal{P}_N$ is constructed by 
    \begin{equation}
        \mathcal{C}_{\mathcal{P}_N}(y_i y_j) := \varphi' \circ \mathcal{C}_{\mathcal{P}_n}\circ \varphi(y_i y_j).
        \label{eq: appendix T whole PN}
    \end{equation}
    Based on the definition of representation, the space $\mathcal{P}_N$ is a representation of $\mathcal{C}$. And Eq.~\eqref{eq: appendix T whole PN} depicts how to act a tensor on $\mathcal{P}_N$.
    
    The next step is to show that such a representation of $\mathcal{C}$ does not ``lose information". Formally, we need to prove that the representation $\mathcal{P}_n$ can be reduced to a sub-representation isometrics to $\mathcal{P}_N$. Naturally, we will consider whether the subspace $\varphi(\mathcal{P}_N) $ will constitute a sub-representation. If it is true, we can reduce the representation to its faithful sub-representation $\varphi(\mathcal{P}_N)\subseteq{P}_n $, which isometric to the polynomial space $\varphi(\mathcal{P}_N) \simeq \mathcal{P}_N$.

    By the definition of sub-representation, we need to prove
    \begin{equation}
    \label{eq: PN is a sub representation}
         \mathcal{B}^t (\varphi(\mathcal{P}_N)) \subseteq \varphi(\mathcal{P}_N), ~~\forall t .
    \end{equation}

     The proof of statement \eqref{eq: PN is a sub representation} will be carried out using mathematical induction.
    
     When $t=0$, $\varphi(\mathcal{P}_N)\subseteq \varphi(\mathcal{P}_N)$ is true.
     Suppose the statement is true for $t^*$, then we have
    \begin{equation}
        (\Tcal^{(e)}_{\mathcal{P}_n}\Tcal^{(o)}_{\mathcal{P}_n})^{t^*}(\varphi(y_iy_j)) = \sum \xi_{lm} \varphi(y_ly_m).
    \end{equation}
    And then
    \begin{align}
        \mathcal{B}_{\mathcal{P}_n}  ^{t^*+1}(\varphi(y_i y_j)) =& 
        \Tcal^{(e)}_{\mathcal{P}_n}\Tcal^{(o)}_{\mathcal{P}_n} (\sum \xi_{lm} \varphi(y_ly_m)) \\
        \label{eq: zz anonymous 2}
        =& \sum \xi_{lm} \delta_{lm}\Tcal^{(e)}_{\mathcal{P}_n}\Tcal^{(o)}_{\mathcal{P}_n} ( x_{2i-1}^2+4 x_{2i-1} x_{2i}+x_{2i}^2) \\
        \label{eq: zz anonymous 3}
        & + \sum \xi_{lm} (1-\delta_{lm})\Tcal^{(e)}_{\mathcal{P}_n}\Tcal^{(o)}_{\mathcal{P}_n}\left(\left(x_{2 i-1}+x_{2 i}\right)\left(x_{2 j-1}+x_{2 j}\right)\right),
    \end{align}
    where $\mathcal{B}_{\mathcal{P}_n}$,  $\Tcal^{(e)}_{\mathcal{P}_n}$ and $\Tcal^{(o)}_{\mathcal{P}_n}$ are defined in a manner analogous to $\mathcal{C}_{\mathcal{P}_n}$. 
    We will prove Eq.~\eqref{eq: zz anonymous 2} is in the space $\varphi(\mathcal{P}_N)$, while the same result of Eq.~\eqref{eq: zz anonymous 3} can be proved by straightforward calculation
    \begin{equation}
    \begin{aligned}
        \label{eq: zz anonymous 4}
        &\Tcal^{(e)}_{\mathcal{P}_n}\Tcal^{(o)}_{\mathcal{P}_n} ( x_{2i-1}^2+4 x_{2i-1} x_{2i}+x_{2i}^2) \\
        =& \Tcal^{(e)}_{\mathcal{P}_n}(\frac{1}{6}x_{2i-2}^2 + \frac{2}{3} x_{2i-2}x_{2i-1} + x_{2i-2}x_{2i}+x_{2i-2}x_{2i+1} \\
        &+\frac{1}{6} x_{2i-1}^2 + x_{2i-1}x_{2i} + x_{2i-1}x_{2i+1} +\frac{1}{6} x_{2i}^2 + \frac{2}{3} x_{2i}x_{2i+1} + \frac{1}{6}x_{2i+1}^2  )\\
        =& \frac{1}{36}x_{2i-3}^2 + \frac{1}{9} x_{2i-3}x_{2i-2} +\frac{5}{12} x_{2i-3}x_{2i-1}+ \frac{5}{12} x_{2i-3}x_{2i}  + \frac{1}{4} x_{2i-3}x_{2i+1} + \frac{1}{4} x_{2i-3}x_{2i+2}  \\
        &+ \frac{1}{36}x_{2i-2}^2 + \frac{5}{12} x_{2i-2}x_{2i-1}+ \frac{5}{12} x_{2i-2}x_{2i}  + \frac{1}{4} x_{2i-2}x_{2i+1} + \frac{1}{4} x_{2i-2}x_{2i+2}  \\
        &+ \frac{2}{9}x_{2i-1}^2 + \frac{8}{9} x_{2i-1}x_{2i} + \frac{5}{12} x_{2i-1}x_{2i+1} + \frac{5}{12} x_{2i-1}x_{2i+2}\\
        & + \frac{2}{9}x_{2i}^2 + \frac{5}{12} x_{2i}x_{2i+1} + \frac{5}{12} x_{2i}x_{2i+2} + \frac{1}{36}x_{2i+1}^2  + \frac{1}{9} x_{2i+1}x_{2i+2}+ \frac{1}{36}x_{2i+2}^2 \\
    \end{aligned}
    \end{equation}
    The result expression is in $\varphi(\mathcal{P}_N)$ because we can find a polynomial $y$ in $\mathcal{P}_N$ such that $\varphi(y)$ equals the result expression,
    \begin{align}
        y =& \frac{1}{6} y_{i-1}^2 + \frac{5}{3} y_{i-1}y_{i} + y_{i-1}y_{i+1} + \frac{4}{3} y_{i}^2 + \frac{5}{3} y_{i-1}y_{i+1}  + \frac{1}{6} y_{i+1}^2 \\
        \varphi(y) =& \Tcal^{(e)}_{\mathcal{P}_n}\Tcal^{(o)}_{\mathcal{P}_n} ( x_{2i-1}^2+4 x_{2i-1} x_{2i}+x_{2i}^2)
    \end{align}
    Thus, we get that 
    \begin{equation}
        \mathcal{B}_{\mathcal{P}_n} ( x_{2i-1}^2+4 x_{2i-1} x_{2i}+x_{2i}^2) \subseteq \varphi(\mathcal{P}_N),
    \end{equation}
    which complete the proof of statement \ref{eq: PN is a sub representation}.

We aim to show that $\varphi(\mathcal{P}_N)$ is a faithful sub-representation. Suppose, for the sake of contradiction, that $\varphi(\mathcal{P}_N)$ is not faithful. Then, there exists a non-zero element $y \in \mathcal{P}_N$ such that:
\[
\varphi(y) \neq 0 \quad \text{and} \quad \mathcal{B}_{\mathcal{P}_n}^t (\varphi(y)) = 0 \quad \text{for some integer } t \geq 1.
\]
Consider applying the operator $\mathcal{B}_{\mathcal{P}_n}^{~t'}$ iteratively to both sides of the equation:
\[
\mathcal{B}_{\mathcal{P}_n}^{~t'} \left[ \mathcal{B}_{\mathcal{P}_n}^{~t} (\varphi(y)) \right] = \mathcal{B}_{\mathcal{P}_n}^{t + t'} (\varphi(y)) = \mathcal{B}_{\mathcal{P}_n}^{~t'} (0) = 0.
\]
Taking the limit as $t' \to \infty$, we obtain:
\[
\lim_{t' \to \infty} \mathcal{B}_{\mathcal{P}_n}^{t + t'} (\varphi(y)) = 0.
\]
However, the infinite application of $\mathcal{B}_{\mathcal{P}_n}$ to $\varphi(y)$ yields a non-zero polynomial.
This is a contradiction because the same expression cannot simultaneously be zero and non-zero. Therefore, our initial assumption that $\varphi(\mathcal{P}_N)$ is not faithful must be false. 
Hence, $\varphi(\mathcal{P}_N)$ is indeed a faithful sub-representation.

\end{proof}

\subsection{Mapping the spread of polynomials to random walk}
\label{sec: mapping the action of tensors to random walk}

\newcommand{\teto}{\mathcal{B}_{\mathcal{P}_N}}
We can calculate the output of $\teto (y_iy_j)$ by $\teto(y_i y_j) := \varphi' \circ \phi^{-1} \circ \mathcal{B}\circ \phi \circ \varphi(y_i y_j)$, and the results are shown in Table \ref{table: transition table of teto}. The definition of function $L$ is
\begin{equation}
\label{eq: lazy random walk L}
L\left(y_i\right)=
\begin{cases}
\frac{3}{4} y_1+\frac{1}{4} y_2, &i=1 \\
\frac{1}{4} y_{i-1}+\frac{1}{2} y_i+\frac{1}{4} y_{i+1}, &1<i<N \\
\frac{3}{4} y_N+\frac{1}{4} y_{N-1}, &i=N. 
\end{cases}
\end{equation}
Notice that for most cases, the subscripts $i,j$ satisfy $1<i<N-1$ and $i+1<j\leq N$, then the outcomes of $\teto$ is
\begin{equation}
    \label{eq: zz anonymous 5}
    \teto (y_iy_j) = \left(\frac{1}{4} y_{i-1} + \frac{1}{2}y_i + \frac{1}{4} y_{i-1}\right)\left( \frac{1}{4} y_{j-1} + \frac{1}{2}y_j + \frac{1}{4} y_{j-1} \right).
\end{equation}
In this case, the action of $\teto$ can be viewed as independently evolving $y_i$ and $y_j$ in a one-dimensional lattice (see Fig.~\ref{fig: lazy random walk})
\begin{equation}
\label{eq: separate evolution}
    \begin{aligned}
        y_i &\to \frac{1}{4} y_{i-1} + \frac{1}{2}y_i + \frac{1}{4} y_{i-1}\\
    y_j &\to \frac{1}{4} y_{j-1} + \frac{1}{2}y_j + \frac{1}{4} y_{j-1}.
    \end{aligned}
\end{equation}
Here, we interpreted $y_1, \cdots y_n$ as a finite one-dimensional lattice whose $i$-th site is labeled by $y_i$. The action of $\mathcal{B}_{\mathcal{P}_N}$ can be interpreted as a random walk on 2D lattice whose $(i,j)$ site is labeled by $y_iy_j$. 
We observe that the equation $\mathcal{B}_{\mathcal{P}_N}(y_iy_j) = L(y_i)L(y_j)$ holds for most sites. This observation inspires us to first analyze the evolving behavior of $L$ and then use the probability distribution of $L$ to estimate the order of $\alpha_{S,d}$.

\begin{figure}
    \centering
    \includegraphics[width=0.9\linewidth]{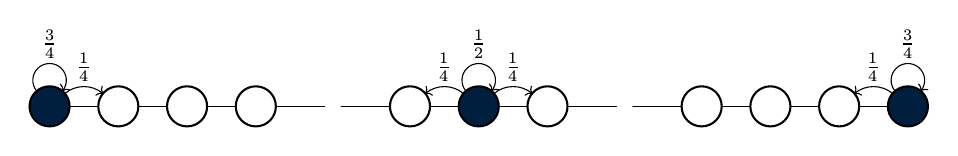}
    \caption{\centering Illustration of lazy symmetry random walk.}
    \label{fig: lazy random walk}
\end{figure}

We introduce the symmetry lazy random walk (SLRW) in polynomial space, or the one-dimensional lattice, to describe the evolution behavior of $L$. A SLRW is a type of Markov process shown in Fig.~\ref{fig: lazy random walk}. In this process, consider a point located at a site $y_i$. In the next time interval, this point has a probability of $0.25$  moving to one of its neighboring sites $y_{i-1}$ or $y_{i+1}$, and it has a probability of $0.5$ staying in place. If the origin site is on the ends of the lattice, it has a probability of $0.75$ staying in place and has a probability of $0.25$ moving around. The probability transition relation can be expressed by Eq.~\eqref{eq: lazy random walk L}. 
We can see that the separate evolution in Eq.~\eqref{eq: separate evolution} fits the form of SLRW. The transition step $\mathcal{L}_{\Gamma_2}$ in the representation space $\Gamma_2'$ can be given by 
\begin{equation}
    \mathcal{L}_{\Gamma_2} \supket{\gamma_S, \gamma_S} = (L\times L)\circ \varphi' \circ \phi (\supket{\gamma_S, \gamma_S}),
    \label{eq: appendix L gamma def}
\end{equation}
where $(L\times L) (y_iy_j) = L(y_i)L(y_j)$. 

In Eq.~\eqref{eq: zz anonymous 5}, we showed the result of how $y_iy_j$ transfers in one specific situation. Now, we will show all possible results in any situation in Table \ref{table: transition table of teto}. The table lists all the possible transition results no matter what inputs it receives. 
It was created in a similar way to the previous example in Eq.~\eqref{eq: zz anonymous 4}.

\begin{table}
    \centering
    \renewcommand\arraystretch{2}
    \begin{tabular}{|c|l|}
        \hline$y_i y_j$ & $ \teto (y_i y_j) $ \\
        \hline$i=j=1$ & $L(y_i)L(y_j) -\frac{5}{144} y_1 y_1-\frac{5}{144} y_2 y_2+\frac{5}{72} y_1 y_2$ \\
        \hline $1<i<N, j=i$ & $L(y_i)L(y_j)-\frac{5}{144} y_{i-1} y_{i-1}+\frac{1}{36} y_{i-1} y_i+\frac{1}                           {24} y_{i-1} y_{i+1}-\frac{1}{36} y_i y_i+\frac{1}{36} y_i y_{i+1}-\frac{5}{144} y_{i+1} y_{i+1}$ \\
        \hline $1 \leq i<N, j=i+1$ & $L(y_i)L(y_j)-\frac{1}{48} y_i y_i-\frac{1}{48} y_{i+1} y_{i+1}+\frac{1}{24} y_i y_{i+1}$ \\
        \hline$i=j=N$ & $L(y_i)L(y_j)-\frac{5}{144} y_N y_N-\frac{5}{144} y_{N-1} y_{N-1}+\frac{5}{72} y_{N-1} y_N$ \\
        \hline other case & $L(y_i)L(y_j)$ \\
        \hline
    \end{tabular}
    \caption{The transition result of $\teto$ with input $y_iy_j$ in different condition. Notice that $y_i y_j = y_j y_i$, the indices of the two factors $i$ and $j$ in term $y_iy_j$ can always be arranged in ascending order $i\leq j$.}
    \label{table: transition table of teto}
\end{table}

In Table \ref{table: transition table of teto}, we can see that in most cases 
\begin{equation}
    \teto(y_i y_j) = L(y_i) L(y_j)
\end{equation}
except for the cases when $|i-j|\leq 1$. 
When $|i-j|\leq 1$, the coefficients of remainder terms $\teto(y_i y_j) - L(y_i)L(y_j)$ are small. 
Denote $L^t(y_i)$ as the outcome of SLRW $t$-steps from $y_i$ according to the propagation rule in Eq.~\eqref{eq: lazy random walk L}, and the $\mathcal{L}_{i}(\mu, t)$ as the probability of stopping at $y_\mu$ after $t$-steps walking,
\begin{equation}
    \begin{aligned}
    L^t(y_i) =& \sum_\mu \mathcal{L}_{i}(\mu, t) y_\mu \\
        \mathcal{L}_{i}(\mu, t) =& \frac{1}{N} + \frac{2}{N} \sum_{k=1}^{N-1} \cos \left(\left(\mu-\frac{1}{2}\right) \frac{\pi k}{N}\right) \cos \left(\left(i-\frac{1}{2}\right) \frac{\pi k}{N}\right) \cos^{2 t} \left(\frac{\pi k}{2 N}\right).
    \end{aligned}
\end{equation}
Refs.~\cite{giuggioli2020exact} gives the analytical solution of $\mathcal{L}_{i}$, which is 
\begin{equation}
    \mathcal{L}_{i}(\mu, t) = \frac{1}{N} + \frac{2}{N} \sum_{k=1}^{N-1} \cos \left(\left(\mu-\frac{1}{2}\right) \frac{\pi k}{N}\right) \cos \left(\left(i-\frac{1}{2}\right) \frac{\pi k}{N}\right) \cos^{2 t} \left(\frac{\pi k}{2 N}\right).
\end{equation}
Thus, for the evolution that can be separated by $\teto(y_iy_j) = L(y_i) L(y_j)$, we can get the analytical solution results
\begin{align}
    \mathcal{C}_{\mathcal{P}_N} (y_i y_j) =& (\teto)^t (y_i y_j) \\
    =& L^t(y_i)L^t(y_j) + R(y_iy_j)\\
    \label{eq: zz anonymous 7}
    =&\sum_{\mu,\nu} \left(\mathscr{L}_{ij} (\mu, \nu, t) y_\mu y_\nu + \mathscr{R}_{ij} (\mu, \nu, t) y_\mu y_\nu\right),
\end{align}
where
\begin{equation}
    \mathscr{L}_{ij} (\mu, \nu, t) := \mathcal{L}_{i}(\mu, t)\mathcal{L}_{j}(\nu, t),
    \label{eq:L_ij_deviation}
\end{equation}
and $R$ stands for the reminder terms caused by the near-diagonal terms $y_i y_j$, $|i-j| \leq 1$ in Table \ref{table: transition table of teto}. 

The computation of \(\alpha_{S,d}\) can be divided into two primary components: one originating from the SLRW and the other arising from the remainder terms. Specifically, \(\alpha_{S,d}\) is determined through the tensor contraction of the operator \(\mathcal{C}\), the state \(\supket{P_S}\), and the state \(\supket{\bm{0}, \bm{0}}\). To facilitate this calculation, both \(\mathcal{C}\) and \(\supket{P_S}\) have been mapped into polynomial space. Similarly, the state \(\supket{\bm{0}, \bm{0}}\) is also represented within this polynomial framework.

Eq.~\eqref{eq: zz anonymous 6} plays a crucial role in this process by transforming \(\supket{\bm{0}, \bm{0}}\) into the Pauli basis. In this context, \(\supbra{Z_i}\) acts as a projection operator that converts a supervector into a scalar. Specifically, \(\supbra{Z_i}\) maps the supervector \(\supket{Z_i}\) to 1 while mapping all other Pauli basis vectors to 0. As previously established, the supervector \(\supket{Z_i}\) is associated with the monomial \(x_i^2\) in the polynomial space \(\mathcal{P}_n\).

Recognizing that the space of derivative operators constitutes the dual space to \(\mathcal{P}_n\), we employ these operators to represent the dual space of \(\mathcal{P}_n\). In this representation, the dual vectors are mapped as 
\begin{equation}
    \supbra{\psi_{ij}} \to \frac{\partial^2}{\partial x_i \partial x_j},~~\supbra{\psi_{ii}}\to \frac{1}{2}\frac{\partial^2}{\partial x_i^2}.
\end{equation}
This mapping ensures that the action of the dual vectors on the polynomial space corresponds to differentiation operations, which are fundamental in analyzing the system's behavior.

Using analogous methods and performing the necessary algebraic manipulations, we extend this mapping to transform \(\supbra{\bm{0}, \bm{0}}\) into the polynomial space \(\mathcal{P}_N\) for the case where \(|S| = 2\). \ gives the resulting expression \begin{equation}
    \supbra{\bm 0,\bm 0} \to \frac{1}{6} \sum_{i} \frac{\partial^2}{\partial y_i^2 }.
\end{equation}
This transformation incorporates factor \(\frac{1}{6}\), which arises from the definitions of the Pauli and polynomial bases. Subsequently, the \(\alpha_{S,2t+1}\) for odd depth \(d = 2t + 1\) can be expressed in terms of these derivative operators and the action of \(\mathcal{C}\) within the polynomial space \(\mathcal{P}_N\)
\begin{equation}
    \alpha_{\{\gamma_i\gamma_j\},2t+1} = \frac{1}{6} \sum_{\mu} \frac{\partial^2}{\partial y_\mu^2 } \mathcal{C}_{\mathcal{P}_N} (y_i y_j).
    \label{eq: zz anonymous 275}
\end{equation}
By combining Equation~\eqref{eq: zz anonymous 7} with Equation~\eqref{eq: zz anonymous 275}, we obtain:
\begin{align}
    \alpha_{\{\gamma_i\gamma_j\},2t+1} =& \frac{1}{3} \sum_\mu \mathscr{L}_{ij} (\mu, \mu, t) + \frac{1}{3} \sum_\mu \mathscr{R}_{ij} (\mu, \mu, t) \\
    =:&  \alpha_{\{\gamma_i\gamma_j\}, 2t+1}^{L} + \alpha_{\{\gamma_i\gamma_j\}, 2t+1}^{R},
    \label{eq: zz anonymous 24}
\end{align}
where \(\mathscr{L}_{ij} (\mu, \mu, t)\) represents a transition probability associated with the SLRW, and \(\mathscr{R}_{ij} (\mu, \mu, t)\) corresponds to the reminders. 

\section{Estimate the order of $\alpha_{\{\gamma_i\gamma_j\}, 2t+1}^{L}$}
\label{sec: estimate the order of alpl}
We have separated the calculation of $\alpha_{\{\gamma_i\gamma_j\},2t+1}$ into two parts in Sec.~\ref{sec: mapping the action of tensors to random walk}. In this section, we aim to estimate the order of the first part, $\alpl$.
\begin{lemma}
\label{theorem: order of alpha l}
    The following formula can estimate the $\alpl$
    \begin{equation}
        \alpl  = \frac{1}{3\sqrt{2\pi t}}  \sum_{k=-\infty}^{\infty} \left(e^{-\frac{(2Nk+a)^2}{2t}} + e^{-\frac{(2Nk+b)^2}{2t}} \right)+\mathcal{O}\left(e^{-\frac{\pi^2}{2}t}\right)
    \end{equation}
    where $a$ is defined as $|i-j|$ and $b$ is defined as $i+j-1$.
\end{lemma}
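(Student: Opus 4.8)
The plan is to start from the definition $\alpha^{L}_{\{\gamma_i\gamma_j\},2t+1}=\tfrac13\sum_{\mu}\mathscr{L}_{ij}(\mu,\mu,t)$ in Eq.~\eqref{eq: zz anonymous 24}, and to substitute $\mathscr{L}_{ij}(\mu,\mu,t)=\mathcal{L}_i(\mu,t)\,\mathcal{L}_j(\mu,t)$ from Eq.~\eqref{eq:L_ij_deviation} together with the closed spectral form of $\mathcal{L}_i(\mu,t)$ quoted from Ref.~\cite{giuggioli2020exact}. The first key step is to carry out the sum over the lattice site $\mu$ using orthogonality of the discrete cosine modes $\psi_k(\mu)=\cos\!\big((\mu-\tfrac12)\tfrac{\pi k}{N}\big)$: one has $\sum_\mu\psi_k(\mu)\psi_{k'}(\mu)=\tfrac N2\,\delta_{kk'}$ for $k,k'\ge 1$ and $\sum_\mu\psi_k(\mu)=0$ for $k\ge 1$, so the product of the two spectral expansions collapses to a single sum over $k$ in which the two factors $\cos^{2t}(\tfrac{\pi k}{2N})$ combine into $\cos^{4t}(\tfrac{\pi k}{2N})$. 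Applying the product-to-sum identity $\cos\!\big((i-\tfrac12)\theta\big)\cos\!\big((j-\tfrac12)\theta\big)=\tfrac12\big[\cos((i-j)\theta)+\cos((i+j-1)\theta)\big]$ with $\theta=\tfrac{\pi k}{N}$ then produces exactly the two frequencies $a=|i-j|$ and $b=i+j-1$, giving the exact identity
\[
\alpha^{L}_{\{\gamma_i\gamma_j\},2t+1}=\frac{1}{3N}\left(1+\sum_{k=1}^{N-1}\Big[\cos\big(\tfrac{a\pi k}{N}\big)+\cos\big(\tfrac{b\pi k}{N}\big)\Big]\cos^{4t}\big(\tfrac{\pi k}{2N}\big)\right).
\]

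Next I would symmetrize each frequency sum over a full period. Writing $g_c(k):=\cos(\tfrac{c\pi k}{N})\cos^{4t}(\tfrac{\pi k}{2N})$, this function is even and $2N$-periodic with $g_c(N)=0$ for $t\ge 1$, which yields $\tfrac12+\sum_{k=1}^{N-1}g_c(k)=\tfrac12\sum_{k=0}^{2N-1}g_c(k)$; this distributes the leading ``$1$'' as $\tfrac12+\tfrac12$ between the $a$- and $b$-sums. The final step converts each half-period sum into the image (theta) sum: I approximate $\cos^{4t}(\tfrac{\pi k}{2N})$ by the Gaussian $\exp(-\pi^2 t k^2/(2N^2))$, using $\ln\cos x=-\tfrac{x^2}{2}-\tfrac{x^4}{12}-\cdots$, extend the range to all of $\mathbb{Z}$, and apply Poisson summation. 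A direct Poisson computation gives
\[
\sum_{k\in\mathbb{Z}}\cos\big(\tfrac{c\pi k}{N}\big)\,e^{-\pi^2 t k^2/(2N^2)}=N\sqrt{\tfrac{2}{\pi t}}\sum_{m\in\mathbb{Z}}e^{-(2Nm+c)^2/(2t)},
\]
so that collecting the prefactor $\tfrac{1}{3N}\cdot\tfrac12\cdot N\sqrt{2/(\pi t)}=\tfrac{1}{3\sqrt{2\pi t}}$ and summing the $a$- and $b$-contributions reproduces the claimed formula.

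The main obstacle is the error control in this last step, namely showing that both the Gaussian replacement of $\cos^{4t}$ and the extension of the finite periodic sum to $\mathbb{Z}$ cost only $\mathcal{O}(e^{-\pi^2 t/2})$. The domination $\cos^{4t}(x)\le e^{-2tx^2}$, which follows from $\ln\cos x\le -x^2/2$, controls all tails, and the scale of the remainder is set by the weight near the turning point $k\approx N$, where $\tfrac{\pi k}{2N}\approx\tfrac\pi2$ and $e^{-2t(\pi/2)^2}=e^{-\pi^2 t/2}$. The quartic and higher terms in $\ln\cos x$ contribute only in the diffusive bulk $k=\mathcal{O}(N/\sqrt t)$, where $t\,x^4=\mathcal{O}(1/t)$, so they are subleading there. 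I would make this precise by splitting the sum into a bulk region, where the Gaussian approximation is uniform, and a boundary region near $k=N$ whose total contribution is bounded by $e^{-\pi^2 t/2}$ up to polynomial factors; the same split also bounds the tail incurred when the summation range is enlarged to all integers.

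Finally, I note that the whole argument presupposes $\mathcal{B}$ acts exactly as the separable transition $\mathcal{L}_i(\cdot)\mathcal{L}_j(\cdot)$, which is what the replacement $\mathcal{B}\to\mathcal{L}_{\Gamma_2}$ enforces and is precisely why one estimates $\alpha^{L}$ rather than $\alpha$; the deviation between them is handled separately through the remainder $\alpha^{R}$ and the bound $\alpha\ge c\,\alpha^{L}$ established in the later appendices.
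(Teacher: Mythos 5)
Your proposal is correct in outline and takes essentially the same route as the paper: your exact trigonometric identity is precisely Lemma~\ref{lemma: simplification of alpha L} (proved there by the same cosine-mode orthogonality and product-to-sum step), and your Gaussian replacement of $\cos^{4t}$ followed by Poisson summation is exactly the paper's main argument, differing only in bookkeeping (full-period symmetrization versus adding the $k=0$ term and extending the sum to infinity). Note that your own admission that the bulk quartic corrections are merely ``subleading'' (relative order $1/t$, hence additive order $t^{-3/2}$) applies equally to the paper's proof, whose claimed purely exponential remainder $\mathcal{O}\left(e^{-\frac{\pi^2}{2}t}\right)$ glosses over the same accumulated bulk error from the Gaussian approximation.
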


\begin{proof}
Recall that $N = n/2$. By Lemma~\ref{lemma: simplification of alpha L}, the expression for $\alpha_{S,2t+1}$ can be simplified to the form given in Eq.~\eqref{eq: alpha L}. Moreover, adding the term corresponding to $k=0$ does not affect the summation series,
\begin{equation}
\label{eq: zz anonymous 1387}
\begin{aligned}
3\alpl 
=-\frac{1}{N}+\frac{1}{N} \sum_{k=0}^{N-1}\left[\cos \left((i-j) \frac{k \pi}{N}\right)+\cos \left((i+j-1) \frac{k \pi}{N}\right)\right] \cos ^{4 t}\left(\frac{\pi k}{2 N}\right).
\end{aligned}
\end{equation}
Hence, the summation over $k$ spans a complete cycle, enabling us to leverage certain useful properties of trigonometric summations.

Our goal is to use the Poisson summation formula to estimate the order of $\alpha_{S,d}$. To facilitate this, it is more practical to express the formula using exponentials rather than trigonometric functions. Note that $e^{-2tx^2}$ serves as a good approximation for $\cos^{4t}(x)$,
\begin{equation}
\begin{aligned}
e^{-2 t x^2}-\cos ^{4 t}(x) & =  e^{-2 t x^2}-e^{-2 t x^2+O\left(t x^4\right)} \\
&=  e^{-2 t x^2}\left(1-e^{O\left(t x^4\right)}\right) \\
&\sim \mathcal{O}\left(t x^4 e^{-2 t x^2}\right).
\end{aligned}
\end{equation}

By substituting $\cos ^{4 t}\left(\frac{\pi k}{2 N}\right)$ with $e^{-\frac{k^2 \pi^2 t}{2N^2}}$ in Eq.~\eqref{eq: zz anonymous 1387}, we have 
\begin{align}
3\alpl& =-\frac{1}{N}+\frac{1}{N} \sum_{k=0}^{N-1} e^{-\frac{k^2 \pi^2 t}{2 N^2}}\left[\cos \left((i-j) \frac{k \pi}{N}\right)+\cos \left((i+j-1) \frac{k \pi}{N}\right)\right]+\mathcal{O}\left(e^{-\frac{\pi^2}{2}t}\right)\\
& =-\frac{1}{N}+\frac{1}{N} \sum_{k=0}^{\infty}e^{-\frac{k^2 \pi^2 t}{2 N^2}}\left[\cos \left((i-j) \frac{k \pi}{N}\right)+\cos \left((i+j-1) \frac{k \pi}{N}\right)\right]+\mathcal{O}\left(e^{-\frac{\pi^2}{2}t}\right).
\label{eq: zz anonymous 19}
\end{align}
In Eq. \eqref{eq: zz anonymous 19}, we expand the summation to infinity, and it will not introduce much of errors because 
\begin{align*}
    \sum_{k=N}^{\infty} e^{-\frac{k^2 \pi^2 t}{2 N^2}} =& e^{-\frac{\pi^2}{2}t}\sum_{k=0}^{\infty} e^{-\frac{k^2 \pi^2 t}{2 N^2}} \\
    \leq& e^{-\frac{\pi^2}{2}t}\sum_{k=0}^{\infty} e^{-\frac{k \pi^2 t}{2 N^2}} \\
    =& e^{-\frac{\pi^2}{2}t} \frac{e^{\frac{\pi^2 t}{2 N^2}}}{e^{\frac{\pi^2 t}{2 N^2}} -1 } \\
    =& \mathcal{O}\left(e^{-\frac{\pi^2}{2}t}\right)
\end{align*}

According to the Poison summation formula, 
\begin{equation}
     \sum_{k=-\infty}^{\infty} e^{-\frac{k^2 \pi^2 t}{2 N^2}}\cos \left((i-j) \frac{k \pi}{N}\right) 
    = N\sqrt{\frac{2}{\pi t}}\sum_{k=-\infty}^{\infty} e^{-\frac{(2Nk+i-j)^2}{2t}}
    \label{eq: zz 589}
\end{equation}
By substituting Eq.~\eqref{eq: zz 589} into Eq.~\eqref{eq: zz anonymous 19}, we have
\begin{align}
\alpl& = \frac{1}{3\sqrt{2\pi t}}  \sum_{k=-\infty}^{\infty} \left(e^{-\frac{(2Nk+i-j)^2}{2t}} + e^{-\frac{(2Nk+i+j-1)^2}{2t}} \right)+\mathcal{O}\left(e^{-\frac{\pi^2}{2}t}\right).
\end{align}

\end{proof}

In the following, we give the relationship between $\alpha_{S,d}^L$ with the depth of random matchgate circuits.
\begin{lemma}
    \label{corollary: 1}
    The $\alpha_{S,d}^L$ scales $\mathcal{O}\left( \frac{1}{\mathrm{poly}(n)}\right)$ if $d = \Theta\left(\max\left\{ \frac{d_{\text{int}}(S)^2}{\log(n)} , d_{\text{int}}(S)\right\}\right)$.
\end{lemma}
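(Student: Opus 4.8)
The plan is to isolate the single dominant Gaussian in the Poisson-summation formula for $\alpl$ proved in Lemma~\ref{theorem: order of alpha l}, and to show that the adaptive choice of depth makes this term polynomially large while rendering every correction negligible. Throughout I set $t = (d-1)/2$ for odd depth $d = 2t+1$ and $N = n/2$, and I use that the lattice separation $a$ in that formula satisfies $a = \Theta(d_{\text{int}}(S))$ (it equals $|\lfloor (i-1)/4\rfloor - \lfloor (j-1)/4\rfloor|$ in the original Majorana labels), together with $b = i+j-1 \ge a$, so that the $a$-Gaussian is the larger of the two contributions. Since the statement feeds into a sample-complexity bound $\mathcal{O}(1/(\alpha_{S,d}\varepsilon^2))$, the content I actually need is a lower bound certifying that $\alpl$ is polynomially, rather than exponentially, small.

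First I would derive a one-term lower bound. Every summand in Lemma~\ref{theorem: order of alpha l} has the form $e^{-(\cdot)^2/(2t)} \ge 0$, so dropping all but the $k=0$ term attached to $a$ and treating the $\mathcal{O}(e^{-\pi^2 t/2})$ remainder pessimistically gives
\begin{equation}
\alpl \;\ge\; \frac{1}{3\sqrt{2\pi t}}\,e^{-a^2/(2t)} - C\,e^{-\pi^2 t/2}
\end{equation}
for an absolute constant $C$. It then suffices to prove that the first term is $\Omega(1/\mathrm{poly}(n))$ and dominates the second.

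Next I would substitute the adaptive depth $t = \Theta(\max\cbra{a^2/\log n,\,a})$ and analyze the two regimes it encodes. If $a = \mathcal{O}(\log n)$ then $t = \Theta(a)$, whence $a^2/(2t) = \Theta(a) = \mathcal{O}(\log n)$; if $a = \omega(\log n)$ then $t = \Theta(a^2/\log n)$, whence $a^2/(2t) = \Theta(\log n)$. In both cases $a^2/(2t) = \mathcal{O}(\log n)$, so $e^{-a^2/(2t)} = \Omega(1/\mathrm{poly}(n))$, and since $t = \mathcal{O}(\mathrm{poly}(n))$ the prefactor obeys $1/\sqrt{t} = \Omega(1/\mathrm{poly}(n))$; their product is $\Omega(1/\mathrm{poly}(n))$. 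For the remainder I would fix the hidden constant $c$ in $d = \Theta(\cdot)$ large enough that $\pi^2 t/2$ exceeds $a^2/(2t)$ with room to spare: with $t = ca$ the exponents compare as $\pi^2 c a/2$ versus $a/(2c)$, and with $t = ca^2/\log n$ they compare as $\pi^2 c a^2/(2\log n) \ge \pi^2 c(\log n)/2$ versus $\Theta(\log n)$, so any $c$ bounded below by a fixed constant makes $C e^{-\pi^2 t/2}$ exponentially smaller than the main term. Combining the two bounds yields $\alpl = \Omega(1/\mathrm{poly}(n))$.

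The step I expect to be the main obstacle is the constant bookkeeping in the transitional regime $a = \Theta(\log n)$, where the Gaussian factor $e^{-a^2/(2t)}$, the polynomial prefactor $1/\sqrt{t}$, and the exponential remainder $e^{-\pi^2 t/2}$ are all of superficially comparable order. The argument turns on checking that the two requirements on $t$, namely $a^2/(2t) = \mathcal{O}(\log n)$ (so the main term stays polynomially large) and $\pi^2 t/2 = \omega(a^2/(2t))$ (so the remainder is dominated), are jointly satisfiable by a single choice of the constant hidden in $d = \Theta(\max\cbra{d_{\text{int}}^2/\log n,\,d_{\text{int}}})$; verifying this compatibility, rather than any individual estimate, is the crux of the proof.
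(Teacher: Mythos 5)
Your proposal is correct and follows essentially the same route as the paper's proof: both start from the Poisson-summation formula of Lemma~\ref{theorem: order of alpha l}, isolate the dominant $k=0$ Gaussian, and verify that the two regimes encoded in $d = \Theta\left(\max\left\{d_{\text{int}}(S)^2/\log n,\, d_{\text{int}}(S)\right\}\right)$ make that term $\Omega(1/\mathrm{poly}(n))$ while the $\mathcal{O}(e^{-\pi^2 t/2})$ remainder stays negligible. The only cosmetic difference is that you drop the remaining (nonnegative) Gaussians outright to obtain a lower bound, whereas the paper proves an explicit tail estimate showing the $k=0$ term dominates the series; your constant bookkeeping in the transitional regime is, if anything, more careful than the paper's.
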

\begin{proof}
    We start by proving that
    \begin{equation}
    \label{eq: appendix corollary 1 1}
        \sum_{k=q+1}^{\infty} e^{-\frac{(2Nk+i-j)^2}{2t}}  = \mathcal{O}\left(e^{-\frac{(2Nq+i-j)^2}{2t}}\right)
    \end{equation}
    for positive integer $q$:
    \begin{align}
        \frac{\sum_{k=q+1}^{\infty} e^{-\frac{(2Nk+i-j)^2}{2t}}}{e^{-\frac{(2Nq+i-j)^2}{2t}}} =& \sum_{k=q+1}^{\infty} \exp\left[\frac{(2Nq+i-j)^2}{2t}-\frac{(2Nk+i-j)^2}{2t} \right]\\
        =& \sum_{k=q+1}^{\infty} \exp\left[-\frac{2N(k-q)(N(q+k) -(i-j))}{t}\right] \\
        \leq& \sum_{k=q+1}^{\infty} \exp\left[-\frac{2N(k-q)(2Nq -(i-j))}{t}\right],
    \end{align}
    where $\sum_{k=q+1}^{\infty} \exp\left[-\frac{2N(k-q)(2Nq -(i-j))}{t}\right]$ converges to certain constant. Thus, we prove Eq.~\eqref{eq: appendix corollary 1 1}.

    Eq.~\eqref{eq: appendix corollary 1 1} indicates that the order of $\alpha_{S,d}^L$ is determined by $\frac{1}{\sqrt{2\pi t}}  e^{-\frac{(i-j)^2}{2t}}+\mathcal{O}\left(e^{-\frac{\pi^2}{2}t}\right).$
    The condition $d = 2t+1 = \Omega(d_{\text{int}})$ ensures that $\mathcal{O}\left(e^{-\frac{\pi^2}{2}t}\right) =\mathcal{O}\left(\frac{1}{\mathrm{poly}(n)}\right) $, while the condition $d = \Omega(\frac{d_{\text{int}}(S)^2}{\log(n)} )$ ensures that $e^{-\frac{(i-j)^2}{2t}} =\Omega\left(\frac{1}{\mathrm{poly}(n)}\right) $ (notice that $d_\text{int}(S) = |i-j|$ for 2-local Majorana $\gamma_S$). Thus, we conclude that $\alpha_{S,d}^L = \Omega\left( \frac{1}{\mathrm{poly}(n)}\right)$ when $d = \Theta\left(\max\left\{ \frac{d_{\text{int}}(S)^2}{\log(n)} , d_{\text{int}}(S)\right\}\right)$. 
    
\end{proof}

\begin{lemma}
\label{lemma: simplification of alpha L}
The expression of $\alpl$ can be simplified in the following form
    \begin{equation}
    \label{eq: alpha L}
        3\alpl=\frac{1}{N}+\frac{1}{N} \sum_k\left[\cos \left((i-j) \frac{k \pi}{N}\right)+\cos \left((i+j-1) \frac{k \pi}{N}\right)\right] \cos ^{4 t}\left(\frac{\pi k}{2 N}\right).
    \end{equation}
\end{lemma}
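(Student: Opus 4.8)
The plan is to reduce the entire statement to a single orthogonality identity for the shifted-cosine basis. I start from the factorization already established in Eqs.~\eqref{eq:L_ij_deviation} and \eqref{eq: zz anonymous 24}, which give $3\alpl = \sum_{\mu}\mathscr{L}_{ij}(\mu,\mu,t) = \sum_{\mu=1}^{N}\mathcal{L}_i(\mu,t)\,\mathcal{L}_j(\mu,t)$. Substituting the closed form of $\mathcal{L}_i(\mu,t)$ turns the right-hand side into a sum over $\mu$ of a product of two expressions, each a constant term $1/N$ plus a cosine series $\tfrac{2}{N}\sum_{k=1}^{N-1}c_k(\mu)\,c_k(i)\cos^{2t}(\pi k/2N)$, where I abbreviate $c_k(\mu) := \cos\!\big((\mu-\tfrac12)\tfrac{\pi k}{N}\big)$.

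The key technical ingredient — and the step I expect to carry the weight of the argument — is the orthogonality of the family $\{c_k\}_{k=0}^{N-1}$ over the sites $\mu=1,\dots,N$, namely $\sum_{\mu=1}^{N}c_k(\mu)c_l(\mu) = \tfrac{N}{2}\delta_{kl}$ for $k,l\in\{1,\dots,N-1\}$, together with the special case $\sum_{\mu=1}^{N}c_k(\mu)=0$ for $1\le k\le N-1$. I would prove both from the closed form for a sum of cosines in arithmetic progression: writing $c_k(\mu)c_l(\mu)=\tfrac12[c_{k-l}(\mu)+c_{k+l}(\mu)]$ and evaluating $\sum_{\mu=1}^{N}\cos\!\big((\mu-\tfrac12)\tfrac{\pi m}{N}\big)=\tfrac{\sin(m\pi/2)\cos(m\pi/2)}{\sin(m\pi/2N)}$, which vanishes whenever $m\not\equiv 0\pmod{2N}$ because $\sin(m\pi/2)\cos(m\pi/2)=\tfrac12\sin(m\pi)=0$ for integer $m$. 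Since $|k\pm l|$ stays strictly between $0$ and $2N$ for $k,l\in\{1,\dots,N-1\}$ with $k\neq l$, every off-diagonal contribution vanishes and only the $k=l$ diagonal survives, with weight $N/2$.

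With orthogonality in hand, expanding the product $\mathcal{L}_i\mathcal{L}_j$ and summing over $\mu$ produces four terms. The constant-times-constant term gives $\sum_{\mu}1/N^2=1/N$. The two cross terms each carry a factor $\sum_{\mu}c_k(\mu)=0$ and therefore drop out. The cosine-times-cosine term collapses via $\sum_{\mu}c_k(\mu)c_l(\mu)=\tfrac{N}{2}\delta_{kl}$ onto the diagonal $k=l$, and the prefactors combine as $\tfrac{2}{N}\cdot\tfrac{2}{N}\cdot\tfrac{N}{2}=\tfrac{2}{N}$, merging the two $\cos^{2t}$ factors into $\cos^{4t}(\pi k/2N)$ and leaving $\frac{2}{N}\sum_{k=1}^{N-1}c_k(i)c_k(j)\cos^{4t}(\pi k/2N)$.

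The final step is purely trigonometric: apply $2c_k(i)c_k(j)=\cos\!\big((i-j)\tfrac{\pi k}{N}\big)+\cos\!\big((i+j-1)\tfrac{\pi k}{N}\big)$, using $(i-\tfrac12)-(j-\tfrac12)=i-j$ and $(i-\tfrac12)+(j-\tfrac12)=i+j-1$. This converts the surviving term into $\frac{1}{N}\sum_{k=1}^{N-1}\big[\cos((i-j)\tfrac{\pi k}{N})+\cos((i+j-1)\tfrac{\pi k}{N})\big]\cos^{4t}(\pi k/2N)$, and adding back the $1/N$ from the constant term yields exactly Eq.~\eqref{eq: alpha L}. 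The only genuine care required is bookkeeping the normalization constants and confirming that the index ranges never let $k\pm l$ reach a nonzero multiple of $2N$; beyond that no deeper obstruction arises.
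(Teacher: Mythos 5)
Your proposal is correct and follows essentially the same route as the paper's proof: expand the product $\sum_\mu \mathcal{L}_i(\mu,t)\mathcal{L}_j(\mu,t)$, kill the cross terms via $\sum_{\mu=1}^N \cos\bigl((\mu-\tfrac12)\tfrac{\pi k}{N}\bigr)=0$, collapse the double frequency sum to the diagonal via the orthogonality $\sum_\mu c_k(\mu)c_l(\mu)=\tfrac{N}{2}\delta_{kl}$, and finish with the product-to-sum identity. The only cosmetic difference is that you handle the diagonal case $k=l$ by noting $c_0\equiv 1$, whereas the paper evaluates the same limit with L'H\^opital's rule; the content is identical.
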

\begin{proof}
By the definition of $\alpl$ and Eq. \ref{eq:L_ij_deviation}, we have
\begin{align}
    3\alpl =& \sum_\mu \left[ \frac{1}{N} + \frac{2}{N} \sum_{k=1}^{N-1} \cos \left(\left(i-\frac{1}{2}\right) \frac{\pi k}{N}\right) \cos \left(\left(\mu-\frac{1}{2}\right) \frac{\pi k}{N}\right) \cos ^{2 t} \left( \frac{\pi k}{2 N} \right)\right] \nonumber\\
     &\times\left[ \frac{1}{N} + \frac{2}{N} \sum_{l=1}^{N-1} \cos \left(\left(j-\frac{1}{2}\right) \frac{\pi l}{N}\right) \cos \left(\left(\mu-\frac{1}{2}\right) \frac{\pi l}{N}\right) \cos ^{2 t}\left( \frac{\pi l}{2 N} \right)  \right] \nonumber\\
     =& 1/N + \frac{2}{N^2} \sum_k \left[\sum_\mu \cos\left(\left(\mu-\frac{1}{2}\right) \frac{\pi k}{N}\right)  \right] 
     \cos \left(\left(i-\frac{1}{2}\right) \frac{\pi k}{N}\right) \cos ^{2 t} \left( \frac{\pi k}{2 N} \right) 
     \label{eq: zz anonymous 9}
     \\
     &+ \frac{2}{N^2} \sum_l \left[\sum_\mu \cos\left(\left(\mu-\frac{1}{2}\right) \frac{\pi l}{N}\right)  \right] 
     \cos \left(\left(i-\frac{1}{2}\right) \frac{\pi l}{N}\right) \cos ^{2 t} \left( \frac{\pi l}{2 N} \right) 
     \label{eq: zz anonymous 10}
     \\
     &+ \frac{4}{N^2} \sum_{k,l =1}^{N-1} \sum_{\mu = 1}^{N} \cos \left(\left(i-\frac{1}{2}\right) \frac{\pi k}{N}\right) \cos \left(\left(\mu-\frac{1}{2}\right) \frac{\pi k}{N}\right) \cos ^{2 t} \left( \frac{\pi k}{2 N} \right)  \nonumber\\
     &\times \cos \left(\left(j-\frac{1}{2}\right) \frac{\pi l}{N}\right) \cos \left(\left(\mu-\frac{1}{2}\right) \frac{\pi l}{N}\right) \cos ^{2 t}\left( \frac{\pi l}{2 N} \right)  
     \label{eq: calculation of alpha L}
\end{align}

Notice that the summation of cosine function is zero 
\begin{equation}
\begin{aligned}
\sum_{\mu=1}^{N} \cos\left(\left(\mu-\frac{1}{2}\right) \frac{\pi k}{N}\right)& =-\frac{1}{2} \cos \left(\frac{1}{2} \pi(2 k+1)\right) \csc \left(\frac{\pi k}{2 N}\right) \\
& =\sin (k \pi) \csc \left(\frac{\pi k}{2 N}\right) \\
& =0.
\end{aligned}
\end{equation}
Substitute this identity into Eq.~\eqref{eq: calculation of alpha L}, we can eliminate the terms in Eqs. \eqref{eq: zz anonymous 9} and \eqref{eq: zz anonymous 10}. 

Also, notice that
\begin{equation}
\label{eq: zz anonymous 12}
\begin{aligned}
& \sum_{\mu=1}^{N} \cos \left(\frac{\pi\left(\mu-\frac{1}{2}\right) i}{N}\right) \cos \left(\frac{\pi\left(\mu-\frac{1}{2}\right) j}{N}\right) \\
= & \frac{1}{2} \sum_{\mu=1}^{N} \cos \left(\frac{\pi\left(\mu-\frac{1}{2}\right) i}{N}-\frac{\pi\left(\mu-\frac{1}{2}\right) j}{N}\right)+\cos \left(\frac{\pi\left(\mu-\frac{1}{2}\right) i}{N}+\frac{\pi\left(\mu-\frac{1}{2}\right) j}{N}\right) \\
= & \frac{1}{2} \sum_{\mu=1}^{N} \cos \left(\frac{\pi(2 \mu-1)(i-j)}{2 N}\right)+\cos \left(\frac{\pi(2 \mu-1)(i+j)}{2 N}\right) \\
= & \frac{1}{4}\left(\sin (\pi(i+j)) \csc \left(\frac{\pi(i+j)}{2 N}\right)-\sin (\pi(j-i)) \csc \left(\frac{\pi(i-j)}{2 N}\right)\right).
\end{aligned}
\end{equation}
This result gets value $0$ when $\frac{\pi(i+j)}{2 N}\neq a \pi$ or $\frac{\pi(i-j)}{2 N}\neq b \pi$ for some integer $a$ and $b$, because $\sin(\pi m) = 0$. The term $ \sin(\pi m) \csc(\frac{\pi m}{2N})$ gets non-zero only when $\csc(\frac{\pi m}{2N})$ gets infinity. Then, we can write down the conditions that $i$ and $j$ satisfy
\begin{equation}
\left\{\begin{array}{l}
i+j=2 a N \text { or } |i-j|=2 b N \\
a, b \in \mathbb{Z} \\
1<i,j<N-1.
\end{array} \right.
\end{equation}
The equation shows that the result is $j=k$. We can use  L'Hôpital's rule to calculate the term
\begin{equation}
    \label{eq: zz anonymous 11}
    \lim_{x\to 0} \sin(\pi x) \csc(\frac{\pi x}{2N}) = 2N
\end{equation}
when $i$ and $j$ satisfy the condition $i=j$. Plugin Eq.~\eqref{eq: zz anonymous 11} and Eq.~\eqref{eq: zz anonymous 12} into Eq.~\eqref{eq: calculation of alpha L}, we have
\begin{equation}
    \label{eq: zz anonymous 13}
    3\alpl = \frac{1}{N} + \frac{2}{N} \sum_k \cos \left(\left(i-\frac{1}{2}\right) \frac{\pi k}{N}\right)\cos \left(\left(j-\frac{1}{2}\right) \frac{\pi k}{N}\right)\cos ^{4 t}\left( \frac{\pi k}{2 N} \right).
\end{equation}
Finally, we use trigonometric identities to expand this equation, thereby completing this proof
\begin{equation}
    3\alpl=\frac{1}{N}+\frac{1}{N} \sum_k\left[\cos \left((i-j) \frac{k \pi}{N}\right)+\cos \left((i+j-1) \frac{k \pi}{N}\right)\right] \cos ^{4 t}\left(\frac{\pi k}{2 N}\right).
\end{equation}
\end{proof}

\section{The relation between $\alpl$ and $\alpr$}
\label{appendix relation between alphaL and alpha}
Recall that we have divided the calculation of $\alpha_{\{\gamma_i\gamma_j\}, 2t+1}$ into two parts. One is the $\alpl$ and the other is $\alpr$. Lemma \ref{theorem: order of alpha l} gives the order of $\alpl$. In this section, we aim to bound the $\alpr$ by $\alpl$, so that the order of $\alpha_{\{\gamma_i\gamma_j\}, 2t+1}$ can be given by the $\alpl$.

We begin with the polynomial in Eq.~\eqref{eq: zz anonymous 7}
\begin{equation}
\label{eq: zz anonymous 21}
    \mathcal{C}_{\mathcal{P}_N} (t) (y_i y_j) 
    =\sum_{\mu,\nu} \left(\mathscr{L}_{ij} (\mu, \nu, t) y_\mu y_\nu + \mathscr{R}_{ij} (\mu, \nu, t) y_\mu y_\nu\right).
\end{equation}
Notice that the $\mathscr{R}_{ij}$ is the corresponding term of $\Delta$ (see the main text, Eq.~\eqref{eq: def del}) in the polynomial space $\mathcal{P}_N$
Then, we let the polynomial transform one time-interval step, and we get
\begin{align}
    &\mathcal{C}_{\mathcal{P}_N} (t+1) (y_i y_j) \\
    =& \teto\left(\mathcal{C}_{\mathcal{P}_N} (t) (y_i y_j)\right) \\
    =& \sum_{\mu,\nu} \left(\mathscr{L}_{ij} (\mu, \nu, t) (L(y_\mu) L(y_\nu) + R(y_\mu, y_\nu)) + \mathscr{R}_{ij} (\mu, \nu, t) (L(y_\mu) L(y_\nu) + R(y_\mu, y_\nu))\right)\\
    =& \sum_{\mu,\nu} \left(\mathscr{L}_{ij} (\mu, \nu, t) (L(y_\mu) L(y_\nu) + R(y_\mu, y_\nu)) + \mathscr{R}_{ij} (\mu, \nu, t) (L(y_\mu) L(y_\nu) + R(y_\mu, y_\nu))\right)\\
    =& \sum_{\mu,\nu} \mathscr{L}_{ij} (\mu, \nu, t+1) y_\mu y_\nu +  \sum_{\mu,\nu}\mathscr{L}_{ij} (\mu, \nu, t) R(y_\mu, y_\nu) \nonumber\\
    & + \sum_{\mu,\nu}\mathscr{R}_{ij} (\mu, \nu, t) (L(y_\mu) L(y_\nu) + R(y_\mu, y_\nu)).
    \label{eq: zz anonymous 22}
\end{align}
Deduce from Eq.~\eqref{eq: zz anonymous 21}, we have
\begin{equation}
    \label{eq: zz anonymous 23}
    \mathcal{C}_{\mathcal{P}_N} (t+1) (y_i y_j) 
    =\sum_{\mu,\nu} \left(\mathscr{L}_{ij} (\mu, \nu, t+1) y_\mu y_\nu + \mathscr{R}_{ij} (\mu, \nu, t+1) y_\mu y_\nu\right). 
\end{equation}
Compare Eq.~\eqref{eq: zz anonymous 22} and Eq.~\eqref{eq: zz anonymous 23}, we have 
\begin{align}
    &\sum_{\mu,\nu} \mathscr{R}_{ij} (\mu, \nu, t+1) y_\mu y_\nu = \sum_{\mu,\nu}\mathscr{L}_{ij} (\mu, \nu, t) R(y_\mu, y_\nu) + \sum_{\mu,\nu}\mathscr{R}_{ij} (\mu, \nu, t) (L(y_\mu) L(y_\nu) + R(y_\mu, y_\nu)) \\
    (1+ \delta_{l,k})  &\mathscr{R}_{ij} (l, k, t+1) =  \sum_{\mu,\nu}\mathscr{R}_{ij} (\mu, \nu, t) \frac{\partial^2L(y_\mu) L(y_\nu)}{\partial y_l \partial y_k} + \sum_{\mu,\nu}\left( \mathscr{L}_{ij} (\mu, \nu, t) + \mathscr{R}_{ij} (\mu, \nu, t) \right)\frac{\partial^2 R(y_\mu, y_\nu)}{\partial y_l \partial y_k}.
    \label{eq: strict recursive relation}
\end{align}
Eq.~\eqref{eq: strict recursive relation} describes the strict relationship between $\mathscr{R}_{ij}$ and $\mathscr{L}_{ij}$ in a recursive form, thereby giving the relationship between $\alpl$ and $\alpr$. However, deriving the general term formula from this recursive formula is difficult. Therefore, we hope to use some inequalities to simplify this recursive relationship and thus bound $\alpr$ by $\alpl$.

\begin{figure}[h]
    \centering
    \includegraphics[width=\linewidth]{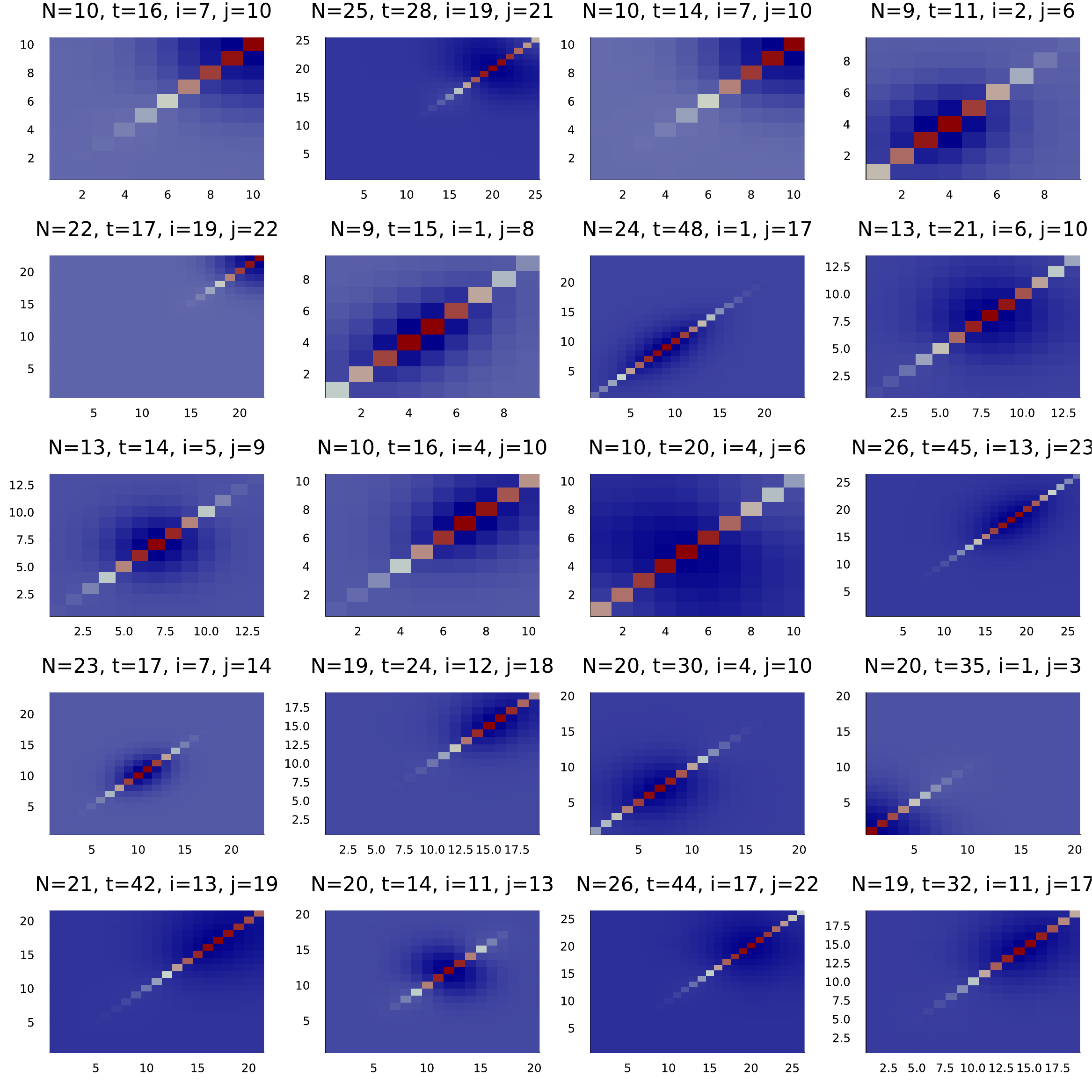}
    \caption{\centering Numerical evaluation of  $\mathscr{R}_{i, j}(\mu, \nu, t)$ by randomly choosing $N, t, i, j$. The pixel is colored red if $\mathscr{R}_{i, j}(\mu, \nu, t)$ is smaller than 0 and blue otherwise. The x-axis represents the value of $\mu$, and the y-axis represents the value of $\nu$. The results indicate that $\mathscr{R}_{i, j}(\mu, \mu, t)<0$ when $\mu\neq\nu$.}
    \label{fig: appendix assumption}
\end{figure}

We will first define some auxiliary variables, 
\begin{align*}
\zeta_k(t) & :=\frac{1}{2} \sum_{\mu=1}^{N-k}\left(\mathcal{L}_i(\mu, t) \mathcal{L}_j(\mu+k, t)+\mathcal{L}_i(\mu+k, t) \mathcal{L}_j(\mu, t)\right) \\
\beta_k(t) & :=\frac{1}{2} \sum_{\mu=1}^{N-k}\left(\mathscr{R}_{i, j}(\mu, \mu+k, t)+\mathscr{R}_{i, j}(\mu+k, \mu, t)\right) \\
a(t) & :=\binom{\zeta_0(t)}{\zeta_1(t)}, \quad b(t):=\binom{\beta_0(t)}{\beta_1(t)}.
\end{align*}
Notice that $3\alpl = \zeta_0(t)$, $3\alpr = \beta_0(t)$. Because $\mathscr{L}_{i, j}(\mu, \nu, t) + \mathscr{R}_{i, j}(\mu, \nu, t)$ represents the probability of being in site $y_iy_j$ during a random walk, it satisfies the property that the summation across all sites is $1$. Meanwhile, the summation of all $\mathscr{L}_{i, j}(\mu, \nu, t)$ is $1$. The two things deduce that
\begin{equation}
    \sum_{\mu, \nu} \mathscr{R}_{i, j}(\mu, \nu, t) =0.
\end{equation}
Especially, in numerical simulation, we observe that all $\mathscr{R}_{i, j}(\mu, \nu, t)$ are greater than $0$ except for $\mu=\nu$. The numerical results are shown in Fig.~\ref{fig: appendix assumption}.
Under this assumption, we have the following lemma:
\begin{lemma}
\label{theorem: relation between alpha L and alpha R 2}
    Assume that $\forall \mu\neq\nu$, $\mathscr{R}_{i, j}(\mu, \nu, t)\geq 0$, and $\forall \mu \neq \nu$, $\mathscr{R}_{i, j}(\mu, \mu, t)\leq 0$. $-\beta_0(t) \leq \frac{25}{72} \max_{k \geq 0}\left\{\zeta_0(t-k)\right\}$.
\end{lemma}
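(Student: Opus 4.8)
The plan is to convert the recursion Eq.~\eqref{eq: strict recursive relation} into a closed, finite-dimensional recursion for the band sums collected in the vectors $a(t)$ and $b(t)$, and then to solve that recursion as a time-accumulated sum of injections that I bound term by term. The starting point is the observation that the remainder operator $R$ read off from Table~\ref{table: transition table of teto} has support only on the near-diagonal sites $|\mu-\nu|\le 1$: acting on any monomial it produces new contributions only on bands $k=0$ and $k=1$. Consequently the inhomogeneous ``source'' injected at each time step lives entirely in bands $0$ and $1$, which is exactly why the two-component vectors $a(t)=(\zeta_0(t),\zeta_1(t))^{\mathsf T}$ and $b(t)=(\beta_0(t),\beta_1(t))^{\mathsf T}$ suffice. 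I also record the conservation identity $\sum_{\mu,\nu}\mathscr{R}_{i,j}(\mu,\nu,t)=0$, which together with the sign assumption gives $-\beta_0(t)=2\sum_{k\ge 1}\beta_k(t)$; thus controlling $-\beta_0$ is the same as controlling the total positive off-diagonal remainder mass, and the identities $3\alpl=\zeta_0(t)$, $3\alpr=\beta_0(t)$ translate the final inequality directly into the claimed statement about $\alpl$ and $\alpr$.

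First I would sum Eq.~\eqref{eq: strict recursive relation} over the diagonal $l=k$ and over the first off-diagonal $|l-k|=1$. Using the explicit SLRW weights of Eq.~\eqref{eq: lazy random walk L} for the homogeneous propagation $L\times L$ of the old remainder, and the explicit coefficients of $R$ from Table~\ref{table: transition table of teto} for the freshly injected mass, this produces a relation of the form $b(t+1)=P\,b(t)+Q\,a(t)$, where $P$ is the $2\times2$ matrix describing how near-diagonal remainder mass is redistributed between bands $0$ and $1$ by the lazy walk, and $Q$ is the $2\times2$ injection matrix whose entries are the band-$0$ and band-$1$ totals of the negative-diagonal/positive-offdiagonal pattern in Table~\ref{table: transition table of teto}. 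Here the sign assumption $\mathscr{R}_{i,j}(\mu,\mu,t)\le 0$ and $\mathscr{R}_{i,j}(\mu,\nu,t)\ge 0$ for $\mu\ne\nu$ is used to fix the signs of the band sums ($\beta_0\le 0$, $\beta_k\ge 0$ for $k\ge 1$) and to replace the exact but unwieldy dynamics of the higher bands by inequalities that do not feed negative mass back onto the diagonal.

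With the recursion in hand I would unroll it from the initial condition $\mathscr{R}(0)=0$, i.e. $b(t)=\sum_{s=0}^{t-1}P^{\,t-1-s}Q\,a(s)$, and extract the first component to write $-\beta_0(t)$ as a nonnegative combination of the injections $Q\,a(s)$. Because the lazy walk only spreads mass, $P$ acts as a sub-contraction on the relevant cone, so the operator coefficients sum to a finite geometric factor; and because $\zeta_1(s)\le\zeta_0(s)$ (the SLRW diagonal band dominates the first off-diagonal band), every injection $Q\,a(s)$ is bounded by a multiple of $\zeta_0(s)$. Bounding each $\zeta_0(s)$ uniformly by $\max_{k\ge 0}\zeta_0(t-k)$ and summing the geometric series collapses the whole accumulated sum to $\tfrac{25}{72}\max_{k\ge 0}\zeta_0(t-k)$, where the constant is obtained by totalling the magnitudes of the negative diagonal coefficients of $R$ (the $-\tfrac{1}{36}$, $-\tfrac{5}{144}$, $-\tfrac{1}{48}$ entries of Table~\ref{table: transition table of teto}) against the contraction rate of $P$.

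I expect the main obstacle to be step two, namely justifying that the recursion genuinely closes on bands $0$ and $1$. The SLRW propagation of band-$1$ mass leaks into band $2$, and in principle higher bands could return mass to the diagonal; showing that this does not worsen $-\beta_0$ relies essentially on the sign assumption (so that higher-band remainders are nonnegative and can only be discarded, not added, when bounding the negative diagonal) and on the fact that $R$ never injects beyond band $1$. Pinning the constant to exactly $\tfrac{25}{72}$, rather than to a looser bound, is the quantitative heart of the argument: it requires the precise entries of $P$ and $Q$ and the exact value of the geometric factor, and it is also the reason the statement is phrased with $\max_{k\ge 0}\zeta_0(t-k)$ in place of $\zeta_0(t)$, since the bound accumulates injections from all earlier times.
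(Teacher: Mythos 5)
Your proposal follows essentially the same skeleton as the paper's proof: reduce to the band sums $\zeta_k,\beta_k$, note that $R$ only injects onto bands $0$ and $1$, close the dynamics into a two-component affine recursion $b(t+1)=C_b\,b(t)+C_a\,a(t)$ (using the sign assumption to discard the nonnegative $\beta_2$ feedback and the edge terms), unroll it from $b(0)=0$, and sum a geometric series governed by the spectral radius $25/36$ of the propagation matrix. Up to that point your outline and the paper's argument coincide.

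However, one step of your argument is false: the claim $\zeta_1(s)\le\zeta_0(s)$, which you use to bound every injection $C_a\,a(s)$ by a multiple of $\zeta_0(s)$. The band sums are cross-correlations of two independent SLRW profiles,
\begin{equation*}
\zeta_k(t)=\tfrac12\sum_{\mu}\left(\mathcal{L}_i(\mu,t)\,\mathcal{L}_j(\mu+k,t)+\mathcal{L}_i(\mu+k,t)\,\mathcal{L}_j(\mu,t)\right),
\end{equation*}
and such a cross-correlation peaks near the shift $k\approx|i-j|$, not at $k=0$. Concretely, at $t=0$ with $j=i+1$ one has $\zeta_0(0)=0$ while $\zeta_1(0)=\tfrac12$; more generally, for well-separated $i,j$ and moderate $t$ one gets $\zeta_1(t)\gg\zeta_0(t)$. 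These early times cannot be discarded, because the unrolled sum accumulates injections from all $s<t$, and the lemma is applied precisely in the regime where $|i-j|=d_{\text{int}}$ is large and $t=\Theta(\max\{d_{\text{int}}^2/\log n,\,d_{\text{int}}\})$. The paper's fix is different: instead of comparing $\zeta_1$ and $\zeta_0$ at equal times, it uses the one-step SLRW inequality $\zeta_0(t+1)\ge\tfrac38\zeta_0(t)+\tfrac12\zeta_1(t)$, i.e.\ $\zeta_1(t)\le 2\zeta_0(t+1)-\tfrac34\zeta_0(t)$, which converts off-diagonal mass at time $t$ into diagonal mass at times $t$ and $t+1$. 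This substitution (regrouping the unrolled coefficients as $\lambda_k-\tfrac34\eta_k+2\eta_{k+1}$) is both where the constant $\tfrac{25}{72}$ actually comes from and the real reason the bound must be stated against $\max_{k\ge0}\zeta_0(t-k)$ rather than $\zeta_0(t)$ alone — not merely the accumulation over earlier times, as you suggest. With that replacement your argument goes through; without it, the key reduction step fails.
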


Lemma \ref{theorem: relation between alpha L and alpha R 2} establishes the mathematical relationship between the two components $\alpl$ and $\alpr$. By combining this theoretical relationship with the equality described in Eq.~\eqref{eq: zz anonymous 24}, we can deduce the order of magnitude of the $\alpha_{\{\gamma_i\gamma_j\}, 2t+1}$. Combining these pieces allows us to systematically determine the scale or size of $\alpha_{\{\gamma_i\gamma_j\}, 2t+1}$ based on the other defined quantities.

\begin{proof}[proof of Lemma \ref{theorem: relation between alpha L and alpha R 2}]
   From the recursive relationship in Eq.~\eqref{eq: strict recursive relation} and Table \ref{table: transition table of teto}, we can write down the recursive relationship of $\beta_k$
   \begin{align}
    \beta_0(t+1) & \geq \frac{6}{16} \beta_0(t)+\frac{8}{16} \beta_1(t) + \frac{2}{16}\beta_2(t)-\frac{14}{144}\left(\beta_0(t)+\zeta_0(t)\right)-\frac{1}{24}\left(\beta_1(t)+\zeta_1(t)\right)\nonumber\\
    & + \frac{4}{16}(\mathscr{R}_{i j}(0,0, t-1) + \mathscr{R}_{i j}(N,N, t-1)) -  \frac{1}{16}(\mathscr{R}_{i j}(0,1, t-1) + \mathscr{R}_{i j}(1,0, t-1))\nonumber\\
    &- \frac{1}{16}(\mathscr{R}_{i j}(N-1,N, t-1) + \mathscr{R}_{i j}(N,N-1, t-1))\nonumber\\
    & \geq \frac{5}{18} \beta_0(t)+\frac{11}{24} \beta_1(t)-\frac{7}{72} \zeta_0(t)-\frac{1}{12} \zeta_1(t) \\
    & \geq \frac{5}{18} \beta_0(t)+\frac{5}{24} \beta_1(t)-\frac{7}{72} \zeta_0(t)-\frac{1}{12} \zeta_1(t) 
    \end{align}
    Here, we use the property that $\beta_0(t)+\zeta_0(t)$ is the summation of properties so that it is greater than $0$ to inequality deflate the terms at the edges, like $y_0y_0$ or $ y_0y_1$. Recall that we hold the assumption that $\mathscr{R}_{i, j}(\mu, \mu, t)<0$, and $\forall \mu \neq \nu$, $\mathscr{R}_{i, j}(\mu, \mu, t)>0$, so the edges terms of $\mathscr{R}_{i j}$ can be deflated out as well. Similarly, we have
    \begin{equation}
        \beta_1(t+1)  \geq \frac{5}{9} \beta_0(t)+\frac{5}{12} \beta_1(t)+\frac{1}{18} \zeta_0(t)+\frac{1}{24} \zeta_1(t).
    \end{equation}
    
    Let $\beta_0'$ and $\beta_1'$ obtains the above recursive relation
    \begin{equation}
        \begin{cases}
            \beta_0'(t+1) =& \frac{5}{18} \beta_0(t)+\frac{5}{24} \beta_1(t)-\frac{7}{72} \zeta_0(t)-\frac{1}{12} \zeta_1(t) \\
        \beta_1'(t+1)  =& \frac{5}{9} \beta_0(t)+\frac{5}{12} \beta_1(t)+\frac{1}{18} \zeta_0(t)+\frac{1}{24} \zeta_1(t)
        \end{cases}
        \label{eq: zz anonymous 25}
    \end{equation}
    with the same first term $\beta_0'(0) = \beta_0(0)$ and $\beta_1'(0) = \beta_1(0)$. The $\beta$ and $\beta'$ satisfy the relationship
    \begin{equation}
        \beta_0'(t) \geq \beta_0(t),\quad  \beta_1'(t) \geq \beta_1(t).
    \end{equation}
    Similarly, we denote $b'$ as $b'(t):=\binom{\beta_0'(t)}{\beta_1'(t)}$,

    We rewrite the inequality groups \eqref{eq: zz anonymous 25} to the matrix form
    \begin{align}
        b'(t+1)=C_b b'(t)+C_a a(t), \quad \text{where}\\
        C_b=\left(\begin{array}{c c}
        \frac{5}{18} & \frac{5}{24} \\ 
        \frac{5}{9} & \frac{5}{12}
        \end{array}\right), \quad C_a=\left(\begin{array}{cc}
        -\frac{7}{72} & -\frac{1}{12} \\
        \frac{1}{18} & \frac{1}{24}
        \end{array}\right),
    \end{align}
    and the matrices $C_b$ and $C_a$ govern these recursive dynamics. To solve this recursion explicitly, we diagonalize the $C_b$ matrix. This allows us to express the recursion in closed form,
    \begin{align}
    \end{align}
The term $C_b^t$ can be calculated by eigenvalue decomposition $C_b=Q \Lambda Q^{-1}$, where $\Lambda=\text{diag}(\frac{25}{36},0)$. The eigenvector corresponding to $\frac{25}{36}$ is $(2, \frac{3}{2})^T$, where $T$ denotes the transpose of vector. This allows us to express $C_b^k$ and $C_b^kC_a$ in terms of eigenvalues and eigenvectors
\begin{align}
    C_b^k=&\left(\frac{5}{6}\right)^{2k}\begin{pmatrix}
2 & \frac{3}{2} \\
0 & 0
\end{pmatrix}\\
C_b^kC_a =& \frac{1}{5}\left(\frac{5}{6}\right)^{2k}\begin{pmatrix}
-\frac{1}{9} & -\frac{5}{48} \\
0 & 0
\end{pmatrix}
\end{align}
for any $k>0$. 

The variables we care about are $\zeta_0$ and $\beta_0$ because they are directly related to the $\alpl$ and $\alpr$. Thus, we mainly consider the first item of $C_b^t b'(0)$ and $C_b^kC_aa(t-k-1)$, which can be expressed in the following form
\begin{equation}
    C_b^kC_aa(t-k-1) = \lambda_k \zeta_0(t-k-1)+\eta_k \zeta_1(t-k-1).
\end{equation}
The $\lambda_k$ and the $\eta_k$ are coefficients 
\begin{equation}
    \begin{array}{ll}
\lambda_k=-\frac{1}{45}\left(\frac{5}{6}\right)^{2 k}, & \eta_k=-\frac{1}{48}\left(\frac{5}{6}\right)^{2 k}, \quad \forall k>0,  \\
\lambda_0=-\frac{7}{72}, & \eta_0=-\frac{1}{12}.
\end{array}
\end{equation}

Building on the previous relationships, we can now derive an explicit formula for $\beta'_0(t)$. From the expression for the first element of $C_b^kC_aa(t-k-1)$, we obtain
\begin{equation}
    \beta_0^{\prime}(t)=\sum_{k=0}^t\left(\lambda_k \zeta_0(t-k-1)+\eta_k \zeta_1(t-k-1)\right)
    \label{eq: zz anonymous 26}
\end{equation}
We also know from the recursive relation of $\mathcal{L}_i$ that the state variables satisfy
\begin{equation}
\zeta_0(t+1) \geq \frac{3}{8} \zeta_0(t)+\frac{1}{2} \zeta_1(t).
\end{equation}
Leveraging this inequality into Eq.~\ref{eq: zz anonymous 26} allows us to place an upper bound on $\beta'_0(t)$
\begin{align}
-\beta_0^{\prime}(t) \leq & -\sum_{k=1}^t\left(\lambda_k-\frac{3}{4} \eta_k+2 \eta_{k+1}\right) \zeta_0(t-k-1)+ \frac{19}{72} \zeta_0(t) - \frac{1}{16}\zeta_0(t-1) \\
\leq & \sum_{k=1}^t\left(\frac{5}{6}\right)^{2 k} \frac{307}{8640} \zeta_0(t-k-1)+ \frac{19}{72} \zeta_0(t) - \frac{1}{16}\zeta_0(t-1) \\
\leq & \frac{25}{72} \max_{k \geq 0}\left\{\zeta_0(t-k)\right\}
\end{align}

\end{proof}

Combine Lemma \ref{theorem: order of alpha l} and Lemma \ref{theorem: relation between alpha L and alpha R 2}, we conclude the following theorem:
\begin{theorem}
    \label{theorem 1}
    The sample complexity scales $\mathcal{O}(\frac{n}{\epsilon^2 })$  up to a log factor for 2-local Majorana strings in the average of $\rho$ when the depth satisfies 
    \begin{equation}
        d^* = \Theta\left(\max\left\{ \frac{d_{\text{int}}(S)^2}{\log(n)} , d_{\text{int}}(S)\right\}\right)
    \end{equation}
    under the following assumption: Compared to the SLRW, the true random walk in polynomial space $\mathcal{P}_N$ (The random walk of $\varphi'\circ\phi (\gamma_S)$ under transition $\mathcal{B}$) has less probability to walks on the diagonal sites $y_iy_j$ but has larger probability to walks on the off-diagonal sites at step $t$. 
\end{theorem}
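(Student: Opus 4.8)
The plan is to glue together the three results already in hand: the variance bound of Eq.~\eqref{eq: variance and alpha}, the analytic estimate of $\alpl$ from Lemma~\ref{theorem: order of alpha l} together with its consequence Lemma~\ref{corollary: 1}, and the SLRW comparison of Lemma~\ref{theorem: relation between alpha L and alpha R 2}. First I would apply Eq.~\eqref{eq: variance and alpha} and Chebyshev's inequality to collapse the whole statement into a single lower bound: it suffices to prove that at the adaptive depth $d^\ast = 2t^\ast + 1$ with $t^\ast = \Theta(\max\cbra{d_{\text{int}}(S)^2/\log n,\, d_{\text{int}}(S)})$ one has $\alpha_{\cbra{\gamma_i\gamma_j},d^\ast} = \Omega(1/n)$ up to a polylogarithmic factor, since then $\Ord{1/(\alpha_{\cbra{\gamma_i\gamma_j},d^\ast}\varepsilon^2)} = \Ord{n/\varepsilon^2}$ copies achieve error $\varepsilon$ with high probability.

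Next I would split $\alpha_{\cbra{\gamma_i\gamma_j},2t+1} = \alpl + \alpr$ via Eq.~\eqref{eq: zz anonymous 24} and bound each part at $t=t^\ast$. For the SLRW part, Lemma~\ref{corollary: 1} already gives $\alpl = \Omega(1/\mathrm{poly}(n))$ at this depth; specializing the Poisson form of Lemma~\ref{theorem: order of alpha l} to the dominant $k=0$ Gaussian $\tfrac{1}{3\sqrt{2\pi t}}\,e^{-a^2/(2t)}$ with $a = \Theta(d_{\text{int}})$ sharpens this to $\Omega(1/n)$ up to a log factor. For the remainder part, Lemma~\ref{theorem: relation between alpha L and alpha R 2} supplies, under the stated sign hypothesis on $\mathscr{R}_{ij}$, the inequality $-\beta_0(t^\ast) \le \tfrac{25}{72}\max_{0\le s\le t^\ast}\zeta_0(s)$; recalling $3\alpl = \zeta_0(t)$ and $3\alpr = \beta_0(t)$, this reads $-\alpr \le \tfrac{25}{72}\max_{0\le s\le t^\ast}\alpha^L_{\cbra{\gamma_i\gamma_j},2s+1}$, so the remainder can erode the SLRW value by at most a fixed fraction of the running maximum of $\alpha^L$.

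The crux is to replace that running maximum by $\alpl$ itself. Here I would invoke monotonicity: each term $e^{-(2Nk\pm a)^2/(2s)}/\sqrt{s}$ of the Poisson sum (and its $b = i+j-1$ counterparts) is increasing in $s$ precisely while $s$ stays below its turning point $(2Nk\pm a)^2$, the smallest of which is $a^2 = \Theta(d_{\text{int}}^2)$. Because the adaptive choice forces $t^\ast = \Theta(\max\cbra{d_{\text{int}}^2/\log n, d_{\text{int}}}) = o(d_{\text{int}}^2)$ for large $n$, every term—and hence $\alpha^L$—is increasing on $[0,t^\ast]$, so $\max_{0\le s\le t^\ast}\alpha^L = \alpl$. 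Substituting gives $\alpha_{\cbra{\gamma_i\gamma_j},d^\ast} = \alpl + \alpr \ge (1-\tfrac{25}{72})\,\alpl = \tfrac{47}{72}\,\alpl = \Omega(1/n)$ up to a log factor, which together with the first paragraph yields the claimed $\Ord{n/\varepsilon^2}$ sample complexity.

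I expect the monotonicity step to be the principal obstacle, alongside making the hypothesis rigorous. Lemma~\ref{theorem: relation between alpha L and alpha R 2} is itself conditioned on the numerically observed signs of $\mathscr{R}_{ij}(\mu,\nu,t)$ (Fig.~\ref{fig: appendix assumption}), and monotonicity of the \emph{full} Poisson sum—not merely its leading Gaussian—must be verified near the lattice boundary and for the reflected ($k\neq 0$) and $b$-type images, where the SLRW edge rules in Eq.~\eqref{eq: lazy random walk L} perturb the clean Gaussian picture. Once those boundary contributions are shown not to spoil the increase on $[0,t^\ast]$, the remaining chain of inequalities is routine.
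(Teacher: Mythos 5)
Your proposal follows essentially the same skeleton as the paper's proof: Chebyshev together with the variance bound of Eq.~\eqref{eq: variance and alpha} reduces the claim to $\alpha_{S,d^\ast}=\Omega(1/n)$ up to logarithms; the decomposition $\alpha_{S,d}=\alpha^{L}_{S,d}+\alpha^{R}_{S,d}$ of Eq.~\eqref{eq: zz anonymous 24}; Lemma~\ref{theorem: relation between alpha L and alpha R 2} to control the remainder; and the Poisson form of Lemma~\ref{theorem: order of alpha l} to lower bound the SLRW part. On one point you are more careful than the paper: the paper's chain $\alpha_{S,d}\ \ge\ \alpha^{L}_{S,d}-\tfrac{25}{72}\max_{d'<d}\{\alpha^{L}_{S,d'}\}\ \ge\ \tfrac{47}{72}\max_{d'\le d}\{\alpha^{L}_{S,d'}\}$ is valid precisely when $\alpha^{L}_{S,d}\ge\max_{d'<d}\alpha^{L}_{S,d'}$, i.e.\ when the running maximum is attained at the final depth; the paper uses this silently, whereas you isolate it as the crux and justify it through the turning points $s=(2Nk\pm a)^2$ of the Gaussian images, the smallest of which is $a^2=\Theta(d_{\text{int}}^2)$ and exceeds $t^\ast$. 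That is a genuine repair of a step the paper asserts without argument.

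Where you deviate from the paper, a gap opens. The paper does not use the Poisson estimate uniformly: it splits cases, invoking the branch-counting bound of Lemma~\ref{lemma: 7} when $a=\mathcal{O}(\log n)$ and reserving the Poisson form for $a=\omega(\log n)$. This split is not cosmetic. At depth $t^\ast=\Theta(d_{\text{int}})$ with $d_{\text{int}}=o(\log n)$, the error term $\mathcal{O}\left(e^{-\pi^2 t/2}\right)$ in Lemma~\ref{theorem: order of alpha l} is not polynomially small and is of the same exponential order as the leading Gaussian $t^{-1/2}e^{-a^2/(2t)}$; which one dominates depends on the constant hidden in $d^\ast=\Theta(\cdot)$ (roughly one needs $t^\ast> a/\pi$). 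In the extreme case of constant $d_{\text{int}}$, both terms are $\Theta(1)$, so ``specializing to the dominant $k=0$ Gaussian'' certifies nothing --- the formula cannot even fix the sign of $\alpha^{L}_{S,d^\ast}$ there, and your monotonicity claim $t^\ast=o(d_{\text{int}}^2)$ also fails at constant scales. The fix is either to pin down the depth constant and restrict the Poisson route to $d_{\text{int}}=\omega(1)$, or to do what the paper does and handle $d_{\text{int}}=\mathcal{O}(\log n)$ by Lemma~\ref{lemma: 7}, which lower bounds $\alpha_{S,d}$ directly by a combinatorial branch argument, with no SLRW comparison at all. With that regime handled separately, the rest of your chain of inequalities matches the paper's proof.
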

\begin{proof}
    As we have discussed, the sample complexity is decided by $\alpha_{S,d}$, which can be expanded by 
    \begin{equation}
        \alpha_{S,d} = \alpha_{S,d}^L + \alpha_{S,d}^R.
    \end{equation}
    The assumption about the random walk means 
    \begin{equation}
        \begin{cases}
            \mathscr{R}_{i, j}(\mu, \mu, t) \leq 0, \\
            \mathscr{R}_{i, j}(\mu, \nu, t) \geq 0, \forall \mu\neq\nu.
        \end{cases}
    \end{equation}
    Following Lemma \ref{theorem: relation between alpha L and alpha R 2}, we have 
    \begin{align}
        \alpha_{S,d} =& \alpha_{S,d}^L + \alpha_{S,d}^R\\
            \geq & \alpha_{S,d}^L - \frac{25}{72}\max_{d'<d}\{\alpha^L_{S,d'}\} \\
            \geq & \frac{47}{72}\max_{d'\leq d}\{\alpha^L_{S,d'}\}.
    \end{align}
    If $a = \mathcal{O}(\log n)$, the proof is complete by Lemma \ref{lemma: 7} with depth $d^* = \Theta(|i-j|)$. For $a = \omega(\log n)$, we have 
  $\mathcal{O}(e^{-\pi^2 d}) = \mathcal{O}(n^{-\pi^2 }) $, and $\frac{1}{\sqrt{2\pi t}}e^{-\frac{(i-j)^2}{2t}} = \Omega(n) $ up to a log factor. In this case, we show that $\alpha_{S,d}^L = \Omega(n)$ with a similar method used in Lemma \ref{corollary: 1}. 
  Combine Chebyshev inequality and inequality \eqref{eq: variance and alpha}, and we conclude that the sample complexity is $ \mathcal{O}(\frac{n}{\epsilon^2 }) $. 
\end{proof}

\section{Efficiency when the distance of set is short}
\label{appendix efficiency}
\begin{lemma}
\label{lemma: 7}
The expectation value of $\tr(\rho \gamma_S)$ can be obtained by using $\mathcal{O}(\log(n))$-depth matchgate circuit within the Fermionic classical shadows protocol when the distance of $S$ is $\mathcal{O}(\log n)$ and the cardinal number $|S| = 2k$ is a constant. 
\end{lemma}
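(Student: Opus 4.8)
The plan is to reduce the statement to a lower bound on the shadow coefficient $\alpha_{S,d}$ and then establish that bound through the random-walk picture of Eq.~\eqref{eq: T random walk}. By the variance bound Eq.~\eqref{eq: variance and alpha} together with Chebyshev's inequality, estimating $\tra{\rho\gamma_S}$ to precision $\varepsilon$ costs $\Ord{1/(\alpha_{S,d}\varepsilon^2)}$ samples; hence it suffices to exhibit a depth $d = \Ord{\log n}$ for which $\alpha_{S,d} = \Omega\pbra{1/\mathrm{poly}(n)}$. Writing $\alpha_{S,d} = \supbra{\bm 0,\bm 0}\,\mathcal{B}^{\lfloor d/2\rfloor}\,\Tcal_\text{init}\,\supket{\gamma_S,\gamma_S}$ and expanding with Eq.~\eqref{eq: T random walk}, the contraction becomes a sum of $\lfloor d/2\rfloor$-step transition probabilities, where the final contraction against $\supbra{\bm 0,\bm 0}$ retains only the \emph{diagonal} endpoints $S'$ for which $\gamma_{S'}$ is (proportional to) a product of Pauli-$Z$ operators, i.e. those $S'$ that are a disjoint union of adjacent Majorana pairs $\cbra{2j-1,2j}$.

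The core step is to lower bound $\alpha_{S,d}$ by the contribution of a single, explicitly constructed transition path ending at one such diagonal configuration $S^\ast$. Because $|S| = 2k$ with $k$ constant and $d_{\text{int}}(S) = \Ord{\log n}$, the whole support of $\gamma_S$ lies in an interval of length at most $(2k-1)\,d_{\text{int}}(S) = \Ord{\log n}$. I would pick $S^\ast$ to consist of $k$ disjoint adjacent pairs situated so that each original near-neighbor pair $(i_{2m-1},i_{2m})$ must be transported a distance $\Ord{\log n}$ to merge into its target pair. Since each layer of $\mathcal{B}$ displaces a Majorana index by $\Ord{1}$, a coordinated path carrying all $2k$ indices from $S$ to $S^\ast$ has length $\Ord{\log n}$, which fixes the adaptive depth $d = \Ord{\log n}$.

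It then remains to bound the probability of this path from below. The key observation is that $\mathcal{B} = \Tcal^{(e)}\Tcal^{(o)}$ is built from the nonnegative tensor $\Tcal$ of Table~\ref{table: the 16x16 matrix of T}, so Eq.~\eqref{eq: T random walk} is a genuine Markov chain and I can work with it directly, avoiding the SLRW approximation $\mathcal{L}_{\Gamma_2}$ and its signed corrections $\mathscr{R}$. In each layer only the $\Ord{k}=\Ord{1}$ two-qubit gates overlapping the support of the current string act nontrivially (the remaining gates contribute the identity factor $1$), and every nonzero entry of $\Tcal$ is at least $1/6$. Hence each step of the coordinated path carries probability at least $c := (1/6)^{\Ord{k}} = \Omega(1)$, and a path of length $\Ord{\log n}$ carries probability at least $c^{\Ord{\log n}} = n^{-\Ord{1}}$. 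Since the surviving overlap $\supbraket{\bm 0,\bm 0\,|\,\gamma_{S^\ast},\gamma_{S^\ast}}$ is a fixed positive constant, this yields $\alpha_{S,d} = \Omega\pbra{1/\mathrm{poly}(n)}$, and the sample-complexity conclusion follows.

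The main obstacle is to make the ``merging'' of two Majorana indices into a single adjacent pair rigorous at the level of the true chain $\mathcal{B}$: one must verify that the local move taking $\gamma_u\gamma_v$ to an adjacent pair $\gamma_{2j-1}\gamma_{2j}$ (a diagonal site of the walk) occurs with constant probability, and must treat the boundary cases as well as any collisions among the $k$ pairs that would destroy the target string. A single path suffices for the $\Omega\pbra{1/\mathrm{poly}(n)}$ bound demanded by the lemma; recovering the sharper $\Ord{n^{-k/2}}$ scaling quoted in the main text would instead require summing over the full family of diagonal endpoints and invoking the factorized approximation of Eq.~\eqref{eq: fit alpha k}, whose control is the genuinely harder, and only approximately justified, part of the analysis.
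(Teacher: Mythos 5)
Your proposal is correct and follows essentially the same route as the paper's proof: both work directly with the nonnegative transition structure of $\mathcal{B}$ (bypassing the SLRW machinery), lower bound $\alpha_{S,d}$ by the probability of explicit trajectories that shrink each Majorana pair into an adjacent (diagonal, Pauli-$Z$) configuration, and use the fact that each of the $\mathcal{O}(k)$ nontrivial gates per layer contributes a constant factor (your $(1/6)^{\mathcal{O}(k)}$ versus the paper's $1/36^{k}$ per step of $\mathcal{B}$), so that $\mathcal{O}(\log n)$ merging steps yield $\alpha_{S,d}=\Omega(1/\mathrm{poly}(n))$. The only cosmetic difference is that you follow a single coordinated path to a fixed target $S^\ast$, whereas the paper retains the whole family of branches whose pair-gap $d_{\text{near}}$ is non-increasing; both deliver the same bound, and the ``obstacle'' you flag (rigorously verifying the constant-probability merging moves and boundary/collision cases) is treated in the paper only at the same informal level, via the entries of Table~\ref{table: the 16x16 matrix of T}.
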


\begin{proof}
    Let the initial tensor be $P_S$, and apply $\mathcal{C}$ to the tensor $P_S$. Each $\mathcal{B}$ in $\mathcal{C}$ will transform the $P_S$ to the superposition of a series of Pauli tensors
    \begin{equation}
        \mathcal{B} \supket{P_{S}, P_{S}} = \sum_{|S'| = |S|} \xi_{S'} \supket{P_{S'}, P_{S'}},
    \end{equation}
    where $\xi_{S'}$ are real coefficients that satisfy $\sum \xi_{S'} = 1$, and the concrete number of $\xi_{S'}$ are shown in Table \ref{table: the 16x16 matrix of T}.
    The tensor network contraction of $\alpha_{S,d}$ contracts out the 
    outputs supervectors $ \supket{\mathcal{Z}}$. The supervectors $\supket{P_{S'}, P_{S'}}$ in the output state $\mathcal{C} \supket{P_{S}, P_{S}} = \sum_{|S'| = |S|} \xi_{S'}^{C} \supket{P_{S'}, P_{S'}}$ make non-zero contribution to the $\alpha_{S,d}$ only if it just contains Pauli $\mathbbm{1}$ or $I$ operators. Thus, at each step, we only preserve the branches $S'$ which has the trend to arrive $\supket{P_{S'}, P_{S'}}$ with $\mathbbm{1}$ or $I$. Formally, for each transition $\mathcal{B}$, we follows:
    \begin{enumerate}
        \item preserves $\supket{P_{S'}, P_{S'}}$ if $d_\text{near}(S')\leq d_\text{near}(S)$
        \item discards $\supket{P_{S'}, P_{S'}}$ (let $\xi_{S'} = 0$) if $d_\text{near}(S')> d_\text{near}(S)$
     \end{enumerate}
    where $d_\text{near}(S):= \max\{ i_{2j} - i_{2j-1} \mid j \in [n] \}$.

    Table ~\ref{table: the 16x16 matrix of T} tells us two facts:
    \begin{enumerate}
        \item There exist a $S'$ such that $d_\text{near}(S')= d_\text{near}(S)-m$ for $S$ with $d_\text{near}(S) \geq m$, $m\leq 2$.
        \item the summation of $\xi_{S'}$ for remaining branches is greater than $\frac{1}{36^{k}}$. 
    \end{enumerate}
    Thus, with $d_\text{near}(S)/2$ steps, there exists a $S'$ such that the $\xi_{S'}$ is non-zero. Since $d_\text{int}(S)\geq d_\text{near}(S)$, the summation of coefficients of the remaining branches is greater than $\frac{1}{36^{k d_\text{int}(S)/2}}$ after $d_\text{int}(S)/2$ steps transition. Notice that $|S| = k$ is a constant number and $d_\text{int}(S) = \mathcal{O}(\log(n))$, the summation number is 
    \begin{equation}
        \sum_{S''} \xi_{S''} \geq  \frac{1}{36^{k d_\text{int}(S)/2}} = \frac{1}{\mathcal{O}(n^{k/2} \polylog(n))}
    \end{equation}
    where $S''$ are the remaining branches.
\end{proof}

\section{More data for numerical simulation}

Fig.~\ref{fig: fix dint} shows the values of $\alpha_{S,d}$ and $\alpha_{S,d}'$ with the same $d_\text{int}$ and different $S$. 
The optimal depth $d^\ast$ depends on the $d_\text{int}$, which means the four different curves share the same optimal depth order. 
Although the values $\alpha_{S,d}'$ are different for different $S$, depth $d^\ast$ promises that for all sets $S$ with the same $d_\text{int}$ scales polynomially small.

\label{appendix: numerical simulation}

\begin{figure}
    \centering
    \includegraphics[width=0.8\linewidth]{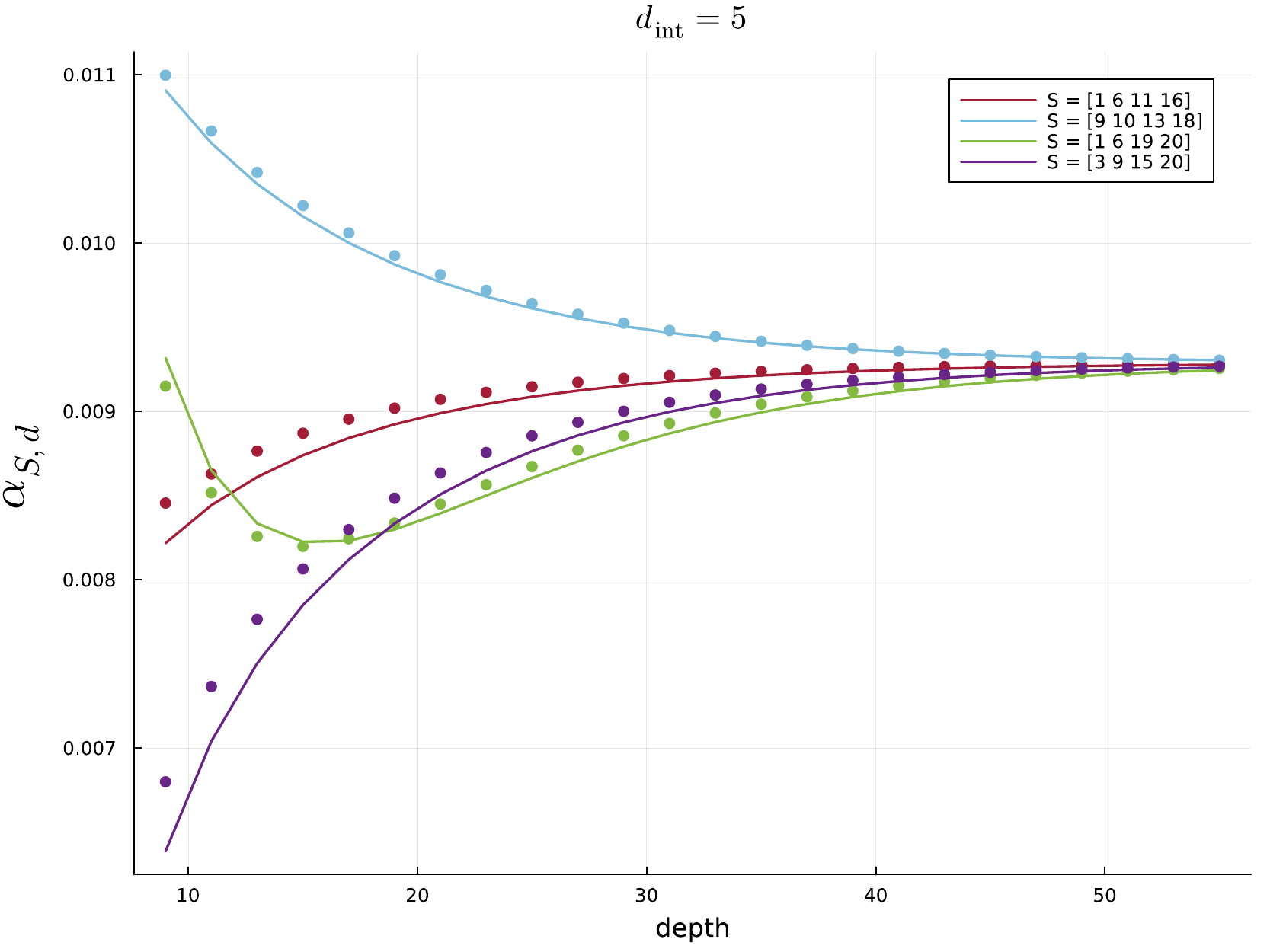}
    \caption{\centering Values of $\alpha_{S,d}$ and $\alpha_{S,d}'$ with the same $d_\text{int}$ and different $S$.}
    \label{fig: fix dint}
\end{figure}
\end{document}